\theoremstyle{thmstyleone}%
\newtheorem{theorem}{Theorem}[section]
\newtheorem{lemma}{Lemma}[section]
\newtheorem{corollary}{Corollary}[section]% meant for sectionwise numbers
\theoremstyle{thmstyletwo}%
\newtheorem{example}{Example}%
\newtheorem{remark}{Remark}%
\newcommand{\red}[1]{{{\color{black}#1}}}
\theoremstyle{thmstylethree}%
\newtheorem{definition}{Definition}%
\begin{document}

%\title[Article Title]{New Sets of $p$-ary Golay Spreading
%	Sequences for Uplink Grant-Free NOMA}
%\title[Article Title]{Designing Infinite Families of Spreading Sequences for Uplink NOMA using $p$-ary Functions Over Finite Field}

%\title[Article Title]{Infinite Families of Non-Orthogonal Spreading Sequences With Low Correlation and Low PAPR Via $p$-ary Functions Over Finite Field}

\title[Article Title]{\textcolor{black}{New sets of Non-Orthogonal Spreading Sequences With Low Correlation and Low PAPR Using Extended Boolean Functions}}

%%=============================================================%%
%% Prefix	-> \pfx{Dr}
%% GivenName	-> \fnm{Joergen W.}
%% Particle	-> \spfx{van der} -> surname prefix
%% FamilyName	-> \sur{Ploeg}
%% Suffix	-> \sfx{IV}
%% NatureName	-> \tanm{Poet Laureate} -> Title after name
%% Degrees	-> \dgr{MSc, PhD}
%% \author*[1,2]{\pfx{Dr} \fnm{Joergen W.} \spfx{van der} \sur{Ploeg} \sfx{IV} \tanm{Poet Laureate} 
%%                 \dgr{MSc, PhD}}\email{iauthor@gmail.com}
%%=============================================================%%

\author[1]{\fnm{Kaiqiang} \sur{Liu}}\email{KQLiu@my.swjtu.edu.cn}
\equalcont{These authors contributed equally to this work.}

\author*[1]{\fnm{Zhengchun} \sur{Zhou}}\email{zzc@swjtu.edu.cn}
\equalcont{These authors contributed equally to this work.}

\author[1]{\fnm{Avik Ranjan} \sur{Adhikary}}\email{avik.adhikary@swjtu.edu.cn}
\equalcont{These authors contributed equally to this work.}

\author[1]{\fnm{Rong} \sur{Luo}}\email{luorong@home.swjtu.edu.cn}
\equalcont{These authors contributed equally to this work.}

\affil*[1]{\orgdiv{Department of Mathematics}, \orgname{Southwest Jiaotong University}, \orgaddress{\street{Xipu}, \city{Chengdu}, \postcode{610031}, \state{Sichuan}, \country{China}}}

%\affil[2]{\orgdiv{Department}, \orgname{Organization}, \orgaddress{\street{Street}, \city{City}, \postcode{10587}, \state{State}, \country{Country}}}

%\affil[3]{\orgdiv{Department}, \orgname{Organization}, \orgaddress{\street{Street}, \city{City}, \postcode{610101}, \state{State}, \country{Country}}}

%%==================================%%
%% sample for unstructured abstract %%
%%==================================%%

\abstract{\textcolor{black}{Extended Boolean functions (EBFs)} are one of the most important tools in cryptography and spreading sequence design in communication systems. In this paper, we {\red{use}} \textcolor{black}{EBFs} to design \textcolor{black}{new sets} of spreading sequences for non-orthogonal multiple access (NOMA), which is an emerging technique capable of supporting massive machine-type communications (mMTC) in 5G and beyond. In this work, first $p$-ary complementary sequences are constructed using \textcolor{black}{EBFs} and then, these sequences are used to design \textcolor{black}{new sets} of non-orthogonal spreading sequence sets having very low coherence and peak to average power ratio (PAPR). The proposed spreading sequence sets are capable of supporting a large number of active devices simultaneously. In fact, for a $p$-ary spreading sequence set, we theoretically {\red{achieve}} an overloading factor of $2p$, where $p$ is an odd prime. Specifically, for $p=3$, we {\red{achieve}} an overloading factor of $6$, which {\red{cannot}} be achieved through the existing constructions till date.
	
%	
%	
%	Non-orthogonal multiple access (NOMA) is an emerging technology for massive machine-type communications (mMTC). This letter focuses on constructing spreading sequences with low coherence as well as low peak-to-average power ratio (PAPR) for uplink grant-free NOMA. In this work, based on $p$-ary complementary sequence sets, which are constructed using $p$-ary functions, we propose three new infinite families of non-orthogonal $p$-ary spreading sequences sets. The proposed spreading sequence sets have theoretically bounded low coherence. Since the construction is based on complementary sequences, the PAPR is also significantly low. In general, for alphabet size of $p^h$, where $p$ is prime and $h$ is an integer, the proposed sequence sets can accomodate up to $2p\times 100\%$ overloaded devices in grant-free NOMA. The proposed constructions can be seen as a generalization of the recent constructions of spreading sequences by Yu, which are based on Boolean fuctions.
}

\keywords{\textcolor{black}{Extended Boolean functions (EBFs)}, spreading sequence sets, Non-orthogonal multiple access (NOMA)}

%%\pacs[JEL Classification]{D8, H51}

%%\pacs[MSC Classification]{35A01, 65L10, 65L12, 65L20, 65L70}

\maketitle

\section{Introduction}\label{sec1}
\textcolor{black}{Extended Boolean functions (EBFs)} are $p$-ary functions over finite field, which play a very important role in cryptography and designing spreading sequences for communication systems \cite{Mesnager2016}. Recently, Boolean functions have been used to design spreading sequences for non-orthogonal multiple access (NOMA) \cite{Nam2021,Nam2020,LYB2022CL}. NOMA is a multiple access scheme, in which, each user utilizes the time and frequency resources at the same time, where they are categorised by their power levels. At the transmitter, NOMA uses superposition coding, \red{where each user is assigned a unique spreading sequence}, such that the successive interference cancellation (SIC) receiver can separate the users both in the uplink and in the downlink channels. Owing to its various advantages like spectral efficiency and low latency, NOMA is a preferred technique for massive machine-type communication (mMTC), which is a new service category of 5G that can support an extremely high connection density of IoT (Internet of Things) devices \cite{surveyDai2015,surveyDai2018}. In mMTC, at a specific time slot, only a small number of devices are active. This leads to the use of compressed sensing (CS) for joint channel estimation (CE) and multiuser detection (MUD) in NOMA \cite{BookEldar2012,Du2018}.

For a successful compressed sensing based detection, the coherence of the spreading sequences should be sufficiently low \cite{BookEldar2012}. {\red{Also, when}} the transmitted signals are spread over multiple subcarriers, the peak-to-average power ratio (PAPR) should be sufficiently low, to ensure that the transmitted signals of the active devices does not suffer from the PAPR problem \cite{BookLitsyn2007}. In this scenario, designing spreading sequences with low coherence as well as low PAPR is an important research topic for uplink grant-free NOMA. 

%{\red{For a given class of sequences, determining its PAPR is a very difficult problem. Fortunately, Complementary sequences, which are known by their property of autocorrelation, are proven to have an upper bound about their PAPR \cite{Paterson2000}. Golay first proposed Golay complementary pair in 1961 \cite{Golay1961}. Later in 1972, Tseng and Liu extended complementary pair to complementary sequence sets \cite{TsengLiu1972}. Over the past 60 years, a lot of work has been done on constructing complementary sequence sets. Especially, based on the outstanding work of Davis and Jedweb \cite{GDJ1999}, Paterson made a great contribution on constructing complementary sets of polyphase sequences in 2000 \cite{Paterson2000} by using the generalized Boolean function. From now on, there are many methods to construct complementary sequence sets, such as Turyn method \cite{Turyn1974}, using the seed para-unitary matrices \cite{WZL2021} and so on.}}

%In CS-based detection, the coherence of spreading sequences should be sufficiently low to carry out the mission successfully [4]. Besides, the transmitted signals of active devices spread over multiple subcarriers will suffer from the notorious peak-to-average power ratio (PAPR) problem, which may cause nonlinear distortions of the transmitted signals [cite]. A number of approaches have been proposed to deal with this power control problem in which coding schemes have been presented as possible solutions to guarantee low PAPR [cite]. 
\textcolor{black}{In general, designing sequence sets with low coherence and low PAPR is a very challenging problem. In literature, complementary sequences are explored in detail for their low PAPR properties \cite{GDJ1999,Paterson2000}. Complementary sets are sets of sequences, in which autocorrelation of each of the sequences sum up to zero at each non-zero time-shifts. Generalized Boolean functions (GBFs) \cite{GDJ1999,Paterson2000,chen2016cs,chen2018cs,Shen2020cs} and other recusrsive constructions \cite{Avik2019cs,Avik2020cs} are extensively used to construct complementary sets with various parameters.} To satisfy the additional property of low coherrence along with low PAPR, various sequences such as pseudo-random noise, quasi-orthogonal,  and random Gaussian sequences have been analysed \cite{Yang2000,Zhang2022,Du2017,Liang2018}. Recently, Yu proposed a systematic framework based on GBFs to construct non-orthogonal spreading sequences with low PAPR for reliable multiuser detection in uplink grant-free access \cite{Nam2021}. In \cite{Nam2021}, exhaustive computer search were carried out to find a set of permutations, which can be used to ensure low coherence of the spreading sequence matrix. In another work of Yu \cite{Nam2020}, spreading sequence matrices with low coherence and low PAPR are proposed, by exploiting the graph structure associated with GBFs. Tian \textit{et al.} \cite{LYB2022CL} exploited the relationship between the coherence of the spreading matrices with the Walsh transform of the GBFs, to design spreading sequence matrices with low coherence. Note that, since all the constructions are based on GBFs, the spreading sequences are constructed over alphabet size $2^h$, where $h$ is an integer.

In literature, EBFs are used to design $p$-ary complementary sequences with low PAPR \cite{WZL2021}. Motivated by the works of \cite{Nam2021,Nam2020,LYB2022CL}, we use EBFs to design sequence sets with low coherence as well as low PAPR.
%
%
%
%
%we propose three new sets of spreading sequence matrices for uplink NOMA, using EBFs. 
%%Since generating sequences by $p$-ary functions with large degree is very challenging, in this work, we restrict ourselves to quadratic $p$-ary functions over finite fields. 
%In literature, EBFs are used to design $p$-ary complementary sequences with low PAPR \cite{WZL2021}. Inspired by this, we propose new EBFs to design spreading sequences for NOMA. 
First, we represent the quadratic EBFs in the form of matrices and then, relate coherence of the spreading sequences, corresponding to the function, with the rank of those matrices. The resultant spreading sequence matrices are over alphabet of size $p^h$, where $p$ is an odd prime and $h$ is an integer. Also, the length of the sequences are of the form $p^m$ where $m$ is an integer. This largely extends the constructions reported in \cite{Nam2020} and improves flexibility of the parameters of the spreading sequence sets. The contributions of the paper can be listed as follows:
\begin{itemize}
	\item Using EBFs, three new sets of $p$-ary spreading sequence matrices are proposed.
	\item The overloading factor can be increased {\red{up to}} $2p$, in contrast to $3$ in \cite{Nam2020} and $4$ in \cite{LYB2022CL}. This significantly increases the capability to support more active devices in NOMA, when choosing large $p$.
\end{itemize}
The parameters of the spreading sequences matrices for uplink NOMA, proposed till date, through algebraic constructions are listed in Table \ref{tab1}.

The rest of the paper is organised as follows. In Section \ref{prelinimaries}, we have fixed some notations and revisited basic definitions. In Section \ref{section3}, we have recalled some recent results about designing spreading sequences using EBFs. In this section, we have also established the relationship between rank of the symplectic matrix related to the quadratic EBFs and the coherence of the spreading sequence matrix. In Section \ref{sec4}, we have proposed several frameworks to design infinite families of non-orthogonal spreading sequence matrices with optimum and near-optimum coherence and low PAPR. In this section, we have also shown that one of the proposed construction can be considered as an extended version of a previous construction reported in \cite{Nam2020}. Finally, we conclude the paper in Section \ref{sec5}.

\section{Preliminaries}\label{prelinimaries}
Before we begin, let us define some notations, which will be used throughout this paper.
\begin{itemize}
	\item \red{$\mathbf{0}_m$ and $\mathbf{1}_m$ denote $m$ number of consecutive $0$'s and $1$'s, respectively.}
	\item Let $p$ be an odd prime, $\mathbb{F}_{p}$ denotes the finite field with $p$ elements, {\red{$\mathbb{F}_{p}^{*}=\mathbb{F}_{p} \setminus\{0\}$}}.
	\item $\omega_{p}=e^{\sqrt{-1}2\pi/p}$ denotes the $p$-th root of unity.
	\item A matrix (or vector) is represented by a bold-face upper (or lower) case letter.
	\item $\langle \mathbf{a},\mathbf{b}\rangle$ denotes the inner-product of $\mathbf{a}$ and $\mathbf{b}$. 
	\item $a^*$ denotes the conjugate of $a$.
	\item $rank_p(\mathbf{A})$ denotes the rank of the matrix $\mathbf{A}$ over field $\mathbb{F}_p$.
	%	\item $\mathbf{a}^{T}$ denotes the transpose of $$ 
\end{itemize}

\subsection{Peak-to-Average Power Ratio}
In a multicarrier communication, the transmitted signal of a device is spread onto multiple subcarriers using a spreading sequence. The spreading sequences with high peak-to-average power ratio (PAPR) will cause signal distortion due to the non-linearity of power amplifiers, which will further affects the performance of the transmission system.
Suppose that $\mathbf{s} = (s_{0},\cdots,s_{M-1})^{T}\in\mathbb{C}^{M}$ is a spreading sequence which is used to transmitted via $M$ subcarriers. The peak-to-average power ratio (PAPR) of $\mathbf{s}$'s multicarrier signal
is defined by \cite{BookLitsyn2007} 
\begin{equation}
PAPR (\mathbf{s}) = \frac{1}{\sum_{i=0}^{M-1}\mid s_{i}\mid ^2} \cdot \max_{t\in [0,1)} 	 \mid \sum_{k=0}^{M-1} s_{k} e^{\sqrt{-1}2\pi k t}  \mid 
\end{equation}
Moreover, the PAPR of a sequence set $\mathbf{\Phi}$ is defined as:
\begin{equation}
	{\rm PAPR}(\mathbf{\Phi}) = \max_{\mathbf{s}\in \mathbf{\Phi}} {\rm PAPR}(\mathbf{s}).
\end{equation}

In general, for a given sequence, it is not easy to determine its PAPR. Fortunately, for complementary sequences, there {\red{exists}} an upper bound on their PAPR.

\subsection{Complementary sequences}
\begin{definition}
	Let $\mathbf{a}=\{a_j\}_{j=0}^{M-1}$, $\mathbf{b}=\{b_j\}_{j=0}^{M-1}$ be two complex-valued sequence of length $M$. The aperiodic correlation function $R_{\mathbf{a},\mathbf{b}}(\tau)$ of $\mathbf{a}$ and $\mathbf{b}$ at time delay $\tau$ is defined by
	\begin{equation}\label{eq_sum_of_autocorrelation_of_CS}
		R_{\mathbf{a},\mathbf{b}}(\tau)=\begin{cases}
			\sum_{k=0}^{M-1-\tau} a_{k}b_{k+\tau}^{*}, &0\leq \tau\leq M-1,\\
			\sum_{k=0}^{M-1+\tau} a_{k-\tau}b_{k}^{*}, &1-M\leq \tau\leq -1,\\
			0,&  \mid \tau \mid  \geq M.
		\end{cases}
	\end{equation}
	When $\mathbf{a}=\mathbf{b}$,  then it is called the aperiodic autocorrelation function, and is denoted by $R_{\mathbf{a}}(\tau)$. 
\end{definition}

\begin{definition}
	Let $\mathbf{S}$ be a set of $N$ sequences, each of length $M$, then $\mathbf{S}$ is called a complementary set (CS) with parameters $(N,M)$ if 
	\begin{equation}
		\sum_{i=0}^{N-1} R_{\mathbf{a}_i}(\tau) = 0, ~~1\leq \tau\leq M-1,
	\end{equation}
where $\mathbf{a}_i$ denotes the $i$-th sequence of $\mathbf{S}$.
\end{definition}

The following lemma, \textcolor{black}{proposed in \cite{Paterson2000}}, gives the upper bound about the PAPR of complementary sequences.
\begin{lemma}{{\rm  {\red{(\cite{Paterson2000}}})}\label{lemma_upperboundPAPR_of_CSS}}
	Let $\mathbf{S}$ be an $(N,M)$-CS. Then the PAPR of any sequence in $\mathbf{S}$ is upper bounded by $N$.
\end{lemma}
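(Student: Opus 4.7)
The plan is to bound the peak-envelope power of a single sequence $\mathbf{a}_j \in \mathbf{S}$ by the \emph{sum} of the peak-envelope powers of all $N$ sequences in the complementary set, and then show that this sum is actually a constant in $t$ thanks to the complementarity property. Concretely, I would fix a sequence $\mathbf{a}_i = (a_{i,0},\dots,a_{i,M-1})$ in $\mathbf{S}$ and write its continuous-time multicarrier waveform $s_i(t)=\sum_{k=0}^{M-1} a_{i,k} e^{\sqrt{-1}2\pi k t}$, so that $|s_i(t)|^2 = s_i(t)\,s_i(t)^\ast = \sum_{k,l} a_{i,k}\,a_{i,l}^\ast\,e^{\sqrt{-1}2\pi(k-l)t}$.

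Next I would reindex the double sum by setting $\tau = l-k$ and collecting terms with the same $\tau$. Splitting into $\tau\ge 0$ and $\tau<0$ and matching with the two branches of the definition of $R_{\mathbf{a},\mathbf{b}}(\tau)$ given just above, one obtains the identity
\begin{equation*}
|s_i(t)|^2 \;=\; \sum_{\tau=-(M-1)}^{M-1} R_{\mathbf{a}_i}(\tau)\, e^{-\sqrt{-1}2\pi\tau t}.
\end{equation*}
Summing over $i = 0,1,\dots,N-1$ and swapping the order of summation, the defining complementary property of $\mathbf{S}$ kills every nonzero-$\tau$ contribution, leaving only the $\tau=0$ term:
\begin{equation*}
\sum_{i=0}^{N-1} |s_i(t)|^2 \;=\; \sum_{i=0}^{N-1} R_{\mathbf{a}_i}(0) \;=\; \sum_{i=0}^{N-1} \sum_{k=0}^{M-1} |a_{i,k}|^2.
\end{equation*}
Because the sequences considered in this paper take values in $p$-th roots of unity, each $|a_{i,k}|^2=1$, so the right-hand side equals $NM$ and is independent of $t$.

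From here the bound is immediate: since $|s_j(t)|^2 \le \sum_{i=0}^{N-1}|s_i(t)|^2 = NM$ for every $t \in [0,1)$ and every $j$, and since $\sum_{k=0}^{M-1}|a_{j,k}|^2 = M$, we get $\mathrm{PAPR}(\mathbf{a}_j) = \max_t |s_j(t)|^2 \big/ M \le N$, as claimed. The only step that requires genuine care is the reindexing, where one must be mindful of the asymmetric summation ranges in the two cases of $R_{\mathbf{a}_i}(\tau)$ so that positive and negative shifts really do combine into the full Fourier-type expansion of $|s_i(t)|^2$; the rest of the argument is just the complementarity hypothesis plus the trivial bound that one nonnegative summand is at most the total. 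A minor modeling point worth flagging is that the bound $N$ hinges on all sequences having constant modulus (a standing assumption in the sequel), otherwise the proof naturally yields $\mathrm{PAPR}(\mathbf{a}_j) \le \bigl(\sum_i \|\mathbf{a}_i\|^2\bigr)/\|\mathbf{a}_j\|^2$.
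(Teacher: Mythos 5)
Your proof is correct and is the standard argument for this lemma; the paper itself gives no proof, stating the result as a citation to \cite{Paterson2000}, and your derivation (expanding $|s_i(t)|^2$ into the aperiodic autocorrelations, summing over the set so that complementarity leaves only the $\tau=0$ term, and then using that one nonnegative summand is at most the total) is precisely the argument of that reference. Two small points: to ``kill every nonzero-$\tau$ contribution'' you also need the conjugate symmetry $R_{\mathbf{a}}(-\tau)=R_{\mathbf{a}}(\tau)^{*}$, since the CS condition is only stated for $1\le\tau\le M-1$; and your caveat about constant modulus is apt, as the sequences in this paper are unimodular so the bound indeed specializes to $N$.
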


\subsection{Coherence of sequence set}
Let $\mathbf{a},\mathbf{b}\in\mathbb{C}^{M}$, then $\mathbf{a}$ {\red{and}} $\mathbf{b}$ {\red{are}} said orthogonal if $\langle \mathbf{a},\mathbf{b}\rangle=0$.
Let $\mathbf{S}=[\mathbf{s}_{1},\ldots,\mathbf{s}_{N}]$ be a sequence set with $N$ sequences, each of length $M$. Then $\mathbf{S}$ is called an orthogonal sequence set if the sequences in $\mathbf{S}$ are pairwise orthogonal. If $\mathbf{S}$ is non-orthogonal, the coherence of $\mathbf{S}$ is defined as follows:
\begin{equation}\label{eq_coherence}
	\mu(\mathbf{S})=\max\limits_{1\leq i \neq j\leq N}\frac{\mid \langle \mathbf{s}_{i},\mathbf{s}_{j}\rangle \mid}{\| \mathbf{s}_{i}\|_{2} \|\mathbf{s}_{j}\|_{2}},
\end{equation}
where $\|\mathbf{s}_i\|_{2} = \sqrt{(\sum_{k=1}^{M} \mid s_{ik} \mid^2)}$ denotes the $\ell_{2}$-norm of $\mathbf{s}_i$, and $s_{ik}$ is the $k$-th element of the $i$-th sequence of $\mathbf{S}$.

\begin{definition}
	Suppose that $\mathbf{S}$ is a non-orthogonal sequence set with $N$ sequences, each of length $M$, then the overloading factor of $\mathbf{S}$ is defined as $\lceil N/M \rceil$ \cite{Nam2021}.
\end{definition}

%Throughout this paper, sequence sets with low coherence are considered to have low correlation.

\subsection{\textcolor{black}{Extended Boolean functions (EBFs)}}
Extended Boolean functions (EBFs) are functions on $m$ variables, from $\mathbb{F}_{p}^{m}$ to $\mathbb{F}_{p}$, represented as
\begin{equation}
	f(\mathbf{x})=\sum_{\mathbf{b}\in\mathbb{F}_{p}^{m}} \lambda_{\mathbf{b}} (\prod_{i=1}^{m} x_{i}^{b_{i}}),
\end{equation}
where $\lambda_{\mathbf{b}}\in\mathbb{F}_{p}$, $\mathbf{b}=(b_{1},\ldots,b_{m})\in\mathbb{F}_{p}^{m}$, and $\mathbf{x}=(x_{1},x_{2},\ldots,x_{m})$ be the $p$-ary representation of the integer $x=\sum_{k=1}^{m} x_{k}p^{k-1}$  \cite{SBS2021}. 

Given $f(\mathbf{x})$, the corresponding complex-valued sequence $\mathbf{s}_{f}$ of length $p^m$ is generated as follows:
\begin{equation}
	\mathbf{s}_{f}=(\omega_{p}^{f_{0}},\omega_{p}^{f_{1}},\ldots,\omega_{p}^{f_{p^m-1}}),
\end{equation}
where $f_{i}=f(i_{1},i_{2},\ldots,i_{m})$, and $i=\sum_{k=1}^{m} i_{k}p^{k-1}$.

\begin{example}\label{example_p-ary_functions}
	Let $p=3,m=2$, $\mathbf{b}=\{(b_1,b_2)\mid b_i\in \mathbb{F}_3, i=1,2\}$, $\lambda_{(0,0)}=1,\lambda_{(1,1)}=2,\lambda_{(2,0)}=1$ and $\lambda_{\mathbf{b}}=0$ for any other $\mathbf{b}\in \mathbb{F}_{3}^{2}$. Then
	\begin{equation}
		\begin{split}
			f(\mathbf{x})&=\sum_{\mathbf{b}\in\mathbb{F}_{3}^{2}} \lambda_{\mathbf{b}} (\prod_{i=1}^{m} x_{i}^{b_{i}})\\
			&=1+2x_1x_2+x_1^2,
		\end{split}
	\end{equation}
	and the associated sequence $\mathbf{s}_f=(1,2,2,1,1,0,1,0,1)$, where the entries are power of $\omega_{3}$.
\end{example}

\section{Spreading Sequences using EBFs}\label{section3}
To design a non-orthogonal spreading sequence set with low PAPR and low coherence, we will use sequence sets having low correlation and theoretically bounded PAPR. In this paper, we have considered the complementary sequences generated by EBFs, as the sequences with low PAPR.

\subsection{Complementary sequences using EBFs}

\begin{lemma}\label{p-CSS} {\rm \cite[Theorem 1]{SBS2021}}
	For $0\leq n\leq p-1$, let $f_{n}(\mathbf{x})$ {\red{denotes}} the EBF of the following form:
	\begin{equation}\label{eq9}
		f_n(\mathbf{x}) = \sum_{i=1}^{m-1} a_{i} x_{\pi(i)} x_{\pi(i+1)} +\sum_{t=1}^{p-1} \sum_{k=1}^{m}c_{t,k} x_{k}^{t}+nx_{\pi(1)},
	\end{equation}
	where $\pi$ is a permutation over $\{1,\ldots,m\}$, $a_{i}\in\mathbb{F}_{p}^{*},c_{t,k}\in\mathbb{F}_{p}$. 
	Then the sequence set $\mathbf{S}=[\mathbf{s}_{f_{0}},\ldots,\mathbf{s}_{f_{p-1}}]$ forms a $(p,p^m)$-CS.
\end{lemma}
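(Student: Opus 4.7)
The goal is to prove that $\sum_{n=0}^{p-1} R_{\mathbf{s}_{f_n}}(\tau) = 0$ for every shift $\tau$ with $1 \le \tau \le p^m - 1$; this is precisely the condition for $\mathbf{S}$ to be a $(p, p^m)$-CS. After interchanging summations, it suffices to show
$$\sum_{k=0}^{p^m-1-\tau} \sum_{n=0}^{p-1} \omega_p^{f_n(\mathbf{k}) - f_n(\mathbf{k}+\tau)} = 0.$$

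The plan is a two-step pairing argument in the spirit of \cite{GDJ1999,Paterson2000}, adapted to the $p$-ary setting. First, observe that the parameter $n$ enters $f_n$ only through the linear monomial $n x_{\pi(1)}$, so the inner sum factors as
$$\omega_p^{f_0(\mathbf{k}) - f_0(\mathbf{k}+\tau)} \cdot \sum_{n=0}^{p-1} \omega_p^{n\left(k_{\pi(1)} - (k+\tau)_{\pi(1)}\right)},$$
and the geometric factor vanishes unless the $\pi(1)$-th $p$-ary digit of $k$ and of $k+\tau$ coincide. This cleanly kills every $k$ whose $\pi(1)$-digit is altered by the shift.

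For the remaining indices, where $k_{\pi(1)} = (k+\tau)_{\pi(1)}$, since $\tau \ne 0$ there is a smallest $v \ge 2$ with $k_{\pi(v)} \ne (k+\tau)_{\pi(v)}$. I would pair such a $\mathbf{k}$ with a partner $\mathbf{k}'$ obtained by incrementing the $\pi(v-1)$-digit by $1 \pmod p$, leaving every other digit unchanged. By the minimality of $v$, the digits at positions $\pi(1),\ldots,\pi(v-1)$ of $\mathbf{k}$ and $\mathbf{k}+\tau$ already agree, so in the combination
$$f_0(\mathbf{k}) - f_0(\mathbf{k}+\tau) - f_0(\mathbf{k}') + f_0(\mathbf{k}'+\tau)$$
the self-terms $c_{t,k} x_k^t$ and every path quadratic term other than $a_{v-1}x_{\pi(v-1)}x_{\pi(v)}$ cancel. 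What survives is $a_{v-1}\bigl(k_{\pi(v)} - (k+\tau)_{\pi(v)}\bigr)\bigl(k_{\pi(v-1)} - k'_{\pi(v-1)}\bigr)$, a nonzero element of $\mathbb{F}_p$. Summing over a full residue class modulo $p$ in the $\pi(v-1)$-coordinate then produces another $p$-th root of unity geometric sum which vanishes, disposing of all remaining terms.

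The main technical obstacle will be carry bookkeeping: integer addition $k + \tau$ interacts nontrivially with the $p$-ary representation, and one must verify that the involution $\mathbf{k} \mapsto \mathbf{k}'$ preserves the overlap region $\{0,\ldots,p^m-1-\tau\}$ and that the cancellation identity for the $f_0$-difference is unaffected by carries above position $\pi(v)$. Choosing $v$ as the lowest affected path index is precisely what keeps the digits below $\pi(v)$ well-behaved, and the digits strictly above $\pi(v)$ agree on $\mathbf{k}$ and $\mathbf{k}'$ by construction; once this case analysis is in place, the two steps combine to give $\sum_{n=0}^{p-1} R_{\mathbf{s}_{f_n}}(\tau)=0$ as required.
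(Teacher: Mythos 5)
Your proposal is correct and is essentially the argument the paper itself relies on: the lemma is quoted from \cite{SBS2021} without proof, but the paper's proof of its $q$-ary generalization (Lemma \ref{lemma_q_ary_CS}) is precisely your two-step scheme --- the geometric sum over $n$ annihilates every index pair whose $\pi(1)$-digit changes, and the surviving pairs are grouped into orbits of size $p$ by shifting the $\pi(v-1)$-digit through a full residue class, after which only the $a_{v-1}$ cross-term survives and its orbit sum of $p$-th roots of unity vanishes. The carry bookkeeping you flag as the main obstacle is in fact harmless: by minimality of $v$ the $\pi(v-1)$-digits of $k$ and $k+\tau$ coincide, so shifting both by the same amount modulo $p$ changes the two integers by the identical quantity and hence preserves the difference $\tau$ and the overlap range.
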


Suppose $f(\mathbf{x})$ is a $p$-ary function with the form in Lemma \ref{p-CSS}, then by Lemma \ref{lemma_upperboundPAPR_of_CSS},  ${\rm PAPR}(\mathbf{s}_{f})\leq p$. 

\subsection{Non-orthogonal Sequences from $p$-ary complementary sequences}\label{SS_from_p-CSS}

Since the degree of the EBFs in Lemma \ref{p-CSS} is at most $p-1$,  it is very challenging to calculate the coherence for a large $p$. Therefore, in this paper, we study the sequences generated by quadratic EBFs, whose coherence can be determined by the rank information of some sympletic matrices. 

Regardless of the constant term, the EBFs in Lemma \ref{p-CSS} can be represented as 
{\red{\begin{equation}\label{eq_quadratic_functions}
	f(\mathbf{x})=\sum_{i=1}^{m-1} a_{i} x_{\pi(i)} x_{\pi(i+1)} + \sum_{k=1}^{m} d_{k} x_{k}^{2}+ \sum_{k=1}^{m} c_{k}x_{k}=\mathbf{x}\mathbf{A}\mathbf{x}^{T}+\mathbf{c}\mathbf{x}^{T},
\end{equation}}}
where $\mathbf{c}=(c_{1},\ldots,c_{m})$, $\mathbf{a}=(a_{1},\ldots,a_{m-1})$, $\mathbf{d}=(d_{1},\ldots,d_{m})$ and $\mathbf{A}$ is a matrix determined by $\pi,~\mathbf{a},~\mathbf{d}$. Without loss of generality, let us use $\psi(\pi,\mathbf{a},\mathbf{d})$ to represent $\mathbf{A}$, where
\begin{equation}\label{eq_matrix}
	A(i,j) = \begin{cases}
		d_{i},& \text{if~~} i=j;\\
		a_{k},& \text{if~~} \pi(k)=i,\pi(k+1)=j;\\
		0,& \text{otherwise.}
	\end{cases}
\end{equation}	

Let $M_{m}(\mathbb{F}_{p})$ {\red{denote}} the set of all {\red{square}} matrices of {\red{size $m\times m$(or order $m$)}} over $\mathbb{F}_p$, and $\mathcal{A}_{p} \subset M_{m}(\mathbb{F}_{p})$ {\red{denote}} the set of all matrices which satisfies (\ref{eq_matrix}). Considering (\ref{eq9}) for a EBF of degree $2$, any quadratic EBF $f_{\mathbf{A}}^{(c)}(\mathbf{x}) = \mathbf{x}\mathbf{A}\mathbf{x}^{T}+\mathbf{c}\mathbf{x}^{T}$ with $\mathbf{A}\in\mathcal{A}_p$ and $c=\sum_{k=1}^{m} c_{k}p^{k-1}$ will generate a complex-valued sequence $\mathbf{s}_{f}$ with ${\rm PAPR}(\mathbf{s}_{f})\leq p$. For simplicity, we denote $\mathbf{s}_{\mathbf{A}}^{(c)}$ as the corresponding sequence of $f_{\mathbf{A}}^{(c)}$.

{\red{\begin{example}
			Let $p=3,m=4$, $\pi=[1,2,4,3]$ denote the permutation defined in $\{1,2,3,4\}$, $\mathbf{a}=(2,1,1),\mathbf{d}=(2,0,2,1)\in \mathbb{F}_{p}^{m}$. Let $f$ be the EBF defined as follows:
			\begin{equation}
				\begin{split}
					f(\mathbf{x}) &= \sum_{i=1}^{m-1} a_{i}x_{\pi(i)}x_{\pi(i+1)}+\sum_{k=1}^{m} d_{k}x_{k}^{2}\\
					&= 2x_{1}x_{2}+x_{2}x_{4}+x_{4}x_{3}+2x_{1}^{2}+x_{3}^{2}+x_{4}^{2}
				\end{split}
			\end{equation} 
			Then by (\ref{eq_quadratic_functions}), we have
			\begin{equation}
				f(\mathbf{x})=\mathbf{x}\mathbf{A}\mathbf{x}^{T},
			\end{equation}
			where 
			\begin{equation}
				\mathbf{A}=\psi(\pi,\mathbf{a},\mathbf{d})=\left(\begin{array}{cccc}
					2 & 2 & 0 & 0 \\
					0 & 0 & 0 & 1 \\
					0 & 0 & 1 & 0 \\
					0 & 0 & 1 & 1
				\end{array}
				\right).
			\end{equation}
\end{example}}}

%{\red{\begin{example}
%			Let $p=3$, $m=3$, $\mathcal{M}=\{\mathbf{A}_{1},\mathbf{A}_{2}\}\subset \mathcal{A}_{p}$ is a set of matrices defined as follows
%			\begin{equation}
%				\begin{split}
%					&\mathbf{A}_{1}=\psi([1,2,3],(2,1),(0,0,0))=\left(\begin{array}{ccc}
%						0 & 2 & 0 \\
%						0 & 0 & 1 \\
%						0 & 0 & 0
%					\end{array}
%					\right),\\
%					&\mathbf{A}_{2}=\psi([1,2,3],(2,1),(2,1,1))=\left(\begin{array}{ccc}
%						2 & 2 & 0 \\
%						0 & 1 & 1 \\
%						0 & 0 & 1
%					\end{array}
%					\right).
%				\end{split}
%			\end{equation}  
%			Suppose that $\mathbf{\Phi}=[\mathbf{\Phi}_{1},\mathbf{\Phi}_{2}]$ is the $3$-ary spreading sequence set generated by $\mathcal{M}$. Let $\mathbf{s}_{\mathbf{A}_{1}}^{(1)},\mathbf{s}_{\mathbf{A}_{2}}^{(2)}\in \mathbf{\Phi}$, then the corresponding EBF $f_{\mathbf{A}_{1}}^{(1)},f_{\mathbf{A}_{2}}^{(2)}$ of $\mathbf{s}_{\mathbf{A}_{1}}^{(1)},\mathbf{s}_{\mathbf{A}_{2}}^{(2)}$ can be represented as follows:
%			\begin{equation}
%				\begin{split}
%					&f_{\mathbf{A}_{1}}^{(1)}(\mathbf{x}) = 2x_{1}x_{2}+x_{2}x_{3} + x_{1},\\
%					&f_{\mathbf{A}_{2}}^{(2)}(\mathbf{x}) = 2x_{1}x_{2}+x_{2}x_{3} + 2x_{1}^{2}+x_{2}^{2}+x_{3}^{2} + 2x_{1}.
%				\end{split}
%			\end{equation}
%\end{example}}}

\begin{remark}
	For $p=2$, since $x_{k}^{2}=x_{k}$ in $\mathbb{F}_{2}$, the quadratic binary functions in (\ref{eq_quadratic_functions}) can be expressed as $f(\mathbf{x})=\sum_{i=1}^{m-1} a_{i} x_{\pi(i)} x_{\pi(i+1)} +  \sum_{k=1}^{m} c_{k}x_{k}$. Then the corresponding quadratic matrix $\mathbf{A}$ is represented as $\psi(\pi,\mathbf{1}_{m-1},\mathbf{0}_{m})$.
\end{remark}

Let $M=p^m$, and $0\leq c <M$. Given $\mathbf{A}\in\mathcal{A}_{p}$, the sequence set generated by $\mathbf{A}$ is defined as follows:
\begin{equation}
	\mathbf{\Phi}_{\mathbf{A}}=[\mathbf{s}_{\mathbf{A}}^{(0)},\mathbf{s}_{\mathbf{A}}^{(1)},\ldots,\mathbf{s}_{\mathbf{A}}^{(M-1)}],
\end{equation}  
where $\mathbf{s}_{\mathbf{A}}^{(i)}$ is the $i$-th column of $\mathbf{\Phi}_{\mathbf{A}}$.
%\textcolor{red}{Clearly, $\mathbf{\Phi}_{\mathbf{A}}$ is an orthogonal sequence set.}
\begin{lemma}
	$\mathbf{\Phi}_{\mathbf{A}}$ is an orthogonal sequence set.
\end{lemma}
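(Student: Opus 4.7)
The plan is to expand the inner product in the definition of coherence and reduce it to a standard additive character sum over $\mathbb{F}_p^m$. For any two indices $c\neq c'$ in $\{0,1,\dots,M-1\}$, let $\mathbf{c},\mathbf{c}'\in\mathbb{F}_p^m$ denote their $p$-ary representations; the generating functions are $f_{\mathbf{A}}^{(c)}(\mathbf{x})=\mathbf{x}\mathbf{A}\mathbf{x}^T+\mathbf{c}\mathbf{x}^T$ and $f_{\mathbf{A}}^{(c')}(\mathbf{x})=\mathbf{x}\mathbf{A}\mathbf{x}^T+\mathbf{c}'\mathbf{x}^T$. Since the entries of $\mathbf{s}_{\mathbf{A}}^{(c)}$ are $p$-th roots of unity, I would write
\[
\langle \mathbf{s}_{\mathbf{A}}^{(c)},\mathbf{s}_{\mathbf{A}}^{(c')}\rangle
=\sum_{\mathbf{x}\in\mathbb{F}_p^m}\omega_p^{f_{\mathbf{A}}^{(c)}(\mathbf{x})-f_{\mathbf{A}}^{(c')}(\mathbf{x})}.
\]

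The key observation, which makes the proof short, is that the quadratic part $\mathbf{x}\mathbf{A}\mathbf{x}^T$ is the same for every sequence in $\mathbf{\Phi}_{\mathbf{A}}$, so it cancels in the exponent, leaving only the affine difference $(\mathbf{c}-\mathbf{c}')\mathbf{x}^T$. The inner product therefore collapses to
\[
\sum_{\mathbf{x}\in\mathbb{F}_p^m}\omega_p^{(\mathbf{c}-\mathbf{c}')\mathbf{x}^T}
=\prod_{k=1}^{m}\sum_{x_k\in\mathbb{F}_p}\omega_p^{(c_k-c_k')x_k}.
\]
Because $c\neq c'$, at least one coordinate $c_k-c_k'\in\mathbb{F}_p^*$, and for that index the inner sum is the full character sum $\sum_{x\in\mathbb{F}_p}\omega_p^{\alpha x}=0$ for $\alpha\neq 0$. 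Hence the product vanishes and the two sequences are orthogonal.

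There is no real obstacle here: the heart of the argument is the fact that the $M$ linear forms $\mathbf{c}\mathbf{x}^T$ ($\mathbf{c}\in\mathbb{F}_p^m$) give $M$ pairwise inequivalent additive characters of $\mathbb{F}_p^m$, combined with the structural feature that all sequences in $\mathbf{\Phi}_{\mathbf{A}}$ share the same quadratic form. The only small items I would make explicit are (i) the use of complex conjugation together with $\omega_p^*=\omega_p^{-1}$ to obtain the difference of exponents, and (ii) the bijection between $c\in\{0,\dots,p^m-1\}$ and $\mathbf{c}\in\mathbb{F}_p^m$ via the $p$-ary expansion, so that $c\neq c'\iff \mathbf{c}\neq \mathbf{c}'$. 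No rank information about $\mathbf{A}$, nor any property of the permutation $\pi$ or of $\mathbf{a},\mathbf{d}$, is needed for this lemma; those will only come into play later, when bounding the coherence of unions of such sets.
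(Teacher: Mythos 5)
Your proof is correct and follows the same route as the paper: the quadratic form $\mathbf{x}\mathbf{A}\mathbf{x}^{T}$ cancels in the exponent, and the inner product reduces to the character sum $\sum_{\mathbf{x}\in\mathbb{F}_p^m}\omega_p^{(\mathbf{c}-\mathbf{c}')\mathbf{x}^{T}}$, which vanishes because $\mathbf{c}\neq\mathbf{c}'$. Your factorization into one-dimensional character sums is just a slightly more explicit way of evaluating the sum the paper declares to be zero.
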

\begin{proof}
	Suppose that $\mathbf{s}_{\mathbf{A}}^{(a)}$ and $\mathbf{s}_{\mathbf{A}}^{(b)}$ are different sequences in $\mathbf{\Phi}_{\mathbf{A}}$ with $0\leq a\not=b \leq p^m-1$. Let $f_{1}(\mathbf{x})=\mathbf{x}^{T}\mathbf{A}\mathbf{x}+\mathbf{a}^{T}\mathbf{x}$ and $f_{2}(\mathbf{x})=\mathbf{x}^{T}\mathbf{A}\mathbf{x}+\mathbf{b}^{T}\mathbf{x}$ be the two EBFs associated to $\mathbf{s}_{\mathbf{A}}^{(a)}$ and $\mathbf{s}_{\mathbf{A}}^{(b)}$, where $\mathbf{a}=(a_{1},\ldots,a_{m}),\mathbf{b}=(b_{1},\ldots,b_{m})$ denotes the $p$-ary representation of $a$ and $b$. Then, we have
	\begin{equation}
		\begin{split}
			\langle \mathbf{s}_{\mathbf{A}}^{(a)},\mathbf{s}_{\mathbf{A}}^{(b)}\rangle&=\sum_{\mathbf{x}\in\mathbb{F}_{p}^{m}} \omega_{p}^{f_{1}(\mathbf{x})-f_{2}(\mathbf{x})}\\
			&=\sum_{\mathbf{x}\in\mathbb{F}_{p}^{m}} \omega_{p}^{(\mathbf{a}-\mathbf{b})^{T}\mathbf{x}}\\
			&=\sum_{\mathbf{x}\in\mathbb{F}_{p}^{m}} \omega_{p}^{\mathbf{c}^{T}\mathbf{x}}\\
			&=0 ,
		\end{split}
	\end{equation}
	since $\mathbf{c}=\mathbf{a}-\mathbf{b}\not=\mathbf{0}\in\mathbb{F}_{p}^{m}$. This completes the proof.
\end{proof}
 To obtain a large non-orthogonal sequence set, we expand the sequence set by exploiting more quadratic matrices from $\mathcal{A}_p$. Let $\mathcal{M}=\{\mathbf{A}_{1},\mathbf{A}_{2},\ldots,\mathbf{A}_{L}\}\subset \mathcal{A}_{p}$, the non-orthogonal sequence set generated by $\mathcal{M}$ is defined by
\begin{equation}\label{phimatrix}
	\mathbf{\Phi} = [\mathbf{\Phi}_{\mathbf{A}_{1}},\mathbf{\Phi}_{\mathbf{A}_{2}},\ldots, \mathbf{\Phi}_{\mathbf{A}_{L}}].
\end{equation}

From (\ref{eq_coherence}), the coherence of $\mathbf{\Phi}$ is given by
\begin{equation}\label{eq13av}
	\mu(\mathbf{\Phi}) = \max\limits_{\substack{1\leq i,j\leq L\\
			0\leq c_{1},c_{2}\leq M-1 }}\frac{\mid\langle \mathbf{s}_{\mathbf{A}_{i}}^{(c_{1})}, \mathbf{s}_{\mathbf{A}_{j}}^{(c_{2})}\rangle\mid}{M},
\end{equation}
where $c_{1}\not=c_{2}$ if $i=j$. 

Next, we will show that the coherence of $\mathbf{\Phi}$ depends on the rank information of the symplectic matrix corresponding to a pair of matrices in $\mathcal{M}$.

\begin{lemma}\label{lemma_correlation}
	%{\rm \cite{Nam2021}}
	Let  $p$ be an odd prime, $M=p^m$,  $\mathcal{M}=\{\mathbf{A}_{1},\mathbf{A}_{2}\}$ and $\mathbf{\Phi}$ is generated by $\mathcal{M}$, as defined in (\ref{phimatrix}). Define a
	symplectic matrix $\mathbf{Q}_{1,2} = (\mathbf{A}_{1}-\mathbf{A}_{2})+(\mathbf{A}_{1}-\mathbf{A}_{2})^{T}$. Then, if $rank_{p}(\mathbf{Q}_{1,2})=r$, 
	
	\begin{equation}
		\mu(\mathbf{\Phi})=\frac{1}{\sqrt{p^r}}.
	\end{equation}
\end{lemma}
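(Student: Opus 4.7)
The plan is to reduce the coherence to a classical quadratic Gauss sum over $\mathbb{F}_p$. Since $\mathcal{M}=\{\mathbf{A}_1,\mathbf{A}_2\}$, the maximum in the definition of $\mu(\mathbf{\Phi})$ splits into two cases: the diagonal case $i=j$, and the off-diagonal case $i\neq j$. The diagonal case is already handled by the preceding lemma, which shows $\mathbf{\Phi}_{\mathbf{A}_i}$ is an orthogonal sequence set, so the inner product vanishes whenever $c_1\neq c_2$. Thus only $i=1,j=2$ (and symmetrically) can contribute, and I can focus on evaluating
\[
\bigl\langle \mathbf{s}_{\mathbf{A}_1}^{(c_1)},\mathbf{s}_{\mathbf{A}_2}^{(c_2)}\bigr\rangle
=\sum_{\mathbf{x}\in\mathbb{F}_p^m}\omega_p^{\mathbf{x}\mathbf{B}\mathbf{x}^T+\mathbf{c}\mathbf{x}^T},
\]
where $\mathbf{B}=\mathbf{A}_1-\mathbf{A}_2$ and $\mathbf{c}=\mathbf{c}_1-\mathbf{c}_2$ is the $p$-ary representation of $c_1-c_2$.

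The next step is to symmetrize the quadratic part. Since $p$ is odd, $2$ is invertible in $\mathbb{F}_p$, and the identity $\mathbf{x}\mathbf{B}\mathbf{x}^T=\tfrac{1}{2}\mathbf{x}(\mathbf{B}+\mathbf{B}^T)\mathbf{x}^T=\tfrac{1}{2}\mathbf{x}\mathbf{Q}_{1,2}\mathbf{x}^T$ holds. Multiplying the symmetric matrix $\tfrac{1}{2}\mathbf{Q}_{1,2}$ by the unit $\tfrac{1}{2}$ does not change its rank, so the quadratic form $\mathbf{x}\mathbf{B}\mathbf{x}^T$ has $\mathbb{F}_p$-rank exactly $r=\mathrm{rank}_p(\mathbf{Q}_{1,2})$. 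I would then apply the classical fact that any symmetric matrix over $\mathbb{F}_p$ ($p$ odd) is congruent to a diagonal matrix; that is, there exists an invertible $\mathbf{P}\in GL_m(\mathbb{F}_p)$ with $\mathbf{P}^T(\tfrac{1}{2}\mathbf{Q}_{1,2})\mathbf{P}=\mathrm{diag}(d_1,\ldots,d_r,0,\ldots,0)$ with all $d_i\in\mathbb{F}_p^*$.

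Applying the change of variables $\mathbf{x}=\mathbf{y}\mathbf{P}^{-1}$ (a bijection on $\mathbb{F}_p^m$) turns the sum into a product
\[
\prod_{i=1}^{r}\Bigl(\sum_{y_i\in\mathbb{F}_p}\omega_p^{d_i y_i^2+c'_i y_i}\Bigr)\cdot\prod_{i=r+1}^{m}\Bigl(\sum_{y_i\in\mathbb{F}_p}\omega_p^{c'_i y_i}\Bigr),
\]
where $\mathbf{c}'=\mathbf{c}\mathbf{P}$. The tail factors equal $p$ when $c'_i=0$ and $0$ otherwise. The first $r$ factors are shifted quadratic Gauss sums, each of absolute value $\sqrt{p}$. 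Consequently, the full sum has magnitude either $0$ or $p^{m-r}\cdot p^{r/2}=p^{m-r/2}$. Dividing by $M=p^m$ gives the universal upper bound $\mu(\mathbf{\Phi})\le p^{-r/2}=1/\sqrt{p^r}$.

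Finally I need to confirm the bound is attained. Choosing $c_1=c_2=0$ (allowed since $i\neq j$) makes $\mathbf{c}=\mathbf{0}$, so every tail factor equals $p$ and the full sum has magnitude exactly $p^{m-r/2}$, realizing the bound and yielding equality $\mu(\mathbf{\Phi})=1/\sqrt{p^r}$. The main obstacle is bookkeeping rather than depth: one must carefully justify that $\mathbf{x}\mathbf{B}\mathbf{x}^T$ and $\tfrac{1}{2}\mathbf{x}\mathbf{Q}_{1,2}\mathbf{x}^T$ define the same quadratic form (using $p$ odd), check that the symplectic/symmetric matrix congruence does not change the rank over $\mathbb{F}_p$, and invoke the correct value for the quadratic Gauss sum $\sum_{y}\omega_p^{dy^2+ey}$ for $d\in\mathbb{F}_p^*$, all of which are standard.
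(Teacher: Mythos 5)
Your proof is correct, but it takes a genuinely different route from the paper's. The paper never diagonalizes the quadratic form: it computes $\lvert\langle \mathbf{s}_f,\mathbf{s}_g\rangle\rvert^2$ by multiplying the character sum with its conjugate and substituting $\mathbf{x}=\mathbf{y}+\mathbf{z}$ (a Weyl-differencing step), which collapses the inner sum to an indicator of the kernel of $\mathbf{C}+\mathbf{C}^{T}=\mathbf{Q}_{1,2}$; the square of the inner product then equals $p^{m}\sum_{\mathbf{z}\in\Omega}\omega_p^{\mathbf{c}^{T}\mathbf{z}}$ with $\dim\Omega=m-r$, giving the bound $p^{m+s}$ and equality when $\mathbf{c}$ annihilates $\Omega$. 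You instead symmetrize to $\tfrac12\mathbf{Q}_{1,2}$, invoke congruence-diagonalization of symmetric forms over $\mathbb{F}_p$ ($p$ odd), and factor the sum into $r$ one-dimensional quadratic Gauss sums of modulus $\sqrt p$ times $m-r$ linear character sums. Your approach needs more external machinery (classification of quadratic forms, the value of the Gauss sum, and careful bookkeeping of how $\mathbf{c}$ transforms under the change of variables — note your stated congruence $\mathbf{P}^{T}(\tfrac12\mathbf{Q})\mathbf{P}$ does not quite match the substitution $\mathbf{x}=\mathbf{y}\mathbf{P}^{-1}$, though this is easily repaired), but it buys an exact dichotomy: each cross inner product has magnitude either $0$ or exactly $p^{m-r/2}$, which makes the attainment of the maximum (at $c_1=c_2$) completely transparent, whereas the paper's argument nominally only records an upper bound before asserting equality. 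Both correctly dispose of the $i=j$ case via the orthogonality of $\mathbf{\Phi}_{\mathbf{A}}$.
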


\begin{proof}
	Let $f(\mathbf{x})=\mathbf{x}^{T} \mathbf{A}_1 \mathbf{x} + \mathbf{a_1}^{T} \mathbf{x}$, and $g(\mathbf{x})=\mathbf{x}^{T} \mathbf{A}_2  \mathbf{x} + \mathbf{a_2}^{T} \mathbf{x}$. Also let $\mathbf{s}_f$ and $\mathbf{s}_g$ be the corresponding spreading sequences of the EBFs $f$ and $g$, respectively. Then, we have
		\begin{equation}\label{eq1av}
		\begin{split}
			\mid \langle \mathbf{s}_{f}, \mathbf{s}_{g}\rangle \mid &=\mid \sum_{\mathbf{x}\in \mathbb{F}_{p}^{m}} \omega^{f(\mathbf{x})-g(\mathbf{x})} \mid \\
			&= \mid \sum_{\mathbf{x}\in \mathbb{F}_{p}^{m}} \omega^{\mathbf{x}^{T} (\mathbf{A} _1- \mathbf{A}_2) \mathbf{x} + (\mathbf{a_1}^{T}-\mathbf{a_2}^{T}) \mathbf{x}} \mid \\
			&= \mid \sum_{\mathbf{x}\in \mathbb{F}_{p}^{m}} \omega^{\mathbf{x}^{T} \mathbf{C} \mathbf{x} + \mathbf{c}^{T} \mathbf{x}} \mid (\mathbf{C}=\mathbf{A}_1-\mathbf{A}_2,\mathbf{c}=\mathbf{a_1}-\mathbf{a_2}) \\
			&= \left( \left(\sum_{\mathbf{x}\in \mathbb{F}_{p}^{m}} \omega^{\mathbf{x}^{T} \mathbf{C} \mathbf{x} + \mathbf{c}^{T} \mathbf{x}} \right) \left(\sum_{\mathbf{y}\in \mathbb{F}_{p}^{m}} \omega^{-(\mathbf{y}^{T} \mathbf{C} \mathbf{y} + \mathbf{c}^{T} \mathbf{y})} \right)\right)^{\frac{1}{2}} \\
			&= \left(\sum_{\mathbf{z}\in \mathbb{F}_{p}^{m}} \sum_{\mathbf{y}\in \mathbb{F}_{p}^{m}} \omega^{\mathbf{z}^{T} \mathbf{C} \mathbf{z} + \mathbf{c}^{T} \mathbf{z} } \cdot \omega^{\mathbf{y}^{T} \mathbf{C} \mathbf{z} + \mathbf{z}^{T} \mathbf{C} \mathbf{y}}\right)^{\frac{1}{2}} (\text{let~~} \mathbf{x}=\mathbf{y}+\mathbf{z}) \\
			&= \left(\sum_{\mathbf{z}\in \mathbb{F}_{p}^{m}} \omega^{\mathbf{z}^{T} \mathbf{C} \mathbf{z} + \mathbf{c}^{T} \mathbf{z} } \sum_{\mathbf{y}\in \mathbb{F}_{p}^{m}} \omega^{\mathbf{y}^{T} (\mathbf{C}+ \mathbf{C}^{T}) \mathbf{z}}\right)^{\frac{1}{2}}.
		\end{split}
	\end{equation}
We have,
\begin{equation}\label{eq2}
	\sum_{\mathbf{y}\in \mathbb{F}_{p}^{m}} \omega^{\mathbf{y}^{T} (\mathbf{C}+ \mathbf{C}^{T}) \mathbf{z}} =\begin{cases}
		p^m,& \text{if~~} (\mathbf{C}+\mathbf{C}^{T})\mathbf{z}=\mathbf{0};\\
		0,& \text{other cases}.
	\end{cases}
\end{equation}
Let $\Omega = \{\mathbf{z}\mid (\mathbf{C}+\mathbf{C}^{T})\mathbf{z}=\mathbf{0}, \mathbf{z} \in \mathbb{F}_{p}^{m}\}$. Then $\Omega$ is a subspace of $\mathbb{F}_{p}^{m}$. Let $s$ {\red{denote}} the dimension of $\Omega$, hence, the rank $r$ of $\mathbf{C}+\mathbf{C}^{T}$ is $m-s$. Then (\ref{eq1av}) can be written as follows:

\begin{equation}\label{eq3}
	\begin{aligned}
		\mid \langle \mathbf{s}_{f}, \mathbf{s}_{g}\rangle \mid &= \left(\sum_{\mathbf{z}\in \mathbb{F}_{p}^{m}} \omega^{\mathbf{z}^{T} \mathbf{C} \mathbf{z} + \mathbf{c}^{T} \mathbf{z} } \sum_{\mathbf{y}\in \mathbb{F}_{p}^{m}} \omega^{\mathbf{y}^{T} (\mathbf{C}+ \mathbf{C}^{T}) \mathbf{z}}\right)^{\frac{1}{2}}\\
		&= \left(p^m \sum_{\mathbf{z}\in \Omega} \omega^{\mathbf{z}^{T} \mathbf{C} \mathbf{z} + \mathbf{c}^{T} \mathbf{z} } \right)^{\frac{1}{2}}\\
		&= \left(p^m \sum_{\mathbf{z}\in \Omega} \omega^{ \mathbf{c}^{T} \mathbf{z} } \right)^{\frac{1}{2}},
	\end{aligned}
\end{equation}
since, $\mathbf{z}^{T} \mathbf{C} \mathbf{z} = (-\mathbf{z}^{T} \mathbf{C}^{T} \mathbf{z})^{T}=-\mathbf{z}^{T} \mathbf{C} \mathbf{z}$ for $\mathbf{z}\in \Omega$,
which means that $\mathbf{z}^{T} \mathbf{C} \mathbf{z}=0$.
Hence, 
\begin{equation}
	\mid \langle \mathbf{s}_{f}, \mathbf{s}_{g}\rangle \mid \leq p^{\frac{m+s}{2}}.
\end{equation}
Since, $s=m-r$, using (\ref{eq13av}), the proof is completed.
\end{proof}

\begin{remark}
	Note that, when $r=m$, the $p$-ary function $\mathbf{x}(\mathbf{C}+\mathbf{C}^{T}) \mathbf{x}^{T}$ corresponding to the matrix $\mathbf{C}=\mathbf{A}_{1}-\mathbf{A}_{2}$ is a generalized bent function \cite{KUMAR1985}.
\end{remark}

The following theorem establishes the relation between the coherence of $\mathbf{\Phi}$ and the rank of the matrices in $\mathcal{M}$, which is also discussed in \cite[Theorem 1]{Nam2021} for $p=2$.
\begin{theorem}{\rm  (\cite{Nam2021})\label{coherence}}
	Suppose that $\mathbf{\Phi}$ is the sequence set generated by $\mathcal{M}=\{\mathbf{A}_{1},\ldots,\mathbf{A}_{L}\}$, as defined in (\ref{phimatrix}). The coherence of $\mathbf{\Phi}$ is given by 
	\begin{equation}
		\mu(\mathbf{\Phi}) = \frac{1}{\sqrt{p^{r_{min}}}},
	\end{equation}
	where \begin{equation}
		r_{min} = \min_{1\leq i\not=j \leq L} rank_{p}(\mathbf{Q}_{i,j}).
	\end{equation}
\end{theorem}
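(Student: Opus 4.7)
The plan is to reduce this $L$-matrix statement to the two-matrix Lemma \ref{lemma_correlation} by a pairwise block decomposition of the coherence maximum in (\ref{eq13av}). I would first note that every pair of distinct columns of $\mathbf{\Phi}$ falls into one of two types: either both columns lie in the same sub-block $\mathbf{\Phi}_{\mathbf{A}_i}$, in which case the previously established orthogonality of $\mathbf{\Phi}_{\mathbf{A}_i}$ makes their inner product vanish and contributes nothing to the maximum, or they lie in two different sub-blocks $\mathbf{\Phi}_{\mathbf{A}_i},\mathbf{\Phi}_{\mathbf{A}_j}$ with $i\neq j$.

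For each cross-block pair, applying Lemma \ref{lemma_correlation} to the two-matrix collection $\{\mathbf{A}_i,\mathbf{A}_j\}$ shows that, restricted to this sub-block pair, the coherence is exactly $1/\sqrt{p^{rank_p(\mathbf{Q}_{i,j})}}$. Taking the outer maximum over all $i\neq j$ and using that $p^{-r/2}$ is strictly decreasing in $r$, the overall coherence of $\mathbf{\Phi}$ is attained at the pair minimizing $rank_p(\mathbf{Q}_{i,j})$, yielding $\mu(\mathbf{\Phi})=1/\sqrt{p^{r_{\min}}}$.

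The main substantive obstacle is essentially already resolved inside Lemma \ref{lemma_correlation}, namely the achievability (not merely the upper bound) of the cross-inner-product bound $p^{(m+s)/2}$. This requires exhibiting a linear offset $\mathbf{c}=\mathbf{a}_1-\mathbf{a}_2$ lying in the annihilator of the null space $\Omega$ of $\mathbf{C}+\mathbf{C}^{T}$, so that the inner sum $\sum_{\mathbf{z}\in\Omega}\omega^{\mathbf{c}^{T}\mathbf{z}}$ in (\ref{eq3}) collapses to $|\Omega|=p^{s}$. Granted this, the theorem itself reduces to bookkeeping: the intra/inter-block partition of the column pairs, the independence of the achievability choice of $\mathbf{c}$ across different $(i,j)$ (since the linear terms in each block are free parameters), and the monotonicity of $p^{-r/2}$ that lets us replace $\max_{i\neq j}$ by $r_{\min}$.
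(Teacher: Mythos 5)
Your proposal is correct and takes essentially the same route as the paper, whose entire proof of this theorem is the single sentence that it ``can be directly derived from Lemma \ref{lemma_correlation}''; your intra/inter-block decomposition and the monotonicity of $p^{-r/2}$ are exactly the implicit content of that reduction. The achievability point you flag is real (the paper's proof of Lemma \ref{lemma_correlation} only establishes the upper bound) but is easily discharged: since the linear offsets range over all of $\mathbb{F}_{p}^{m}$, for any cross-block pair one may take $\mathbf{c}=\mathbf{a}_{1}-\mathbf{a}_{2}=\mathbf{0}$, so that the inner sum over $\Omega$ equals $p^{s}$ and the bound is attained.
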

\begin{proof}
	The proof can be directly derived from Lemma \ref{lemma_correlation}.
\end{proof}

Hence, from the above theorem, we observe that, to design a spreading sequence matrix $\mathbf{\Phi}$ with very small value of $\mu(\mathbf{\Phi})$, we need to get $r_{min}$ as large as possible. In this paper, the coherence of $\mathbf{\Phi}$ is optimum, when the rank of the corresponding symplectic matrix achieves its maximum value, i.e., $r_{min}=m$ which implies $\mu(\mathbf{\Phi})=\sqrt{\frac{1}{M}}$. When $r_{\min}=m-1$, then $\mu(\mathbf{\Phi})=\frac{1}{\sqrt{p^{m-1}}}$ and we call it near-optimum.

%when the maximum value of $r_{min}=m$, then $\mu(\mathbf{\Phi})=\sqrt{\frac{1}{M}}$, which is the minimum or optimum value.  When $r_{\min}=m-1$, then $\mu(\mathbf{\Phi})=\frac{1}{\sqrt{p^{m-1}}}$ and we call it near-optimum.

\section{Proposed spreading sequence sets with low coherence and low PAPR\label{sec4}}
In this section, we will propose several new non-orthogonal sequence set $\mathbf{\Phi}$ with large $L$ for $p\geq 3$.
% However, before we proceed, we need the following lemma.
\subsection{Spreading sequences of length $M=p^m$ for $p\geq 3$ \label{sec4_1}}
Recall that our objective is to find a class of matrices $\mathcal{M}\subset\mathcal{A}_{p}$ with $r_{min}\geq m-1$. Note that, if we consider a special case that $\forall\mathbf{A}_{i},\mathbf{A}_{j}\in \mathcal{M}$, $\mathbf{A}_{i}-\mathbf{A}_{j}$ is a diagonal matrix with full rank, then we have $r_{min}=m$. Based on this special case, we give the following construction.
\begin{theorem}\label{th_coherence_Lp}
	Let $\pi$ be a permutation over $\{1,\cdots,m\}$ and $\mathbf{a} = (a_{1},\cdots,a_{m-1})$ with $a_{i}\not=0~(1\leq i \leq m-1)$. Let   $\mathbf{d}_{1},\mathbf{d}_{2},\cdots,\mathbf{d}_{p}$ be $p$ vectors defined over $\mathbb{F}_{p}^{m}$, which satisfy the condition that $\forall k \in \{1,\cdots,m\},  d_{i,k}\not=d_{j,k} (1\leq i<j\leq p)$. Let $\mathcal{M}=\{\mathbf{A}_{1},\mathbf{A}_{2},\cdots,\mathbf{A}_{p}\}$ where
	\begin{equation}
	%	\begin{cases}
			\begin{aligned}
				\mathbf{A}_{1} &= \psi(\pi,\mathbf{a},\mathbf{d}_{1}), \\
				\mathbf{A}_{2} &= \psi(\pi,\mathbf{a},\mathbf{d}_{2}), \\
				&~~\vdots \\
				\mathbf{A}_{p} &= \psi(\pi,\mathbf{a},\mathbf{d}_{p}). \\
			\end{aligned}		
	%	\end{cases}
	\end{equation}
 Let $\mathbf{\Phi}$ be the sequence set generated by $\mathcal{M}$, as defined in (\ref{phimatrix}). Then {\red{ $\mathbf{\Phi}$ is a sequence set with $p^{m+1}$ sequences, each of length $p^m$, and $\mu(\mathbf{\Phi})=\sqrt{\frac{1}{M}}$, $PAPR(\mathbf{\Phi})\leq p$ }}.	
\end{theorem}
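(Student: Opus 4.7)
The plan is to verify the three claims of the theorem in turn: the cardinality $p^{m+1}$, the PAPR bound $p$, and the coherence value $\sqrt{1/M}$. The first two are essentially bookkeeping from definitions established earlier in the paper, while the third is the substantive claim and reduces to a short rank computation. The whole construction has been engineered precisely so that pairwise matrix differences take a very clean form.

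For the count, each $\mathbf{A}_i$ produces the $M=p^m$ sequences $\mathbf{s}_{\mathbf{A}_i}^{(0)},\ldots,\mathbf{s}_{\mathbf{A}_i}^{(M-1)}$, so $\mathbf{\Phi}$ has $p\cdot p^m=p^{m+1}$ columns in total. To see these columns are pairwise distinct, I would note that every generating EBF $f_{\mathbf{A}_i}^{(c)}$ has degree at most $2<p$ in each variable, so distinct coefficient tuples $(\mathbf{A}_i,c)$ give distinct polynomials of degree $<p$ in each variable, which in turn represent distinct functions $\mathbb{F}_p^m\to\mathbb{F}_p$ and therefore distinct sequences. For the PAPR claim, every $f_{\mathbf{A}_i}^{(c)}(\mathbf{x})=\mathbf{x}\mathbf{A}_i\mathbf{x}^T+\mathbf{c}\mathbf{x}^T$ is exactly of the form appearing in Lemma~\ref{p-CSS} (reading off $\pi,\mathbf{a}$ and absorbing the linear coefficients $c_k$ into the $t=1$ terms $c_{t,k}x_k$ together with $nx_{\pi(1)}$). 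Hence $\mathbf{s}_{\mathbf{A}_i}^{(c)}$ belongs to a $(p,p^m)$-CS, and Lemma~\ref{lemma_upperboundPAPR_of_CSS} gives $\mathrm{PAPR}(\mathbf{s}_{\mathbf{A}_i}^{(c)})\leq p$ uniformly in $i,c$.

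The heart of the proof is the coherence computation. For any $i\neq j$, by the defining recipe~\eqref{eq_matrix} the matrices $\mathbf{A}_i=\psi(\pi,\mathbf{a},\mathbf{d}_i)$ and $\mathbf{A}_j=\psi(\pi,\mathbf{a},\mathbf{d}_j)$ agree in every off-diagonal entry (both use the same $\pi$ and $\mathbf{a}$), so $\mathbf{A}_i-\mathbf{A}_j$ is the diagonal matrix with $k$-th diagonal entry $d_{i,k}-d_{j,k}$. By hypothesis $d_{i,k}\neq d_{j,k}$ for every $k$, so this diagonal is entrywise nonzero and $\mathbf{A}_i-\mathbf{A}_j$ has rank $m$. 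Being diagonal, it is symmetric, hence $\mathbf{Q}_{i,j}=2(\mathbf{A}_i-\mathbf{A}_j)$; since $p$ is odd, $2\in\mathbb{F}_p^{*}$, so $rank_p(\mathbf{Q}_{i,j})=m$ for every pair $i\neq j$. Therefore $r_{\min}=m$, and plugging into Theorem~\ref{coherence} gives $\mu(\mathbf{\Phi})=1/\sqrt{p^m}=\sqrt{1/M}$.

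I do not anticipate any real obstacle here: the construction is designed so that the pairwise differences $\mathbf{A}_i-\mathbf{A}_j$ are automatically diagonal and full rank, at which point Theorem~\ref{coherence} delivers the optimal coherence for free. The only step that requires a moment of care is the distinctness of the $p^{m+1}$ sequences, and the degree-less-than-$p$-in-each-variable argument settles that cleanly.
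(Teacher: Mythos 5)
Your proposal is correct and follows essentially the same route as the paper: both arguments observe that $\mathbf{A}_i-\mathbf{A}_j$ is diagonal with entries $d_{i,k}-d_{j,k}\neq 0$, so $\mathbf{Q}_{i,j}=2(\mathbf{A}_i-\mathbf{A}_j)$ has full rank $m$ (the paper computes $\det(\mathbf{Q}_{i,j})=2^m\prod_k(d_{i,k}-d_{j,k})$ explicitly), and then Theorem~\ref{coherence} yields $\mu(\mathbf{\Phi})=\sqrt{1/M}$. Your additional bookkeeping for the sequence count and the PAPR bound via Lemma~\ref{p-CSS} and Lemma~\ref{lemma_upperboundPAPR_of_CSS} is sound and merely makes explicit what the paper leaves implicit.
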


%\begin{theorem}\label{th_coherence_Lp}
%	Suppose that $\mathcal{M}$ is a set of matrices by Construction \ref{construction1}, let $\mathbf{\Phi}$ be the sequence set generated by $\mathcal{M}$. Then $\mu(\mathbf{\Phi})=\sqrt{\frac{1}{M}}$ .	
%\end{theorem}
\begin{proof}
 Suppose that $ i,j\in \{1,\ldots,p\}$ with $i\not=j$, and recall from Lemma \ref{lemma_correlation} that $\mathbf{Q}_{i,j} = (\mathbf{A}_{i}-\mathbf{A}_{j})+(\mathbf{A}_{i}-\mathbf{A}_{j})^{T} \in \mathbb{F}_{p}^{m\times m}$. Then, we have
			\begin{equation}
				\begin{split}
					\det(\mathbf{Q}_{i,j}) &= \begin{vmatrix}
						2(d_{i,1}-d_{j,1})	&  &  & \\
						& 2(d_{i,2}-d_{j,2}) &  &   \\
						&  & \ddots &  \\
						&  &  & 2(d_{i,m}-d_{j,m})
					\end{vmatrix}
					\\
					& = 2^m \prod_{k=1}^{m} (d_{i,k}-d_{j,k}).
				\end{split}
			\end{equation}
			Since $d_{i,k}\not=d_{j,k} ~(1\leq i<j\leq p)$, we have $\det(\mathbf{Q}_{i,j})\not=0$, and hence $rank_{p}(\mathbf{Q}_{i,j})=m$. 
			According to Theorem \ref{coherence}, the coherence of $\mathbf{\Phi}$ is given by 	
			\begin{equation}
				\mu(\mathbf{\Phi}) = \frac{1}{\sqrt{p^{r_{min}}}} = \frac{1}{\sqrt{p^{m}}} = \sqrt{\frac{1}{M}},
			\end{equation} 
			since $r_{min} = \min_{1\leq i\not=j \leq L} rank_{p}(\mathbf{Q}_{i,j})$.
		%	By Theorem \ref{coherence}, the coherence of $\mathbf{\Phi}$ generated by $\mathcal{M}$ in \ref{construction1} is determined by $\mu(\mathbf{\Phi}) = \frac{1}{\sqrt{p^{r_{min}}}}$ $r_{min} = \min_{1\leq i\not=j \leq L} rank_{p}(\mathbf{Q}_{i,j}).$	\qed
%
%	Since $\forall 1\leq i\not=j\leq p$, $\mathbf{Q}_{i,j}$ is a diagonal matrix with full rank, we have $r_{min}=m$. From Theorem \ref{coherence_of_L2p},  $\mu(\mathbf{\Phi})=\sqrt{\frac{1}{p^{r_{min}}}}=\sqrt{\frac{1}{M}}$.\qed
\end{proof}

{\red{\begin{example}\label{example_4_1}
			Let $p=5,m=3$, $\pi=[3,1,2]$, $\mathbf{a}=(2,2)$, and $\mathbf{d}_{1}=(0,3,4),\mathbf{d}_{2}=(1,0,1),\mathbf{d}_{3}=(2,1,2),\mathbf{d}_{4}=(3,2,0),\mathbf{d}_{5}=(4,4,3)$. According to Theorem \ref{th_coherence_Lp} and (\ref{eq_matrix}), we get the following $5$ matrices:
			\begin{equation*}
				\begin{split}
					&	\mathbf{A}_1=\left(\begin{array}{ccc}
						0 & 2 & 0\\
						0 & 3 & 0\\
						2 & 0 & 4
					\end{array}\right),
					\mathbf{A}_2=\left(\begin{array}{ccc}
						1 & 2 & 0\\
						0 & 0 & 0\\
						2 & 0 & 1
					\end{array}\right),
					\mathbf{A}_3=\left(\begin{array}{ccc}
						2 & 2 & 0\\
						0 & 1 & 0\\
						2 & 0 & 2
					\end{array}\right),\\ &\mathbf{A}_4=\left(\begin{array}{ccc}
						3 & 2 & 0\\
						0 & 2 & 0\\
						2 & 0 & 0
					\end{array}\right),
					\mathbf{A}_5=\left(\begin{array}{ccc}
						4 & 2 & 0\\
						0 & 4 & 0\\
						2 & 0 & 3
					\end{array}\right).
				\end{split}
			\end{equation*}
			Here, $rank_{5}(\mathbf{Q}_{i,j})=3$ for any $i\not=j$. 	
			Let $\mathbf{\Phi}=\frac{1}{5^{3}}[\mathbf{\Phi}_{\mathbf{A}_1},\mathbf{\Phi}_{\mathbf{A}_2},\mathbf{\Phi}_{\mathbf{A}_3},\mathbf{\Phi}_{\mathbf{A}_4},\mathbf{\Phi}_{\mathbf{A}_5}]$ be the $5$-ary spreading matrix of size $5^{3}\times 5^{4}$. Then, ${\rm PAPR}({\mathbf{\Phi}})=3.5223<5$ and $\mu(\mathbf{\Phi})=\sqrt{\frac{1}{5^{3}}}\approx0.0894$, which is optimum.
\end{example}}}

%Naturally, the above construction can be generalized by considering $\mathcal{M}=\bigcup_{k=1}^{K}\mathcal{M}_{k}$ with $\mathcal{M}_{k}$ from Construction \ref{construction1}. However, when $K$ is very large, the issue get so complicated that it is not easy to find a class of $\mathcal{M}$ with $r_{min}\geq m-1$. In what follows, we give two constructions of $\mathcal{M}$ with $K=2$.

{\red{\begin{remark}\label{number_of_phi}
%			For a given permutation $\pi$ and $\mathbf{a}\in \mathbf{F}_{p}^{m}$, the number of $\mathbf{\Phi}$ is depend on the choice of $(\mathbf{d}_{1},\ldots,\mathbf{d}_{p})$. 
For a given permutation $\pi$ and $\mathbf{a}\in \mathbf{F}_{p}^{m}$, let $\mathcal{N}_{\mathbf{\Phi}}$ denotes the number of $\mathbf{\Phi}$. Then $\mathcal{N}_{\mathbf{\Phi}}$ equals to all the different choice of $(\mathbf{d}_{1},\ldots,\mathbf{d}_{p})$. Hence, we have 
\begin{equation}\label{eq_number_of_phi_1}
	\mathcal{N}_{\mathbf{\Phi}}=(p!)^{m-1}.
\end{equation}
\end{remark}
}}

Before we proceed, we need the following lemma.

\begin{lemma}\label{lemma_rank_calA}
	Let $\mathcal{A}_{p}$ be the set of matrices defined in Section \ref{section3}. Then $\forall \mathbf{A}\in\mathcal{A}_{p}$, 
	\begin{equation}
		rank_{p}(\mathbf{A}+\mathbf{A}^{T})\geq m-1.
	\end{equation} 
\end{lemma}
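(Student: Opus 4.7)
The plan is to exploit the very rigid sparsity pattern of the matrices in $\mathcal{A}_p$. Recall that $\mathbf{A}=\psi(\pi,\mathbf{a},\mathbf{d})$ has nonzero entries only on the diagonal (values $d_i$) and at the positions $(\pi(k),\pi(k+1))$ for $k=1,\dots,m-1$ (values $a_k$). Therefore $\mathbf{B}:=\mathbf{A}+\mathbf{A}^{T}$ is symmetric, supported on the diagonal together with the $2(m-1)$ off-diagonal entries at $(\pi(k),\pi(k+1))$ and $(\pi(k+1),\pi(k))$.

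First I would apply the simultaneous row/column permutation induced by $\pi$, i.e., replace $\mathbf{B}$ by $\mathbf{P}^{T}\mathbf{B}\mathbf{P}$ where $\mathbf{P}$ is the permutation matrix of $\pi$. This operation preserves the rank, and the resulting matrix $\widetilde{\mathbf{B}}$ is the symmetric tridiagonal matrix with diagonal entries $2d_{\pi(1)},2d_{\pi(2)},\ldots,2d_{\pi(m)}$ and sub/super-diagonal entries $a_1,a_2,\ldots,a_{m-1}$. Since $p$ is odd, the factor $2$ is a unit in $\mathbb{F}_p$, so the diagonal entries are well-defined elements whose precise values will turn out to be irrelevant.

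Next I would exhibit an $(m-1)\times(m-1)$ submatrix of $\widetilde{\mathbf{B}}$ whose determinant is a nonzero element of $\mathbb{F}_p$. The natural choice is the submatrix $\widetilde{\mathbf{B}}'$ formed by deleting the last row and the first column. Because $\widetilde{\mathbf{B}}$ is tridiagonal, $\widetilde{\mathbf{B}}'$ is lower triangular: its diagonal entries are precisely the super-diagonal entries $a_1,a_2,\ldots,a_{m-1}$ of $\widetilde{\mathbf{B}}$, while entries above its diagonal come from positions of $\widetilde{\mathbf{B}}$ at distance $\ge 2$ from the main diagonal and hence vanish. Therefore
\[
\det(\widetilde{\mathbf{B}}')=\prod_{k=1}^{m-1}a_k\neq 0,
\]
by the assumption $a_k\in\mathbb{F}_p^{*}$ built into $\mathcal{A}_p$.

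Since a nonvanishing $(m-1)$-minor forces $\mathrm{rank}_p(\widetilde{\mathbf{B}})\ge m-1$ and rank is invariant under the permutation similarity, we conclude $\mathrm{rank}_p(\mathbf{A}+\mathbf{A}^{T})\ge m-1$. I do not foresee a real obstacle here: the only thing to double-check is that the sparsity pattern given in (\ref{eq_matrix}) really does collapse to a tridiagonal matrix after relabeling indices by $\pi^{-1}$, which follows directly from the definition. The bound cannot be improved to $m$ in general, because $\widetilde{\mathbf{B}}$ may well be singular when the diagonal entries $2d_{\pi(i)}$ are chosen adversarially; this explains why the lemma settles for $m-1$ rather than $m$.
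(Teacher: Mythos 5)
Your proof is correct and follows essentially the same route as the paper: conjugate $\mathbf{A}+\mathbf{A}^{T}$ by the permutation matrix of $\pi$ to obtain a symmetric tridiagonal matrix, then extract the $(m-1)\times(m-1)$ off-diagonal (top-right) submatrix, which is triangular with determinant $\prod_{k=1}^{m-1}a_k\neq 0$. The only cosmetic difference is that the paper carries out the relabeling at the level of the quadratic form $g_{\mathbf{A}}(\mathbf{x})$ rather than directly as a matrix congruence, but the resulting tridiagonal matrix and the chosen nonvanishing minor are identical.
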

\begin{proof}
	Suppose that $\mathbf{A}=\psi(\pi,\mathbf{a},\mathbf{d})\in \mathcal{A}_{p}$. Let $f_{\mathbf{A}}(\mathbf{x})=\mathbf{x}\mathbf{A}\mathbf{x}^{T}$. For any $\mathbf{x}\in\mathbb{F}_p^{m}$, let $\mathbf{P}$ {\red{denotes}} the permutation matrix defined by $\pi$, i.e. for any $\mathbf{x}=(x_{1},\cdots,x_{m})$, $\mathbf{x}\mathbf{P}=(x_{\pi(1)},\cdots,x_{\pi(m)})$. Suppose that $\mathbf{y}=\mathbf{x}\mathbf{P}$ and $g_{\mathbf{A}}(\mathbf{x}) = \mathbf{x}(\mathbf{A}+\mathbf{A}^{T})\mathbf{x}^{T}$, then,
	\begin{equation}
		\begin{aligned}
			g_{\mathbf{A}}(\mathbf{x}) &= \mathbf{x}(\mathbf{A}+\mathbf{A}^{T})\mathbf{x}^{T}\\
			&= \sum_{i=1}^{m-1} 2 a_{i}x_{\pi(i)}x_{\pi(i+1)} + \sum_{j=1}^{m} 2 d_{j} x_{j}^{2}\\
			&= \sum_{i=1}^{m-1} 2 a_{i}y_{i}y_{i+1} + \sum_{j=1}^{m} 2 d_{\pi(j)} y_{j}^{2}\\
			&=\mathbf{y}(\mathbf{B}+\mathbf{B}^{T})\mathbf{y}^{T}\\
			&=\mathbf{x} \mathbf{P}(\mathbf{B}+\mathbf{B}^{T})\mathbf{P}^{T}\mathbf{x}^{T},
		\end{aligned}
	\end{equation}  
	where $\mathbf{B}=\psi(\pi^\prime,\mathbf{a},\mathbf{d}^{'})$ with $d_{i}^{'}=d_{\pi(i)}, \text{ and }\pi^\prime(i)=i \text{ for } 1\leq i \leq m$. 
	Since $\mathbf{P}$ is a permutation matrix with $\mathbf{P}\mathbf{P}^{T}=\mathbf{I}$, we have
	%followed by the results in linear algebra,
	\begin{equation}
		rank_{p}(\mathbf{A}+\mathbf{A}^{T})=rank_{p}(\mathbf{P}(\mathbf{B}+\mathbf{B}^{T})\mathbf{P}^{T})=rank_{p}(\mathbf{B}+\mathbf{B}^{T}).
	\end{equation}
	Note that $\mathbf{B}+\mathbf{B}^{T}$ is a tridiagonal matrix, which can be written as
	\begin{equation}
		\left(\begin{array}{cccccc}
			2d_{1}^{'} & a_{1} & & & & \\
			a_{1} & 2d_{2}^{'} & a_{2} & & & \\
			& a_{2} & 2d_{3}^{'} & a_{3} & & \\
			& & \ddots & \ddots & \ddots & \\
			& & & a_{m-2} & 2d_{m-1}^{'} & a_{m-1} \\
			& & & & a_{m-1} & 2d_{m}^{'}
		\end{array}\right),
	\end{equation}
	by considering its top right $(m-1)\times (m-1)$ sub-matrix $\mathbf{B}^{'}$, we have 
	\begin{equation}
		\begin{split}
			\det(\mathbf{B}^{'}) &= \begin{vmatrix}
				a_{1} &  & & & & \\
				2d_{2}^{'} & a_{2} &  & & & \\
				a_{2} & 2d_{3}^{'} & a_{3} &  & & \\
				& \ddots & \ddots & \ddots &  & \\
				& & a_{m-3} & 2d_{m-2}^{'} & a_{m-2} &  \\
				& & & a_{m-2} & 2d_{m-1}^{'} & a_{m-1}
			\end{vmatrix}\\
			& = \prod_{k=1}^{m-1} a_{k} \not=0,
		\end{split}
	\end{equation}
	i.e. $rank_{p}(\mathbf{B}^{'})=m-1$.
	Then $rank_{p}(\mathbf{B}+\mathbf{B}^{T})\geq rank_{p}(\mathbf{B}^{'})=m-1$.
\end{proof}

Next, we propose two more constructions, generalizing the construction proposed in Theorem \ref{th_coherence_Lp}.

%Naturally, we can generalize the above construction by finding $L$ different sets $\mathcal{M}_{1}=\{\mathbf{A}_{1},\ldots,\mathbf{A}_{p}\}$ and $$
%By finding two sets from Construction \ref{construction1} with good rank property, we propose the following two constructions of $\mathcal{M}$.
\begin{theorem}\label{construction2} 
	Let $\pi$ be a permutation over $\{1,\ldots,m\}$. Let $\mathbf{a}=(a_{1},\ldots,a_{m-1})$,$\mathbf{b}=(b_{1},\ldots,b_{m-1}) \in \mathbb{F}_{p}^{m-1}$, which satisfy the condition that $\forall i\in\{1,\cdots,m-1\},a_{i}-b_{i}\not=0$. Let the set of matrices $\mathcal{M}=\{\mathbf{A}_{1},\cdots,\mathbf{A}_{2p}\}$ be defined as follows:
	\begin{equation}
		\mathbf{A}_{i}=\begin{cases}
			\psi(\pi,\mathbf{a},\mathbf{d}_{i}), &1\leq i\leq p,\\
			\psi(\pi,\mathbf{b},\mathbf{d}_{i}), &p+1\leq i\leq 2p,
		\end{cases}
	\end{equation}
	%	\begin{equation*}
		%		\begin{cases}
			%			\begin{aligned}
				%				\mathbf{A}_{1} &= \psi(\pi,\mathbf{a},\mathbf{d}_{1}) \\
				%				\mathbf{A}_{2} &= \psi(\pi,\mathbf{a},\mathbf{d}_{2}) \\
				%				&~~\vdots \\
				%				\mathbf{A}_{p} &= \psi(\pi,\mathbf{a},\mathbf{d}_{p}) \\
				%				\mathbf{A}_{p+1} &= \psi(\pi,\mathbf{b},\mathbf{d}_{p+1}) \\
				%				\mathbf{A}_{p+2} &= \psi(\pi,\mathbf{b},\mathbf{d}_{p+2}) \\
				%				&~~\vdots \\
				%				\mathbf{A}_{2p} &= \psi(\pi,\mathbf{b},\mathbf{d}_{2p}) \\
				%			\end{aligned}	
			%		\end{cases}
		%	\end{equation*}
	where $\{\mathbf{d}_{1},\cdots,\mathbf{d}_{p}\},\{\mathbf{d}_{p+1},\cdots,\mathbf{d}_{2p}\}$ denote two sets of vectors each of which satisfy the conditions in Theorem \ref{th_coherence_Lp}. Let $\mathbf{\Phi}$ {\red{be the spreading sequence set}} generated by $\mathcal{M}$, as defined in (\ref{phimatrix}). Then {\red{$\mathbf{\Phi}$ is a sequence set with $2p^{m+1}$ sequences, each of length $p^m$, and}} ${\rm PAPR}(\mathbf{\Phi})\leq p$, $\mu(\mathbf{\Phi}) \leq \sqrt{\frac{p}{M}}$.
\end{theorem}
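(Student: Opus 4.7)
The plan is to verify the three claims of the theorem (cardinality of $\mathbf{\Phi}$, PAPR bound, and coherence bound) by reducing each to a statement already established in the excerpt. The cardinality is immediate: each $\mathbf{\Phi}_{\mathbf{A}_i}$ contributes $M = p^m$ sequences and there are $2p$ such blocks, so $\mathbf{\Phi}$ contains $2p \cdot p^m = 2p^{m+1}$ sequences of length $p^m$. The PAPR bound is equally direct: every $\mathbf{A}_i$ lies in $\mathcal{A}_p$, so each column of $\mathbf{\Phi}_{\mathbf{A}_i}$ is of the form $\mathbf{s}_{\mathbf{A}_i}^{(c)}$, which by the discussion preceding Example~\ref{example_4_1} (building on Lemma~\ref{p-CSS} and Lemma~\ref{lemma_upperboundPAPR_of_CSS}) has PAPR at most $p$.

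The heart of the proof is the coherence bound. By Theorem~\ref{coherence}, it suffices to show that $r_{\min} := \min_{i \neq j} \text{rank}_p(\mathbf{Q}_{i,j}) \geq m - 1$, since then $\mu(\mathbf{\Phi}) \leq 1/\sqrt{p^{m-1}} = \sqrt{p/M}$. I would split the argument into three cases according to which of the two ``blocks'' the indices $i,j$ belong to.

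Case~1: $1 \leq i < j \leq p$. Here $\mathbf{A}_i - \mathbf{A}_j = \psi(\pi,\mathbf{0},\mathbf{d}_i - \mathbf{d}_j)$ is diagonal, and by the hypothesis on $\{\mathbf{d}_1,\ldots,\mathbf{d}_p\}$ inherited from Theorem~\ref{th_coherence_Lp}, every diagonal entry is nonzero. Hence $\mathbf{Q}_{i,j} = 2(\mathbf{A}_i - \mathbf{A}_j)$ has full rank $m$. Case~2: $p+1 \leq i < j \leq 2p$ is identical with $\mathbf{b}$ in place of $\mathbf{a}$. Case~3 is the main point: $1 \leq i \leq p < j \leq 2p$. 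Then
\begin{equation}
\mathbf{A}_i - \mathbf{A}_j = \psi(\pi, \mathbf{a} - \mathbf{b}, \mathbf{d}_i - \mathbf{d}_j),
\end{equation}
and by the hypothesis $a_k - b_k \neq 0$ for every $k$, this difference again belongs to $\mathcal{A}_p$. Lemma~\ref{lemma_rank_calA} therefore applies and gives $\text{rank}_p(\mathbf{Q}_{i,j}) \geq m - 1$.

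The only step that requires any care is Case~3, since it is exactly where the bound can drop from $m$ to $m-1$ (and hence where the factor $p$ in $\mu(\mathbf{\Phi}) \leq \sqrt{p/M}$ originates). The key observation making Case~3 go through cleanly is that the difference of two matrices from $\mathcal{A}_p$ is itself in $\mathcal{A}_p$ provided their off-diagonal skeletons $\mathbf{a}$ and $\mathbf{b}$ have no common coordinate, so that Lemma~\ref{lemma_rank_calA} can be invoked. Combining the three cases yields $r_{\min} \geq m - 1$, which completes the proof.
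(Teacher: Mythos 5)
Your proposal is correct and follows essentially the same route as the paper's proof: the same three-case decomposition of the index pairs, the full-rank diagonal argument within each block inherited from Theorem~\ref{th_coherence_Lp}, and the observation that $\mathbf{A}_i-\mathbf{A}_j=\psi(\pi,\mathbf{a}-\mathbf{b},\mathbf{d}_i-\mathbf{d}_j)\in\mathcal{A}_p$ in the cross-block case so that Lemma~\ref{lemma_rank_calA} yields $\operatorname{rank}_p(\mathbf{Q}_{i,j})\geq m-1$. You correctly identify the cross-block case as the source of the factor $p$ in the coherence bound, which matches the paper's reasoning exactly.
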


	\begin{proof}	
		From Lemma \ref{p-CSS}, we have ${\rm PAPR}(\mathbf{\Phi}_{\mathbf{A}_{i}})\leq p$ for any $1\leq i\leq 2p$, hence, eventually we have ${\rm PAPR}(\mathbf{\Phi})\leq p$.
		
		By Theorem \ref{coherence}, the coherence of $\mathbf{\Phi}$ depends on the minimum rank of $\mathbf{Q}_{i,j} = (\mathbf{A}_{i}-\mathbf{A}_{j})+(\mathbf{A}_{i}-\mathbf{A}_{j})^{T}$, where $1\leq i<j\leq 2p$. Recall from the proof of Theorem \ref{th_coherence_Lp}, when $i,j\in\{1,\cdots,p\}$ or $i,j\in\{p+1,\cdots,2p\}$, $\mathbf{Q}_{i,j}$ is a diagonal matrix with full rank, i.e. $rank_{p}(\mathbf{Q}_{i,j})=m$. Next, consider the case, when $i\in\{1,\cdots,p\},j\in\{p+1,\cdots,2p\}$, then we have
		\begin{equation}
			\begin{split}
				\mathbf{A}_{i}-\mathbf{A}_{j}&=\psi(\pi,\mathbf{a},\mathbf{d}_{i})-\psi(\pi,\mathbf{b},\mathbf{d}_{j})\\
				&=\psi(\pi,\mathbf{a}-\mathbf{b},\mathbf{d}_{i}-\mathbf{d}_{j}).
			\end{split}
		\end{equation}
		Since $a_{i}-b_{i}\not=0(1\leq i\leq m-1)$,  we have $\mathbf{A}_{i}-\mathbf{A}_{j}\in\mathcal{A}_{p}$. By Lemma \ref{lemma_rank_calA}, $rank_{p}(\mathbf{Q}_{i,j})\geq m-1$. Recalling $M=p^m$, using Theorem \ref{coherence}, we get $\mu(\mathbf{\Phi}) \leq \sqrt{\frac{p}{M}}$.
\end{proof}

%{\color{blue}From Construction \ref{construction2},
%	\begin{equation*}
%		\begin{split}
%			\mathbf{Q}_{i,j} &= \mathbf{A}_{i}-\mathbf{A}_{j} + (\mathbf{A}_{i}-\mathbf{A}_{j})^{T}\\
%			&=\psi(\pi,\mathbf{a},\mathbf{d}_{i})-\psi(\pi,\mathbf{b},\mathbf{d}_{j}) + (\psi(\pi,\mathbf{a},\mathbf{d}_{i})-\psi(\pi,\mathbf{b},\mathbf{d}_{j}))^{T}\\
%			&=\psi(\pi,\mathbf{a}-\mathbf{b},\mathbf{d}_{i}-\mathbf{d}_{j}) + \psi(\pi,\mathbf{a}-\mathbf{b},\mathbf{d}_{i}-\mathbf{d}_{j})^{T}.
%		\end{split}
%	\end{equation*}
%	Since $a_{i}-b_{i}\not=0(1\leq i\leq m-1)$,  we have $\psi(\pi,\mathbf{a}-\mathbf{b},\mathbf{d}_{i}-\mathbf{d}_{j})\in\mathcal{A}_{p}$. By Lemma \ref{lemma_rank_calA}, $rank_{p}(\mathbf{Q}_{i,j})\geq m-1$.}

\begin{theorem}\label{construction3}
	Let $\pi$ be a permutation over $\{1,\ldots,m\}$.
	%Define
	Let $\pi^{\tau}$ {\red{denote}} the cyclic $\tau~(\tau>0)$ shift of $\pi$,  where $\pi^{\tau}(i) = \pi([i+\tau]_{m})$ with
	\begin{equation}
		[k]_{m} = \begin{cases}
			m, & \text{if $m\mid k$};\\
			%			k, &\text{if~} k< m;\\
			v,& \text{if~} k=sm+v,v\not=0.
		\end{cases}
	\end{equation}
	Let $\mathbf{a}=(a_{1},\ldots,a_{m-1}),\mathbf{b}=(b_{1},\ldots,b_{m-1}) \in \mathbb{F}_{p}^{m-1}$, which satisfy the following two conditions:
	
	(1) $\forall i\in \{1,\cdots,m-1\}$, $a_{i}\not=0,b_{i}\not=0$;
	
	(2) there exist one $t\in\{1,\cdots,m-1\}\setminus\{m-\tau\}$, such that $\forall i\in\{1,\cdots,m-1\}\setminus\{m-\tau,t\}$,$b_{i}-a_{[i+\tau]_{m}}\not=0$ and $b_t-a_{[t+\tau]_{m}}=0$.
	Let $\mathcal{M}=\{\mathbf{A}_{1},\cdots,\mathbf{A}_{2p}\}$ with 
	\begin{equation}
		\mathbf{A}_{i}=\begin{cases}
			\psi(\pi,\mathbf{a},\mathbf{d}_{i}), &1\leq i\leq p,\\
			\psi({\red{\pi^{\tau}}},\mathbf{b},\mathbf{d}_{i}), &p+1\leq i\leq 2p,
		\end{cases}
	\end{equation}
	%	\begin{equation*}
		%		\begin{cases}
			%			\begin{aligned}
				%				\mathbf{A}_{1} &= \psi(\pi,\mathbf{a},\mathbf{d}_{1}) \\
				%				\mathbf{A}_{2} &= \psi(\pi,\mathbf{a},\mathbf{d}_{2}) \\
				%				&~~\vdots \\
				%				\mathbf{A}_{p} &= \psi(\pi,\mathbf{a},\mathbf{d}_{p}) \\
				%				\mathbf{A}_{p+1} &= \psi(\pi^{\tau},\mathbf{b},\mathbf{d}_{p+1}) \\
				%				\mathbf{A}_{p+2} &= \psi(\pi^{\tau},\mathbf{b},\mathbf{d}_{p+2}) \\
				%				&~~\vdots \\
				%				\mathbf{A}_{2p} &= \psi(\pi^{\tau},\mathbf{b},\mathbf{d}_{2p}) \\
				%			\end{aligned}	
			%		\end{cases}
		%	\end{equation*}
	where $\{\mathbf{d}_{1},\cdots,\mathbf{d}_{p}\},\{\mathbf{d}_{p+1},\cdots,\mathbf{d}_{2p}\}$ denote two sets of vectors each of which satisfy the conditions in Theorem \ref{th_coherence_Lp}. Let $\mathbf{\Phi}$ {\red{be the spreading sequence set}} generated by $\mathcal{M}$, as defined in (\ref{phimatrix}). Then {\red{$\mathbf{\Phi}$ is a sequence set with $2p^{m+1}$ sequences, each of length $p^m$, and}} ${\rm PAPR}(\mathbf{\Phi})\leq p$, and $\mu(\mathbf{\Phi}) \leq \sqrt{\frac{p}{M}}$.
\end{theorem}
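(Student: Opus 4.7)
The plan is to mirror the proof of Theorem \ref{construction2}. The PAPR bound is immediate: each $\mathbf{A}_i \in \mathcal{M}$ fits the template of Lemma \ref{p-CSS}, so every column of $\mathbf{\Phi}_{\mathbf{A}_i}$ belongs to some $(p,p^m)$-CS, and Lemma \ref{lemma_upperboundPAPR_of_CSS} then forces ${\rm PAPR}(\mathbf{\Phi}) \leq p$. For the coherence, by Theorem \ref{coherence} it suffices to prove $r_{min} = \min_{i \ne j} rank_p(\mathbf{Q}_{i,j}) \geq m-1$, and one splits into the \emph{within-block} and \emph{cross-block} cases. In the within-block case ($i,j$ both in $\{1,\ldots,p\}$ or both in $\{p+1,\ldots,2p\}$), the chain parts of $\mathbf{A}_i$ and $\mathbf{A}_j$ are identical, so $\mathbf{A}_i - \mathbf{A}_j$ is diagonal, and the hypothesis on $\{\mathbf{d}_1,\ldots,\mathbf{d}_p\}$ (respectively $\{\mathbf{d}_{p+1},\ldots,\mathbf{d}_{2p}\}$) together with the argument in Theorem \ref{th_coherence_Lp} gives $rank_p(\mathbf{Q}_{i,j})=m$.

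The main work is the cross-block case $i\in\{1,\ldots,p\}$, $j\in\{p+1,\ldots,2p\}$, where the two chains use different permutations $\pi$ and $\pi^{\tau}$ and so $\mathbf{A}_i-\mathbf{A}_j$ no longer belongs to $\mathcal{A}_p$. The plan is to catalogue the off-diagonal positions of $\mathbf{C} = \mathbf{A}_i - \mathbf{A}_j$ by unpacking $\pi^{\tau}(l)=\pi([l+\tau]_m)$: only the single index $l=m-\tau$ sends the pair $(l,l+1)$ across the modular boundary, so the $\pi^{\tau}$-chain visits the positions $(\pi(k),\pi(k+1))$ for $k\in\{1,\ldots,m-1\}\setminus\{\tau\}$ together with the wrap-around $(\pi(m),\pi(1))$. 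Comparing with the chain of $\mathbf{A}_i$, the non-zero off-diagonal entries of $\mathbf{C}$ are $a_\tau$ at $(\pi(\tau),\pi(\tau+1))$ (covered only by $\pi$), $a_k - b_{[k-\tau]_m}$ at the $m-2$ shared positions indexed by $k \in \{1,\ldots,m-1\}\setminus\{\tau\}$, and $-b_{m-\tau}$ at $(\pi(m),\pi(1))$ (covered only by $\pi^{\tau}$). Condition (1) guarantees $a_\tau\ne 0$ and $b_{m-\tau}\ne 0$; condition (2) guarantees that exactly one of the shared entries vanishes, at the position indexed by $k=t^*:=[t+\tau]_m$, while all others are non-zero. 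Viewing the positions of the non-zero off-diagonals of $\mathbf{Q}_{i,j}=\mathbf{C}+\mathbf{C}^T$ as edges on the vertex set $\{\pi(1),\ldots,\pi(m)\}$, one obtains exactly the $m$-cycle $\pi(1)-\pi(2)-\cdots-\pi(m)-\pi(1)$ with the single edge $(\pi(t^*),\pi(t^*+1))$ removed, i.e., a Hamiltonian path with $m-1$ non-zero edge weights.

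Relabeling the rows and columns of $\mathbf{Q}_{i,j}$ in the order of vertices along this Hamiltonian path produces a symmetric tridiagonal matrix whose $m-1$ super-diagonal entries are all non-zero. This permuted matrix may be written as $\mathbf{B}+\mathbf{B}^T$ with $\mathbf{B}\in\mathcal{A}_p$ (chosen with the identity permutation), so Lemma \ref{lemma_rank_calA} gives rank $\geq m-1$; since conjugation by a permutation matrix preserves rank, $rank_p(\mathbf{Q}_{i,j})\geq m-1$. Combining with the within-block case, $r_{min}\geq m-1$, and Theorem \ref{coherence} yields $\mu(\mathbf{\Phi})\leq 1/\sqrt{p^{m-1}}=\sqrt{p/M}$. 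The main obstacle is the bookkeeping in the cross-block case: tracking how the shifted chain of $\pi^{\tau}$ overlaps the original chain of $\pi$ and verifying, via conditions (1) and (2), that the graph of non-zero off-diagonals is exactly a Hamiltonian path rather than a disconnected graph. Once this is in place the reduction to Lemma \ref{lemma_rank_calA} via the path-ordering permutation is essentially automatic.
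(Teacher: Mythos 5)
Your proposal is correct and follows essentially the same route as the paper: PAPR via Lemma \ref{p-CSS} and Lemma \ref{lemma_upperboundPAPR_of_CSS}, within-block pairs handled as in Theorem \ref{th_coherence_Lp}, and the cross-block pairs reduced to Lemma \ref{lemma_rank_calA} by observing that the $\pi$-chain and the $\pi^{\tau}$-chain together trace the $m$-cycle on $\{\pi(1),\dots,\pi(m)\}$, that conditions (1) and (2) kill exactly the one shared edge indexed by $[t+\tau]_m$, and that the surviving Hamiltonian path can be re-indexed into a matrix of the form $\psi(\cdot,\mathbf{c},\hat{\mathbf{d}})+\psi(\cdot,\mathbf{c},\hat{\mathbf{d}})^{T}$ with all $c_k\neq 0$. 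The paper carries out the same bookkeeping by explicitly re-indexing the quadratic form in (\ref{psi_tat_plus_t}) rather than in graph language, but the substance is identical.
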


%Suppose that $\mathcal{M}$ is constructed via Construction \ref{construction2} or Construction \ref{construction3}, we can get the following theorem.
%
%\begin{theorem}\label{coherence_of_L2p}
%	Suppose that $\mathcal{M}=\{\mathbf{A}_{1},\cdots,\mathbf{A}_{2p}\}$ is the set of matrices constructed by Construction \ref{construction2} or Construction \ref{construction3}, and the sequence set $\mathbf{\Phi}$ is generated by $\mathcal{M}$. Then ${\rm PAPR}(\mathbf{\Phi})\leq p$, and $\mu(\mathbf{\Phi}) \leq \sqrt{\frac{p}{M}}$.
%	%	\begin{equation*}
%		%		\mu(\mathbf{\Phi}) \leq \sqrt{\frac{p}{M}}.
%		%	\end{equation*} 
%\end{theorem}
%
%To prove the Theorem, we need the following Lemma.
%
%
%\textcolor{red}{AAAA}Now we give the proof of Theorem \ref{coherence_of_L2p}.
\begin{proof}	
	From Lemma \ref{p-CSS}, we have ${\rm PAPR}(\mathbf{\Phi}_{\mathbf{A}_{i}})\leq p$ for any $1\leq i\leq 2p$, hence, eventually we have ${\rm PAPR}(\mathbf{\Phi})\leq p$.
	
	By Theorem \ref{coherence}, the coherence of $\mathbf{\Phi}$ is depend on the minimum rank of $\mathbf{Q}_{i,j} = (\mathbf{A}_{i}-\mathbf{A}_{j})+(\mathbf{A}_{i}-\mathbf{A}_{j})^{T}$, where $1\leq i<j\leq 2p$. Recall from the proof of Theorem \ref{th_coherence_Lp}, when $i,j\in\{1,\cdots,p\}$ or $i,j\in\{p+1,\cdots,2p\}$, $\mathbf{Q}_{i,j}$ is a diagonal matrix with full rank, i.e. $rank_{p}(\mathbf{Q}_{i,j})=m$. Next, consider the case, when $i\in\{1,\cdots,p\},j\in\{p+1,\cdots,2p\}$,  we have $\mathbf{A}_{i}=\psi(\pi,\mathbf{a},\mathbf{d}_{i}),\mathbf{A}_{j}=\psi(\pi^{\tau},\mathbf{b},\mathbf{d}_{j})$, then the quadratic form corresponding to $\mathbf{Q}_{i,j}$ can be expressed as follows:
	\begin{equation}\label{psi_tat_plus_t}
		\begin{split}
			\mathbf{x}\mathbf{Q}_{i,j}\mathbf{x}^{T} &= \mathbf{x}\mathbf{A}_{i}\mathbf{x}^{T}-\mathbf{x}\mathbf{A}_{j}\mathbf{x}^{T}+\mathbf{x}\mathbf{A}_{i}^{T}\mathbf{x}^{T}-\mathbf{x}\mathbf{A}_{j}^{T}\mathbf{x}^{T}\\
			&=2 \big(\sum_{k=1}^{m-1} a_{k} x_{\pi(k)}x_{\pi(k+1)}+ \sum_{k=1}^{m} d_{i,k} x_{k}^{2}\\ & -\sum_{k=1}^{m-1} b_{k} x_{\pi^{\tau}(k)}x_{\pi^{\tau}(k+1)}-\sum_{k=1}^{m} d_{j,k} x_{k}^{2}\big)\\
			&=2\big(\sum_{k=1}^{m-1} c_{k}x_{\pi([\tau+t+k]_m)}x_{\pi([\tau+t+k+1]_m)} + \sum_{k=1}^{m} \hat{d}_{k} x_{k}^{2}\big)\\
			&=\mathbf{x}(\psi({\red{\pi^{\prime}}},\mathbf{c},\hat{\mathbf{d}})+\psi({\red{\pi^{\prime}}},\mathbf{c},\hat{\mathbf{d}})^{T})\mathbf{x}^{T},
		\end{split}	
	\end{equation}
	%$\hat{\mathbf{d}}=\mathbf{d}_{i}-\mathbf{d}_{j}$
	where {\red{$\pi^{\prime}(i)=i$ for $1\leq i\leq m$}}. The second to last identity followed by $\hat{d}_{k}=d_{i,\pi^{t+\tau}(k)}-d_{j,\pi^{t+\tau}(k)}$ and
	\begin{equation}
		c_{k}= \begin{cases}
			a_{\tau}, & \text{if~} [k+t+\tau]_{m}=\tau;\\
			-b_{m-\tau}, & \text{if~} [k+t+\tau]_{m}=m;\\
			a_{[k+t+\tau]_{m}}-b_{[k+t]_{m}}, & \text{other cases.} 
		\end{cases}
	\end{equation}
	Since $\mathbf{Q}_{i,j}=\psi({\red{\pi^{\prime}}},\mathbf{c},\hat{\mathbf{d}})+\psi({\red{\pi^{\prime}}},\mathbf{c},\hat{\mathbf{d}})^{T}$ with $\psi({\red{\pi^{\prime}}},\mathbf{c},\hat{\mathbf{d}})\in\mathcal{A}_{p}$, by Lemma \ref{lemma_rank_calA}, $rank_{p}(\mathbf{Q}_{i,j})\geq m-1$.

The proof is completed using Theorem \ref{coherence}, since $r_{min}=\min\limits_{1\leq i\not=j \leq L} rank_{p}(\mathbf{Q}_{i,j})\geq m-1$.
\end{proof}

{\red{\begin{remark}\label{number_of_phi_2}
			Using the same notation as in Remark \ref{number_of_phi}, for a given permutation $\pi$ and $\mathbf{a},\mathbf{b}\in \mathbf{F}_{p}^{m-1}$, we have
			\begin{equation}
				\mathcal{N}_{\mathbf{\Phi}}=(p!)^{2m-2}
			\end{equation}
			where $\mathbf{\Phi}$ is defined in Theorem \ref{construction2} and Theorem \ref{construction3}.
\end{remark}}}

\begin{example}\label{example1}
	Let $p=3,m=4$, $\pi=[1,2,3,4]$, $\mathbf{a}=(1,2,2),\mathbf{b} = (2,1,1)$, and $\mathbf{d}_{1}=\mathbf{d}_{4}=(0,0,0,0),\mathbf{d}_{2}=\mathbf{d}_{5}=(1,1,1,1),\mathbf{d}_{3}=\mathbf{d}_{6}=(2,2,2,2)$. According to Theorem \ref{construction2} and (\ref{eq_matrix}),  we get the following six matrices:
	
	{\small \begin{equation*}
			\begin{split}
				&	\mathbf{A}_1=\left(\begin{array}{cccc}
					0 & 1 & 0 & 0\\
					0 & 0 & 2 & 0\\
					0 & 0 & 0 & 2\\
					0 & 0 & 0 & 0
				\end{array}\right),
				\mathbf{A}_2=\left(\begin{array}{cccc}
					1 & 1 & 0 & 0\\
					0 & 1 & 2 & 0\\
					0 & 0 & 1 & 2\\
					0 & 0 & 0 & 1
				\end{array}\right),
				\mathbf{A}_3=\left(\begin{array}{cccc}
					2 & 1 & 0 & 0\\
					0 & 2 & 2 & 0\\
					0 & 0 & 2 & 2\\
					0 & 0 & 0 & 2
				\end{array}\right),\\ &\mathbf{A}_4=\left(\begin{array}{cccc}
					0 & 2 & 0 & 0\\
					0 & 0 & 1 & 0\\
					0 & 0 & 0 & 1\\
					0 & 0 & 0 & 0
				\end{array}\right),
				\mathbf{A}_5=\left(\begin{array}{cccc}
					1 & 2 & 0 & 0\\
					0 & 1 & 1 & 0\\
					0 & 0 & 1 & 1\\
					0 & 0 & 0 & 1
				\end{array}\right),
				\mathbf{A}_6=\left(\begin{array}{cccc}
					2 & 2 & 0 & 0\\
					0 & 2 & 1 & 0\\
					0 & 0 & 2 & 1\\
					0 & 0 & 0 & 2
				\end{array}\right).
			\end{split}
	\end{equation*}}
	Here, $rank_{3}(\mathbf{Q}_{i,j})=4$ for any $i\not=j$. 	
	Let $\mathbf{\Phi}=[\mathbf{\Phi}_{\mathbf{A}_1},\mathbf{\Phi}_{\mathbf{A}_2},\mathbf{\Phi}_{\mathbf{A}_3},\mathbf{\Phi}_{\mathbf{A}_4},\mathbf{\Phi}_{\mathbf{A}_5},\mathbf{\Phi}_{\mathbf{A}_6}]$. Then, ${\rm PAPR}({\mathbf{\Phi}})=\textcolor{black}{2.8738}< 3$ and $\mu(\mathbf{\Phi})=\frac{1}{3^2}=\frac{1}{9}$, which is optimum. Note that each of $\mathbf{\Phi}_{\mathbf{A}_i}$ is of size $3^4\times 3^4$. The first, second and last columns of each of the $\mathbf{\Phi}_{\mathbf{A}_i}$ for $1\leq i \leq 6$ are given in Table \ref{tabseq}, the elements are power of $\omega_3$.

%	% TODO: \usepackage{graphicx} required
%	\begin{figure}[tbph]
%		\centering
%		\includegraphics[width=0.85\linewidth]{example_papr_4_2_p=3_m=4.eps}
%		\caption{The PAPR of $\mathbf{\Phi}$ in Example \ref{example1}}
%		\label{fig:examplepapr42p3m4}
%	\end{figure}

\begin{table}[]
	\centering
	\caption{The spreading sequences corresponding to the matrices $\mathbf{A}_i$ for $1\leq i \leq 6$, in Example \ref{example1}\label{tabseq}.}
	\scalebox{0.7}{
		\begin{tabular}{|cccc|cccc|cccc|cccc|cccc|cccc|}
			\hline
			\multicolumn{4}{|c|}{$\mathbf{\Phi}_{\mathbf{A}_{1}}$}                                                                                                                                                                                                                                                                                                                                                                                                                                                                                                                                                                                                                                                                                                                                                                                                                                                                                                                                                                                                                                                                                       & \multicolumn{4}{c|}{$\mathbf{\Phi}_{\mathbf{A}_2}$}                                                                                                                                                                                                                                                                                                                                                                                                                                                                                                                                                                                                                                                                                                                                                                                                                                                                                                                                                                                                                                                                                          & \multicolumn{4}{c|}{$\mathbf{\Phi}_{\mathbf{A}_3}$}                                                                                                                                                                                                                                                                                                                                                                                                                                                                                                                                                                                                                                                                                                                                                                                                                                                                                                                                                                                                                                                                                          & \multicolumn{4}{c|}{$\mathbf{\Phi}_{\mathbf{A}_4}$}                                                                                                                                                                                                                                                                                                                                                                                                                                                                                                                                                                                                                                                                                                                                                                                                                                                                                                                                                                                                                                                                                          & \multicolumn{4}{c|}{$\mathbf{\Phi}_{\mathbf{A}_5}$}                                                                                                                                                                                                                                                                                                                                                                                                                                                                                                                                                                                                                                                                                                                                                                                                                                                                                                                                                                                                                                                                                          & \multicolumn{4}{c|}{$\mathbf{\Phi}_{\mathbf{A}_6}$}                                                                                                                                                                                                                                                                                                                                                                                                                                                                                                                                                                                                                                                                                                                                                                                                                                                                                                                                                                                                                                                                                          \\ \hline
			\begin{tabular}[c]{@{}c@{}}0\\0\\0\\0\\1\\2\\0\\2\\1\\0\\0\\0\\2\\0\\1\\1\\0\\2\\0\\0\\0\\1\\2\\0\\2\\1\\0\\0\\0\\0\\0\\1\\2\\0\\2\\1\\2\\2\\2\\1\\2\\0\\0\\2\\1\\1\\1\\1\\2\\0\\1\\0\\2\\1\\0\\0\\0\\0\\1\\2\\0\\2\\1\\1\\1\\1\\0\\1\\2\\2\\1\\0\\2\\2\\2\\0\\1\\2\\1\\0\\2
			\end{tabular} & \begin{tabular}[c]{@{}c@{}}0\\1\\2\\0\\2\\1\\0\\0\\0\\0\\1\\2\\2\\1\\0\\1\\1\\1\\0\\1\\2\\1\\0\\2\\2\\2\\2\\0\\1\\2\\0\\2\\1\\0\\0\\0\\2\\0\\1\\1\\0\\2\\0\\0\\0\\1\\2\\0\\2\\1\\0\\0\\0\\0\\0\\1\\2\\0\\2\\1\\0\\0\\0\\1\\2\\0\\0\\2\\1\\2\\2\\2\\2\\0\\1\\0\\2\\1\\1\\1\\1
			\end{tabular} & $\hdots$ & \begin{tabular}[c]{@{}c@{}}0\\2\\1\\2\\2\\2\\1\\2\\0\\2\\1\\0\\0\\0\\0\\1\\2\\0\\1\\0\\2\\1\\1\\1\\1\\2\\0\\2\\1\\0\\1\\1\\1\\0\\1\\2\\0\\2\\1\\1\\1\\1\\2\\0\\1\\1\\0\\2\\1\\1\\1\\1\\2\\0\\1\\0\\2\\0\\0\\0\\2\\0\\1\\1\\0\\2\\2\\2\\2\\0\\1\\2\\1\\0\\2\\1\\1\\1\\1\\2\\0
			\end{tabular} & \begin{tabular}[c]{@{}c@{}}0\\1\\1\\1\\0\\1\\1\\1\\0\\1\\2\\2\\1\\0\\1\\0\\0\\2\\1\\2\\2\\0\\2\\0\\1\\1\\0\\1\\2\\2\\2\\1\\2\\2\\2\\1\\1\\2\\2\\1\\0\\1\\0\\0\\2\\0\\1\\1\\2\\1\\2\\0\\0\\2\\1\\2\\2\\2\\1\\2\\2\\2\\1\\0\\1\\1\\0\\2\\0\\2\\2\\1\\1\\2\\2\\0\\2\\0\\1\\1\\0
				
			\end{tabular} & \begin{tabular}[c]{@{}c@{}}0\\2\\0\\1\\1\\0\\1\\2\\2\\1\\0\\1\\1\\1\\0\\0\\1\\1\\1\\0\\1\\0\\0\\2\\1\\2\\2\\1\\0\\1\\2\\2\\1\\2\\0\\0\\1\\0\\1\\1\\1\\0\\0\\1\\1\\0\\2\\0\\2\\2\\1\\0\\1\\1\\1\\0\\1\\2\\2\\1\\2\\0\\0\\0\\2\\0\\0\\0\\2\\2\\0\\0\\1\\0\\1\\0\\0\\2\\1\\2\\2
				
			\end{tabular} & $\hdots$ & \begin{tabular}[c]{@{}c@{}}0\\0\\2\\0\\1\\1\\2\\1\\2\\0\\0\\2\\2\\0\\0\\0\\2\\0\\2\\2\\1\\0\\1\\1\\0\\2\\0\\0\\0\\2\\0\\1\\1\\2\\1\\2\\2\\2\\1\\1\\2\\2\\2\\1\\2\\0\\0\\2\\1\\2\\2\\1\\0\\1\\2\\2\\1\\2\\0\\0\\1\\0\\1\\0\\0\\2\\2\\0\\0\\0\\2\\0\\0\\0\\2\\1\\2\\2\\1\\0\\1
				
			\end{tabular} & \begin{tabular}[c]{@{}c@{}}0\\2\\2\\2\\2\\0\\2\\0\\2\\2\\1\\1\\0\\0\\1\\2\\0\\2\\2\\1\\1\\2\\2\\0\\0\\1\\0\\2\\1\\1\\1\\1\\2\\1\\2\\1\\0\\2\\2\\1\\1\\2\\0\\1\\0\\2\\1\\1\\2\\2\\0\\0\\1\\0\\2\\1\\1\\1\\1\\2\\1\\2\\1\\2\\1\\1\\0\\0\\1\\2\\0\\2\\0\\2\\2\\0\\0\\1\\1\\2\\1
				
			\end{tabular} & \begin{tabular}[c]{@{}c@{}}0\\0\\1\\2\\0\\2\\2\\1\\1\\2\\2\\0\\0\\1\\0\\2\\1\\1\\2\\2\\0\\2\\0\\2\\0\\2\\2\\2\\2\\0\\1\\2\\1\\1\\0\\0\\0\\0\\1\\1\\2\\1\\0\\2\\2\\2\\2\\0\\2\\0\\2\\0\\2\\2\\2\\2\\0\\1\\2\\1\\1\\0\\0\\2\\2\\0\\0\\1\\0\\2\\1\\1\\0\\0\\1\\0\\1\\0\\1\\0\\0
				
			\end{tabular} & $\hdots$ & \begin{tabular}[c]{@{}c@{}}0\\1\\0\\1\\0\\0\\0\\0\\1\\1\\2\\1\\1\\0\\0\\2\\2\\0\\0\\1\\0\\2\\1\\1\\2\\2\\0\\1\\2\\1\\2\\1\\1\\1\\1\\2\\1\\2\\1\\1\\0\\0\\2\\2\\0\\2\\0\\2\\1\\0\\0\\1\\1\\2\\0\\1\\0\\1\\0\\0\\0\\0\\1\\2\\0\\2\\2\\1\\1\\0\\0\\1\\2\\0\\2\\1\\0\\0\\1\\1\\2
				
			\end{tabular} & \begin{tabular}[c]{@{}c@{}}0\\0\\0\\0\\2\\1\\0\\1\\2\\0\\0\\0\\1\\0\\2\\2\\0\\1\\0\\0\\0\\2\\1\\0\\1\\2\\0\\0\\0\\0\\0\\2\\1\\0\\1\\2\\1\\1\\1\\2\\1\\0\\0\\1\\2\\2\\2\\2\\1\\0\\2\\0\\1\\2\\0\\0\\0\\0\\2\\1\\0\\1\\2\\2\\2\\2\\0\\2\\1\\1\\2\\0\\1\\1\\1\\0\\2\\1\\2\\0\\1
				
			\end{tabular} & \begin{tabular}[c]{@{}c@{}}0\\1\\2\\0\\0\\0\\0\\2\\1\\0\\1\\2\\1\\1\\1\\2\\1\\0\\0\\1\\2\\2\\2\\2\\1\\0\\2\\0\\1\\2\\0\\0\\0\\0\\2\\1\\1\\2\\0\\2\\2\\2\\0\\2\\1\\2\\0\\1\\1\\1\\1\\0\\2\\1\\0\\1\\2\\0\\0\\0\\0\\2\\1\\2\\0\\1\\0\\0\\0\\1\\0\\2\\1\\2\\0\\0\\0\\0\\2\\1\\0
				
			\end{tabular} & $\hdots$ & \begin{tabular}[c]{@{}c@{}}0\\2\\1\\2\\0\\1\\1\\1\\1\\2\\1\\0\\2\\0\\1\\2\\2\\2\\1\\0\\2\\2\\0\\1\\0\\0\\0\\2\\1\\0\\1\\2\\0\\0\\0\\0\\2\\1\\0\\2\\0\\1\\2\\2\\2\\2\\1\\0\\0\\1\\2\\1\\1\\1\\1\\0\\2\\0\\1\\2\\2\\2\\2\\2\\1\\0\\2\\0\\1\\2\\2\\2\\0\\2\\1\\1\\2\\0\\2\\2\\2
				
			\end{tabular} & \begin{tabular}[c]{@{}c@{}}0\\1\\1\\1\\1\\0\\1\\0\\1\\1\\2\\2\\0\\0\\2\\1\\0\\1\\1\\2\\2\\1\\1\\0\\0\\2\\0\\1\\2\\2\\2\\2\\1\\2\\1\\2\\0\\1\\1\\2\\2\\1\\0\\2\\0\\1\\2\\2\\1\\1\\0\\0\\2\\0\\1\\2\\2\\2\\2\\1\\2\\1\\2\\1\\2\\2\\0\\0\\2\\1\\0\\1\\0\\1\\1\\0\\0\\2\\2\\1\\2
				
			\end{tabular} & \begin{tabular}[c]{@{}c@{}}0\\2\\0\\1\\2\\2\\1\\1\\0\\1\\0\\1\\0\\1\\1\\1\\1\\0\\1\\0\\1\\1\\2\\2\\0\\0\\2\\1\\0\\1\\2\\0\\0\\2\\2\\1\\0\\2\\0\\2\\0\\0\\0\\0\\2\\1\\0\\1\\1\\2\\2\\0\\0\\2\\1\\0\\1\\2\\0\\0\\2\\2\\1\\1\\0\\1\\0\\1\\1\\1\\1\\0\\0\\2\\0\\0\\1\\1\\2\\2\\1
				
			\end{tabular} & $\hdots$ & \begin{tabular}[c]{@{}c@{}}0\\0\\2\\0\\2\\0\\2\\0\\0\\0\\0\\2\\1\\0\\1\\1\\2\\2\\2\\2\\1\\1\\0\\1\\2\\0\\0\\0\\0\\2\\0\\2\\0\\2\\0\\0\\1\\1\\0\\2\\1\\2\\2\\0\\0\\1\\1\\0\\0\\2\\0\\1\\2\\2\\2\\2\\1\\2\\1\\2\\1\\2\\2\\1\\1\\0\\2\\1\\2\\2\\0\\0\\2\\2\\1\\1\\0\\1\\2\\0\\0
				
			\end{tabular} & \begin{tabular}[c]{@{}c@{}}0\\2\\2\\2\\0\\2\\2\\2\\0\\2\\1\\1\\2\\0\\2\\0\\0\\1\\2\\1\\1\\0\\1\\0\\2\\2\\0\\2\\1\\1\\1\\2\\1\\1\\1\\2\\2\\1\\1\\2\\0\\2\\0\\0\\1\\0\\2\\2\\1\\2\\1\\0\\0\\1\\2\\1\\1\\1\\2\\1\\1\\1\\2\\0\\2\\2\\0\\1\\0\\1\\1\\2\\2\\1\\1\\0\\1\\0\\2\\2\\0
				
			\end{tabular} & \begin{tabular}[c]{@{}c@{}}0\\0\\1\\2\\1\\1\\2\\0\\2\\2\\2\\0\\2\\1\\1\\0\\1\\0\\2\\2\\0\\0\\2\\2\\2\\0\\2\\2\\2\\0\\1\\0\\0\\1\\2\\1\\2\\2\\0\\2\\1\\1\\0\\1\\0\\0\\0\\1\\1\\0\\0\\0\\1\\0\\2\\2\\0\\1\\0\\0\\1\\2\\1\\0\\0\\1\\0\\2\\2\\1\\2\\1\\2\\2\\0\\0\\2\\2\\2\\0\\2
				
			\end{tabular} & $\hdots$ & \begin{tabular}[c]{@{}c@{}}0\\1\\0\\1\\1\\2\\0\\2\\2\\1\\2\\1\\0\\0\\1\\0\\2\\2\\0\\1\\0\\0\\0\\1\\1\\0\\0\\1\\2\\1\\2\\2\\0\\1\\0\\0\\0\\1\\0\\2\\2\\0\\2\\1\\1\\0\\1\\0\\0\\0\\1\\1\\0\\0\\0\\1\\0\\1\\1\\2\\0\\2\\2\\0\\1\\0\\2\\2\\0\\2\\1\\1\\1\\2\\1\\1\\1\\2\\2\\1\\1\end{tabular} \\ \hline
	\end{tabular}}
\end{table}
	%	\begin{equation*}
		%		\mathbf{d}_{1} = \\
		%		\mathbf{d}_{2} = \\
		%		\mathbf{d}_{3} = 
		%	\end{equation*} 
\end{example}

{\red{\begin{example}\label{example_4_3}
			Let $p=3,m=4$, $\pi=[2,3,1,4]$, $\mathbf{a}=(1,1,2),\mathbf{b}=(1,1,1)$, $\tau=1$ and $\mathbf{d}_{1}=(0,2,1,2),\mathbf{d}_{2}=(1,1,0,0),\mathbf{d}_{3}=(2,0,2,1),\mathbf{d}_{4}=(0,0,0,0),\mathbf{d}_{5}=(1,1,1,1),\mathbf{d}_{6}=(2,2,2,2)$. According to Theorem \ref{construction3} and (\ref{eq_matrix}), we get the following six matrices:
			\begin{equation*}
				\begin{split}
					&	\mathbf{A}_1=\left(\begin{array}{cccc}
						0 & 0 & 0 & 2\\
						0 & 2 & 1 & 0\\
						1 & 0 & 1 & 0\\
						0 & 0 & 0 & 2
					\end{array}\right),
					\mathbf{A}_2=\left(\begin{array}{cccc}
						1 & 0 & 0 & 2\\
						0 & 1 & 1 & 0\\
						1 & 0 & 0 & 0\\
						0 & 0 & 0 & 0
					\end{array}\right),
					\mathbf{A}_3=\left(\begin{array}{cccc}
						2 & 0 & 0 & 2\\
						0 & 0 & 1 & 0\\
						1 & 0 & 2 & 0\\
						0 & 0 & 0 & 1
					\end{array}\right),\\ &\mathbf{A}_4=\left(\begin{array}{cccc}
						0 & 0 & 0 & 1\\
						0 & 0 & 0 & 0\\
						1 & 0 & 0 & 0\\
						0 & 1 & 0 & 0
					\end{array}\right),
					\mathbf{A}_5=\left(\begin{array}{cccc}
						1 & 0 & 0 & 1\\
						0 & 1 & 0 & 0\\
						1 & 0 & 1 & 0\\
						0 & 1 & 0 & 1
					\end{array}\right),
					\mathbf{A}_6=\left(\begin{array}{cccc}
						2 & 0 & 0 & 1\\
						0 & 2 & 0 & 0\\
						1 & 0 & 2 & 0\\
						0 & 1 & 0 & 2
					\end{array}\right).
				\end{split}
			\end{equation*}
			Here, $rank_{3}(\mathbf{Q}_{i,j})=4$ for any $i\not=j$. 	
			Let $\mathbf{\Phi}=\frac{1}{3^{4}}[\mathbf{\Phi}_{\mathbf{A}_1},\mathbf{\Phi}_{\mathbf{A}_2},\mathbf{\Phi}_{\mathbf{A}_3},\mathbf{\Phi}_{\mathbf{A}_4},\mathbf{\Phi}_{\mathbf{A}_5},\mathbf{\Phi}_{\mathbf{A}_6}]$ be the  $3$-ary spreading matrix of size $3^{4}\times (2\cdot3^{5})$. Then, ${\rm PAPR}({\mathbf{\Phi}})=2.8795<3$ and $\mu(\mathbf{\Phi})=\sqrt{\frac{1}{3^{4}}}\approx0.1111<\sqrt{\frac{3}{3^{4}}}$, which is optimum.
\end{example}}}
\subsection{Proposed infinite families of spreading sequence sets $\mathbf{\Phi}$ with optimum coherence for $p=3$ \label{sec4_2}}
In the previous sub-section we propose several infinite families of spreading sequence matrices $\mathcal{M}$ with $r_{min}\geq m-1$. In this section, we are focused on constructing classes of $\mathcal{M}$ with $r_{min}=m$, which can be used to generate $\mathbf{\Phi}$ with optimum coherence. Based on the above constructions, several classes of $\mathcal{M}$ with $r_{min}=m$ will be proposed for $p=3$. 

\begin{theorem}\label{construction_optimum_p=3_m_even_1}
	Let $p=3$, $m\geq 4$ {\red{be}} an even integer with $m\not\equiv 2 \bmod 3$ and $\pi$ {\red{be}} a permutation over $\{1,\cdots,m\}$. Let $\mathbf{d}_{1}$ be a vector defined in $\mathbb{F}_{3}^{m}$, and for $i=2,3$ define $\mathbf{d}_{i}=\mathbf{d}_{1}+(i-1)\mathbf{1}_m$. Let $\mathbf{a},\mathbf{b}\in\mathbb{F}_{3}^{m-1}$ such that $\forall k\in\{1,\cdots,m-1\},a_{k},b_{k}\not=0, a_{k}-b_{k}\not=0$. Let $\mathcal{M} = \{\mathbf{A}_{1},\mathbf{A}_{2},\mathbf{A}_{3},\mathbf{A}_{4},\mathbf{A}_{5},\mathbf{A}_{6}\}$ with
	\begin{equation}
		\mathbf{A}_{i}=\begin{cases}
			\psi(\pi,\mathbf{a},\mathbf{d}_{i}), & 1\leq i\leq 3;\\
			\psi(\pi,\mathbf{b},\mathbf{d}_{i-3}), & 4\leq i\leq 6.
		\end{cases}
	\end{equation}
	%	 $\mathbf{A}_{i} = \psi(\pi,\mathbf{a},\mathbf{d}_{i})$ for $1\leq i\leq 3$ and $\mathbf{A}_{i} = \psi(\pi,\mathbf{b},\mathbf{d}_{i})$ for $4\leq i\leq 6$.
	Let $\mathbf{\Phi}$ be the spreading sequence set generated by $\mathcal{M}$, as defined in (\ref{phimatrix}). Then {\red{$\mathbf{\Phi}$ is a sequence set with $2\times 3^{m+1}$ sequences, each of length $3^m$, and}} ${\rm PAPR}(\mathbf{\Phi})\leq 3$, $\mu(\mathbf{\Phi})=\sqrt{\frac{1}{M}}$.
\end{theorem}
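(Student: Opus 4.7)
The PAPR bound is immediate: each $\mathbf{A}_i$ is of the form $\psi(\pi,\mathbf{a}',\mathbf{d}_\cdot)$ with the coefficients required by Lemma~\ref{p-CSS}, so every column sequence $\mathbf{s}_{\mathbf{A}_i}^{(c)}$ belongs to a $(3,3^m)$-CS, whence Lemma~\ref{lemma_upperboundPAPR_of_CSS} gives PAPR${}\le 3$. The size is a direct count: six matrices, each producing a block $\mathbf{\Phi}_{\mathbf{A}_i}$ with $M=3^m$ columns, for a total of $6\cdot 3^m=2\cdot 3^{m+1}$ sequences of length $3^m$. All the work lies in verifying $\mu(\mathbf{\Phi})=\sqrt{1/M}$ via Theorem~\ref{coherence}, i.e.\ $rank_3(\mathbf{Q}_{i,j})=m$ for every $1\le i<j\le 6$.

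Split the pairs $(i,j)$ into two groups. \emph{Intra-group}: both indices in $\{1,2,3\}$ or both in $\{4,5,6\}$. Here $\mathbf{A}_i-\mathbf{A}_j$ retains the same $\pi$ and the same $\mathbf{a}$ (or $\mathbf{b}$) and only the diagonal part differs, by $\mathbf{d}_i-\mathbf{d}_j=(i-j)\mathbf{1}_m$. Consequently $\mathbf{Q}_{i,j}=(\mathbf{A}_i-\mathbf{A}_j)+(\mathbf{A}_i-\mathbf{A}_j)^T$ is diagonal with each diagonal entry equal to $2(i-j)\bmod 3\neq 0$, giving full rank $m$, exactly as in the proof of Theorem~\ref{th_coherence_Lp}.

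\emph{Cross-group}: $i\in\{1,2,3\}$, $j\in\{4,5,6\}$. Then
$\mathbf{A}_i-\mathbf{A}_j=\psi(\pi,\mathbf{a}-\mathbf{b},\mathbf{d}_i-\mathbf{d}_{j-3})$,
and since every $a_k-b_k\neq 0$, this matrix lies in $\mathcal{A}_3$. The conjugation argument from the proof of Lemma~\ref{lemma_rank_calA} therefore reduces $\mathbf{Q}_{i,j}$, up to a permutation similarity, to a symmetric tridiagonal matrix $T_m$ with off-diagonal entries $a_k-b_k$ ($1\le k\le m-1$) and diagonal entries $2\bigl(d_{i,\pi(k)}-d_{j-3,\pi(k)}\bigr)=2(i-j+3)\bmod 3$, which is a \emph{single scalar} $\alpha$ independent of $k$ because $\mathbf{d}_i-\mathbf{d}_{j-3}=(i-j+3)\mathbf{1}_m$. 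The problem reduces to showing $\det T_m\not\equiv 0\pmod 3$ for every admissible choice of $\alpha\in\mathbb{F}_3$ and nonzero off-diagonals.

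The main obstacle, and the reason for the hypothesis $m\not\equiv 2\pmod 3$, is this tridiagonal determinant. I would expand $\det T_m$ along the last row to get the recurrence $D_k=\alpha D_{k-1}-(a_{k-1}-b_{k-1})^2 D_{k-2}$, and then exploit Fermat's little theorem in $\mathbb{F}_3$: for any nonzero $c\in\mathbb{F}_3$ one has $c^2=1$, so the off-diagonals drop out and the recurrence collapses to $D_k=\alpha D_{k-1}-D_{k-2}$ with $D_0=1$, $D_1=\alpha$. When $\alpha=0$ (the sub-case $i\equiv j\pmod 3$) the recurrence yields $D_{2\ell}=(-1)^\ell\neq 0$ for every even index, covering all even $m$. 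When $\alpha\in\{1,2\}$, using $\alpha^2=1$ one computes the period-$6$ sequence $1,\alpha,0,-\alpha,-1,0,1,\alpha,\ldots$, so $D_m=0$ exactly when $m\equiv 2\pmod 3$. Combining the two sub-cases, $\det T_m\neq 0$ for all cross-group pairs precisely under the stated hypotheses ($m$ even and $m\not\equiv 2\pmod 3$), giving $rank_3(\mathbf{Q}_{i,j})=m$ for every pair and hence $r_{\min}=m$. Theorem~\ref{coherence} then delivers $\mu(\mathbf{\Phi})=1/\sqrt{p^m}=\sqrt{1/M}$.
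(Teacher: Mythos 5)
Your proposal is correct and follows essentially the same route as the paper: reduce each $\mathbf{Q}_{i,j}$ to a symmetric tridiagonal matrix via Lemma~\ref{lemma_rank_calA}, use $c^2=1$ for $c\in\mathbb{F}_3^{*}$ to collapse the determinant recurrence, and invoke Theorem~\ref{coherence}. In fact your case analysis is more complete than the paper's: the paper asserts $i-j\in\mathbb{F}_3^{*}$ for every cross-group pair, which fails for $(i,j)=(1,4),(2,5),(3,6)$ where $\mathbf{d}_i=\mathbf{d}_{j-3}$ and the diagonal of $\mathbf{Q}_{i,j}$ vanishes, and it then dismisses the determinant as ``simple calculations.'' You correctly isolate that zero-diagonal sub-case (handled by $m$ even, giving $D_{2\ell}=(-1)^{\ell}$) from the nonzero-diagonal sub-case (handled by $m\not\equiv 2\pmod 3$ via the period-$6$ sequence $1,\alpha,0,-\alpha,-1,0,\dots$), thereby explaining precisely why both hypotheses on $m$ are needed.
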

%\begin{theorem}\label{th_construction_optimum_p3_1}
%	Suppose that $\mathcal{M}$ is a set in Construction \ref{construction_optimum_p=3_m_even_1}. Let $\mathbf{\Phi}$ be the spreading sequence set generated by $\mathcal{M}$, then ${\rm PAPR}(\mathbf{\Phi})\leq 3$ and $\mu(\mathbf{\Phi})=\sqrt{\frac{1}{M}}$.
%\end{theorem}
\begin{proof}
	From Lemma \ref{p-CSS}, we have ${\rm PAPR}(\mathbf{\Phi}_{\mathbf{A}_{i}})\leq 3$ for any $1\leq i\leq 6$, hence, eventually we have ${\rm PAPR}(\mathbf{\Phi})\leq 3$.

	Next, let us calculate the coherence. Since $\mathbf{A}_{i}-\mathbf{A}_{j}$ with $i,j\in\{1,2,3\}$ or $i,j\in \{4,5,6\}$ is a diagonal matrix with full rank, we only need to consider the rank of $\mathbf{Q}_{i,j}$ for $i\in\{1,2,3\},j\in\{4,5,6\}$.
	
	By Theorem \ref{construction2},  if $i\in\{1,2,3\},j\in\{4,5,6\}$, we have $\mathbf{A}_{i}-\mathbf{A}_{j}=\psi(\pi,\mathbf{c},(i-j)\mathbf{1}_{m})$ where $\mathbf{c}=\mathbf{a}-\mathbf{b}$. By Lemma \ref{lemma_rank_calA}, the rank of $\mathbf{Q}_{i,j}$ is equal to the rank of the following tridiagonal matrix:
	
	\begin{equation}
		\mathbf{D} = \left(\begin{array}{cccccc}
			2(i-j) & c_{1} & & & & \\
			c_{1} & 2(i-j) & c_{2} & & & \\
			& c_{2} & 2(i-j) & c_{3} & & \\
			& & \ddots & \ddots & \ddots & \\
			& & & c_{m-2} & 2(i-j) & c_{m-1} \\
			& & & & c_{m-1} & 2(i-j)
		\end{array}\right),
	\end{equation}	
where $\mathbf{D}=\mathbf{B}_{(i,j)}+\mathbf{B}_{(i,j)}^T$ and $\mathbf{B}_{i,j}=\psi(\pi^\prime,\mathbf{c},(i-j)\mathbf{1}_{m}) \text{ where } \pi^\prime(i)=i$ for $1\leq i \leq m$.
	Since $m\not\equiv 2\bmod 3$ and $i-j\in{\red{\mathbb{F}_{3}^{*}}}$, {\red{we obtain}} $(i-j)^2=c_{i}^{2}=1$. By simple calculations, we get $\det(\mathbf{D})\not=0$, which implies that $rank_{p}(\mathbf{Q}_{i,j})=m$ for $i\in\{1,2,3\},j\in\{4,5,6\}$. Hence, we conclude that $r_{min} = m$. 
	
	Therefore, using Theorem \ref{coherence}, $\mu(\mathbf{\Phi})=\sqrt{\frac{1}{M}}$.
\end{proof}
%
%Though the above construction presents a class of $\mathcal{M}$ with $r_{min}=m$ for $p=3$, its 
%The above theorem shows that any  $\mathcal{M}$ from Construction \ref{construction_optimum_p=3_m_even_1} can generate a $\mathbf{\Phi}$ with optimum coherence, but $m$ is constrained to a even integer with $m\equiv 2\bmod 3$.  Inspired by the above theorem, we  for $p=3$ and any integer $m\geq 2$.

{\red{\begin{example}\label{example_4_4}
			Let $p=3,m=4$, $\pi=[1,4,3,2]$, $\mathbf{a}=(2,2,2),\mathbf{b}=(1,1,1)$, and $\mathbf{d}_{1}=(0,0,0,0),\mathbf{d}_{2}=(1,1,1,1),\mathbf{d}_{3}=(2,2,2,2),\mathbf{d}_{4}=(0,0,1,0),\mathbf{d}_{5}=(1,1,2,1),\mathbf{d}_{6}=(2,2,0,2)$. According to Theorem \ref{construction_optimum_p=3_m_even_1} and (\ref{eq_matrix}), we get the following $6$ matrices:
			\begin{equation*}
				\begin{split}
					&	\mathbf{A}_1=\left(\begin{array}{cccc}
						0 & 0 & 0 & 2\\
						0 & 0 & 0 & 0\\
						0 & 2 & 0 & 0\\
						0 & 0 & 2 & 0
					\end{array}\right),
					\mathbf{A}_2=\left(\begin{array}{cccc}
						1 & 0 & 0 & 2\\
						0 & 1 & 0 & 0\\
						0 & 2 & 1 & 0\\
						0 & 0 & 2 & 1
					\end{array}\right),
					\mathbf{A}_3=\left(\begin{array}{cccc}
						2 & 0 & 0 & 2\\
						0 & 2 & 0 & 0\\
						0 & 2 & 2 & 0\\
						0 & 0 & 2 & 2
					\end{array}\right),\\ &\mathbf{A}_4=\left(\begin{array}{cccc}
						0 & 0 & 0 & 1\\
						0 & 0 & 0 & 0\\
						0 & 1 & 1 & 0\\
						0 & 0 & 1 & 0
					\end{array}\right),
					\mathbf{A}_5=\left(\begin{array}{cccc}
						1 & 0 & 0 & 1\\
						0 & 1 & 0 & 0\\
						0 & 1 & 2 & 0\\
						0 & 0 & 1 & 1
					\end{array}\right),
					\mathbf{A}_6=\left(\begin{array}{cccc}
						2 & 0 & 0 & 1\\
						0 & 2 & 0 & 0\\
						0 & 1 & 0 & 0\\
						0 & 0 & 1 & 2
					\end{array}\right).
				\end{split}
			\end{equation*}
			Here, $rank_{3}(\mathbf{Q}_{i,j})=4$ for any $i\not=j$. 	
			Let $\mathbf{\Phi}=\frac{1}{3^{4}}[\mathbf{\Phi}_{\mathbf{A}_1},\mathbf{\Phi}_{\mathbf{A}_2},\mathbf{\Phi}_{\mathbf{A}_3},\mathbf{\Phi}_{\mathbf{A}_4},\mathbf{\Phi}_{\mathbf{A}_5},\mathbf{\Phi}_{\mathbf{A}_6}]$ be the  $3$-ary spreading matrix of size $3^{4}\times (2\cdot3^{5})$. Then, ${\rm PAPR}({\mathbf{\Phi}})=2.8931<3$ and $\mu(\mathbf{\Phi})=\sqrt{\frac{1}{3^{4}}}\approx0.1111<\sqrt{\frac{3}{3^{4}}}$, which is optimum.
\end{example}}}

The above result presents a family of $\mathbf{\Phi}$ with optimum coherence for $p=3$, however $m$ is constrained to an even integer with $m\not\equiv 2\bmod 3$. Inspired by Theorem \ref{construction_optimum_p=3_m_even_1}, we find a class of $\mathcal{M}$ with $r_{min}=m$ for $p=3$ while $m\geq 2$ can be any integer. 
\begin{theorem}\label{construction_optimum_p=3_2}
	Let $p=3$ and $m=3l+u$ {\red{be}} a positive integer, $u\in\{0,1,2\}$. Let $\pi$ be a permutation over $\{1,\cdots,m\}$ and $\mathbf{d}_{1}\in\mathbb{F}_{3}^{m}$.  For $i=2,3$, define $\mathbf{d}_{i}=\mathbf{d}_{1}+(i-1)\mathbf{1}_m$. Suppose that $U$ is a index set defined as follows:
	\begin{equation}
		U=\begin{cases}
			U_{0}\cup U_{1}, &\text{if $u=0$}\\
			U_{0}\cup U_{2}
			, &\text{if $u=1$}\\
			U_{1}\cup U_{2}, &\text{if $u=2$}\\
		\end{cases}
	\end{equation}
	where
	\begin{equation}
		U_{i}=\begin{cases}
			\{l_1\mid 1\leq l_{1} \leq m,l_1 \equiv i \bmod 3\},& \text{$m$ is even};\\
			\{l_1\mid 1\leq l_{1} \leq m,l_1 \equiv i \bmod 3,l_1\equiv 1\bmod 2\},& \text{$m$ is odd}.		
		\end{cases}
	\end{equation}
	Let $s\in U$ with $\pi(s)=s'$, and $e\in\mathbb{F}_{3}\setminus\{0\}$, for $i\in\{4,5,6\}$ define $\mathbf{d}_{i}$ as follows:
	\begin{equation}\label{condition_2}
		d_{i,k}=\begin{cases}
			d_{i-3,k}, &k\in\{1,\cdots,m\}\setminus\{s'\};\\
			d_{i-3,k}+e,& k=s'.
		\end{cases}
	\end{equation}
	Let $\mathbf{a},\mathbf{b}$ satisfy the condition that $\forall k\in\{1,\cdots,m-1\}, a_{k},b_{k}\not=0,a_{k}\not=b_{k}$. Let $\mathcal{M}=\{\mathbf{A}_{1},\mathbf{A}_{2},\mathbf{A}_{3},\mathbf{A}_{4},\mathbf{A}_{5},\mathbf{A}_{6}\}$ with
	\begin{equation}
		\mathbf{A}_{i}=\begin{cases}
			\psi(\pi,\mathbf{a},\mathbf{d}_{i}), & 1\leq i\leq 3;\\
			\psi(\pi,\mathbf{b},\mathbf{d}_{i}), & 4\leq i\leq 6.
		\end{cases}
	\end{equation}
Let $\mathbf{\Phi}$ be the spreading sequence set generated by $\mathcal{M}$, as defined in (\ref{phimatrix}). Then {\red{$\mathbf{\Phi}$ is a sequence set with $2\times3^m$ sequences, each of length $3^m$, and}} ${\rm PAPR}(\mathbf{\Phi})\leq 3$, $\mu(\mathbf{\Phi})=\sqrt{\frac{1}{M}}$.
\end{theorem}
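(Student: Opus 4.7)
The plan is to verify the PAPR claim immediately via Lemma \ref{p-CSS} (every $\mathbf{A}_i$ still has the shape $\psi(\pi,\cdot,\cdot)$ with a nonzero coupling vector $\mathbf{a}$ or $\mathbf{b}$), and then focus the bulk of the argument on establishing $r_{\min}=m$, from which $\mu(\mathbf{\Phi})=\sqrt{1/M}$ follows by Theorem \ref{coherence}.

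Intra-block pairs, those with $i,j$ both in $\{1,2,3\}$ or both in $\{4,5,6\}$, will be dispatched exactly as in Theorem \ref{construction_optimum_p=3_m_even_1}: the coupling vectors coincide and the single-coordinate $e$-perturbation at $s'$ cancels in the difference, so $\mathbf{A}_i-\mathbf{A}_j=\psi(\pi,\mathbf{0}_{m-1},(i-j)\mathbf{1}_m)$. Then $\mathbf{Q}_{i,j}=2(i-j)I$ has full rank over $\mathbb{F}_3$ since $2(i-j)\in\mathbb{F}_3^*$. The cross-block pairs $i\in\{1,2,3\}$, $j\in\{4,5,6\}$ are the substantive case.

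For such a pair, set $\alpha=(i-j)\bmod 3$ and apply the change of variables used in the proof of Lemma \ref{lemma_rank_calA} to reduce $rank_3(\mathbf{Q}_{i,j})$ to the rank of a symmetric tridiagonal matrix $\mathbf{D}$ over $\mathbb{F}_3$. Its off-diagonal entries are $c_k=a_k-b_k\in\mathbb{F}_3^*$, and its diagonal entries are $2\alpha$ everywhere except at the position $s$ with $\pi(s)=s'$, where the entry is $2\alpha-2e$. Because $c_k^2=1$ in $\mathbb{F}_3$, the leading-principal-minor recurrence $T_n=d_nT_{n-1}-T_{n-2}$ depends only on the diagonal entries; a direct calculation for constant diagonal $2\alpha$ gives the closed forms $T_n=0$ iff $n$ is odd when $\alpha=0$, and $T_n=0$ iff $n\equiv 2\pmod 3$ when $\alpha\in\{1,2\}$. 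Writing $\delta=-2e\in\mathbb{F}_3^*$ and expanding the determinant along the perturbed diagonal entry then yields
\[
\det(\mathbf{D}) \;=\; T_m + \delta\, T_{s-1}\, T_{m-s}.
\]

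The remaining step is a case analysis on $u=m\bmod 3$, on $\alpha\in\{0,1,2\}$, and on the parity of $m$. In each case the prescription of $U$ (together with the odd-parity restriction when $m$ is odd) is chosen precisely so that either $T_m\ne 0$ while $T_{s-1}T_{m-s}=0$, or $T_m=0$ while $T_{s-1}T_{m-s}\ne 0$; in both scenarios $\det(\mathbf{D})\ne 0$. As an illustrative sub-case, when $u=2$ and $\alpha\in\{1,2\}$ one has $T_m=0$, and $s\in U_1\cup U_2$ forces $s-1,m-s\not\equiv 2\pmod 3$, so both $T_{s-1}$ and $T_{m-s}$ are nonzero. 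I expect the main obstacle to be precisely this bookkeeping: no single sub-case is deep, but one must check that the three chosen residue classes for $s$ together with the parity constraint really do cover all nine cross-block pairs $(i,j)$ and all admissible $e\in\mathbb{F}_3^*$. Once this is complete, $rank_3(\mathbf{Q}_{i,j})=m$ for every $i\ne j$, so $r_{\min}=m$, and Theorem \ref{coherence} yields the claimed $\mu(\mathbf{\Phi})=\sqrt{1/M}$.
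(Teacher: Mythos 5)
Your proposal is correct and follows essentially the same route as the paper: PAPR via Lemma \ref{p-CSS}, intra-block pairs yielding full-rank diagonal differences, and cross-block pairs reduced through Lemma \ref{lemma_rank_calA} to the determinant of a tridiagonal matrix with constant diagonal $2\alpha$ perturbed by $-2e$ at position $s$, followed by the same case analysis on $m\bmod 3$, the parity of $m$, and the residue class of $s$ (your conditions on $s$ match the paper's definition of $U$ exactly). The only difference is presentational: the paper evaluates the determinant by telescoping cofactor expansions into explicit product formulas, treating $j=i+3$ and $j\neq i+3$ separately, whereas your identity $\det(\mathbf{D})=T_m+\delta\,T_{s-1}T_{m-s}$ together with the periodicity of the principal-minor sequence $T_n$ handles both cases uniformly and makes the role of $U$ transparent.
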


%\begin{theorem}\label{th_construction_optimum_p3_2}
%	Suppose that $\mathcal{M}$ is generated by Construction \ref{construction_optimum_p=3_2}. Let $\mathbf{\Phi}$ be the sequence set generated by $\mathcal{M}$, then ${\rm PAPR}(\mathbf{\Phi})\leq 3$ and $\mu(\mathbf{\Phi})=\sqrt{\frac{1}{M}}$.
%\end{theorem}
\begin{proof}
	From Lemma \ref{p-CSS}, we have ${\rm PAPR}(\mathbf{\Phi}_{\mathbf{A}_{i}})\leq 3$ for any $1\leq i\leq 6$, hence, eventually we have ${\rm PAPR}(\mathbf{\Phi})\leq 3$.
	
	Similar to the proof of Theorem \ref{construction_optimum_p=3_m_even_1}, we only need to consider the rank of $\mathbf{Q}_{i,j}$ for $i\in\{1,2,3\},j\in\{4,5,6\}$.
	
	For $j=i+3$, we have $\mathbf{Q}_{i,j}=\psi(\pi,\mathbf{c},\mathbf{d}_{i}-\mathbf{d}_{j})+\psi(\pi,\mathbf{c},\mathbf{d}_{i}-\mathbf{d}_{j})^{T}$ where $\mathbf{c}=\mathbf{a}-\mathbf{b}$. By Lemma \ref{lemma_rank_calA}, the rank of $rank_p(\mathbf{Q}_{i,j})=rank_p(\mathbf{D}_{i,j})$, where $\mathbf{D}_{i,j}=\psi(\pi^\prime,\mathbf{c},\mathbf{d}^{'})+\psi(\pi^\prime,\mathbf{c},  \mathbf{d}^{'})^{T}$,  $\pi^\prime(i)=i$ for $1\leq i \leq m$ and the $k$-th element of $\mathbf{d}^{'}$ is given by $d^{'}_{k}=d_{i,\pi(k)}-d_{j,\pi(k)}$. As per the construction, the matrix $\mathbf{D}_{i,j}$ is as follows:
	\begin{equation}
		\mathbf{D}_{i,j}=\left(\begin{array}{cccccc}
			2d_{1}^{'} & c_{1} & & & & \\
			c_{1} & 2d_{2}^{'} & c_{2} & & & \\
			& c_{2} & 2d_{3}^{'} & c_{3} & & \\
			& & \ddots & \ddots & \ddots & \\
			& & & c_{m-2} & 2d_{m-1}^{'} & c_{m-1} \\
			& & & & c_{m-1} & 2d_{m}^{'}
		\end{array}\right).
	\end{equation}
From (\ref{condition_2}), we have $d_{k}^{'}=d_{i,\pi(k)}-d_{j-3,\pi(k)}=0$ for $\pi(k)\not=s$ and $d_{s}^{'}=-e$, where $s\in U$, then	 
\begin{equation}
	\mathbf{D}_{i,j}=\left(\begin{array}{ccccccc}
		0 & c_{1} & & & & & \\
		c_{1} & \ddots & \ddots & & & &\\
		& \ddots & 0 & c_{s-1} & & &\\
		& & c_{s-1} & -2e & c_{s} & &\\
		& & & c_{s} & 0 & \ddots &\\
		& & & & \ddots & \ddots & c_{m-1} \\
		& & & &  & c_{m-1} &  0
	\end{array}\right).
\end{equation}
Calculating the determinant of $\mathbf{D}_{i,j}$, we have:
\begin{equation}
	\det(\mathbf{D}_{i,j})=\begin{cases}
		\prod\limits_{k=1}^{\frac{m}{2}} (-c_{2k-1}^{2}), & m\equiv 0\bmod 2;\\
		(-2e)\times \prod\limits_{k_{1}=1}^{\frac{s-1}{2}} (-c_{2k_{1}-1}^{2}) \times \prod\limits_{k_{2}=\frac{s+1}{2}}^{\frac{m-1}{2}}  (-c_{2k_{2}}^{2}), & m\equiv 1\bmod 2.
	\end{cases}
\end{equation}
Since $\forall k\in\{1,\cdots,m\},c_{k}=a_{k}-b_{k}\not=0$ and $e\not=0$, we have $\det(\mathbf{D}_{i,j})\not=0$, and hence, $rank_{p}(\mathbf{Q}_{i,j})=rank_{p}(\mathbf{D}_{i,j})=m$.

Similarly, for $i+3\not=j$, suppose that $j^{'}+3=j$, we obtain $\mathbf{Q}_{i,j} = \psi(\pi,\mathbf{c},\mathbf{d}_{i}-\mathbf{d}_{j^{'}+3})+\psi(\pi,\mathbf{c},\mathbf{d}_{i}-\mathbf{d}_{j^{'}+3})^{T} $, where $\mathbf{c}=\mathbf{a}-\mathbf{b}$. Suppose that $\mathbf{D}_{i,j} = \psi(\pi^\prime,\mathbf{c},\mathbf{d}^{'})+\psi(\pi^\prime,\mathbf{c},  \mathbf{d}^{'})^{T}$, where $\pi^\prime(i)=i$ for $1\leq i \leq m$ and $d^{'}_{k}=d_{i,\pi(k)}-d_{j,\pi(k)}$. By Lemma \ref{lemma_rank_calA}, we have $rank_{p}(\mathbf{Q}_{i,j})=rank_{p}(\mathbf{D}_{i,j})$. In this case, $\mathbf{D}_{i,j}$ is as follows:

\begin{equation}
	\mathbf{D}_{i,j}=\left(\begin{array}{cccccc}
		2d_{1}^{'} & c_{1} & & & & \\
		c_{1} & 2d_{2}^{'} & c_{2} & & & \\
		& c_{2} & 2d_{3}^{'} & c_{3} & & \\
		& & \ddots & \ddots & \ddots & \\
		& & & c_{m-2} & 2d_{m-1}^{'} & c_{m-1} \\
		& & & & c_{m-1} & 2d_{m}^{'}
	\end{array}\right).
\end{equation}
From (\ref{condition_2}), we have $d_{k}^{'}=d_{i,\pi(k)}-d_{j,\pi(k)}=d_{i,\pi(k)}-d_{j^{'}+3,\pi(k)}=i-j'$ for $\pi(k)\not=s$ and $d_{s}^{'}=i-j'-e$, where $s\in U$, then	 
\begin{equation}
	\mathbf{D}_{i,j}=\left(\begin{array}{ccccccc}
		2(i-j^{'}) & c_{1} & & & & & \\
		c_{1} & \ddots & \ddots & & & &\\
		& \ddots & 2(i-j^{'}) & c_{s-1} & & &\\
		& & c_{s-1} & 2(i-j^{'})-2e & c_{s} & &\\
		& & & c_{s} & 2(i-j^{'}) & \ddots &\\
		& & & & \ddots & \ddots & c_{m-1} \\
		& & & &  & c_{m-1} &  2(i-j^{'})
	\end{array}\right).
\end{equation}
Therefore,
\begin{equation}\label{eq48}
	\begin{split}
		\det(\mathbf{D}_{i,j}) &= 2d_{1}^{'}\det(\mathbf{D}_{i,j}^{(m-1\times m-1)}) + (-c_{1}^{2})\det(\mathbf{D}_{i,j}^{(m-2\times m-2)})\\
		&= ((2d_{1}{'})(2d_{2}^{'})-c_{1}^{2})\det(\mathbf{D}_{i,j}^{(m-2\times m-2)}) + (2d^{'})(-c_{2}^{2})\det(\mathbf{D}_{i,j}^{(m-3\times m-3)})\\
		& = (2d_{1}^{'})(-c_{2}^{2})\det(\mathbf{D}_{i,j}^{(m-3\times m-3)})
	\end{split}
\end{equation}
where $\mathbf{D}_{i,j}^{(m-1)\times (m-1)}$ denotes the bottom right sub-matrix of order $(m-1)$ of $\mathbf{D}_{i,j}$. Since $((2d_{1}{'})(2d_{2}^{'})-c_{1}^{2}) = (2(i-j^{'}))^{2}-c_{1}^{2} = 0$ with $i-j^{'},c_{1} \in \mathbb{F}_{3}\setminus\{0\}$, (\ref{eq48}) holds. Suppose that $s=3t+v$ with $0\leq v\leq 2$, by applying the above method, we get:

\begin{equation}
	\begin{split}
		&\det(\mathbf{D}_{i,j})=\\
		&\begin{cases}
			(2(i-j^{'}))^{l}(-c_{s+v-1}^{2})\prod\limits_{k_{1}=0}^{t+v-2} (-c_{3k_{1}+2}^{2}) \times \prod\limits_{k_{2}=t+v}^{l-1} (-c_{3k_{2}+1}^{2}), \\ \hspace{9.5cm} m\equiv 0\bmod 3;\\
			(2(i-j^{'}))^{l+1}(-c_{s-1+\lceil\frac{v}{3}\rceil}^{2})\prod\limits_{k_{1}=0}^{t+\lceil\frac{v}{3}\rceil-2} (-c_{3k_{1}+2}^{2}) \times \prod\limits_{k_{2}=t+\lceil\frac{v}{3}\rceil}^{l-1} (-c_{3k_{2}+2}^{2}), \\ \hspace{9.5cm} m\equiv 1\bmod 3;\\
			((i-j^{'}-e)(i-j^{'})-c_{s-v+1}^{2})(2(i-j^{'}))^{l}\prod\limits_{k_{1}=0}^{t-1} (-c_{3k_{1}+1}^{2}) \times \prod\limits_{k_{2}=t+1}^{l-1} (-c_{3k_{2}+3}^{2}), \\ \hspace{9.5cm} m\equiv 2\bmod 3;
		\end{cases}
	\end{split}
\end{equation}
Since $\forall k\in\{1,\cdots,m\},c_{k}=a_{k}-b_{k}\not=0$ and $i-j^{'}\not=0$, we have $\det(\mathbf{D}_{i,j})\not=0$, and hence, $rank_{p}(\mathbf{Q}_{i,j})=rank_{p}(\mathbf{D}_{i,j})=m$.
The rest of the proof is completed using Theorem \ref{coherence}, since $r_{min}=m$ and $M=p^m$. 
%%%%%%%%%%%%%
\end{proof}

{\red{\begin{example}\label{example_4_5}
			Let $p=3,m=5$, $\pi=[1,2,3,4,5]$, $\mathbf{a}=(2,1,2,2),\mathbf{b}=(1,2,1,1)$, and $\mathbf{d}_{1}=(0,0,0,0,0),\mathbf{d}_{2}=(1,1,1,1,1),\mathbf{d}_{3}=(2,2,2,2,2),\mathbf{d}_{4}=(0,0,0,0,1),\mathbf{d}_{5}=(1,1,1,1,2),\mathbf{d}_{6}=(2,2,2,2,0)$. According to Theorem \ref{construction_optimum_p=3_2} and (\ref{eq_matrix}), we get the following $6$ matrices:
			\begin{equation*}
				\begin{split}
					&	\mathbf{A}_1=\left(\begin{array}{ccccc}
						0 & 2 & 0 & 0 & 0 \\
						0 & 0 & 1 & 0 & 0 \\
						0 & 0 & 0 & 2 & 0 \\
						0 & 0 & 0 & 0 & 2 \\
						0 & 0 & 0 & 0 & 0 
					\end{array}\right),
					\mathbf{A}_2=\left(\begin{array}{ccccc}
						1 & 2 & 0 & 0 & 0 \\
						0 & 1 & 1 & 0 & 0 \\
						0 & 0 & 1 & 2 & 0 \\
						0 & 0 & 0 & 1 & 2 \\
						0 & 0 & 0 & 0 & 1 
					\end{array}\right),
					\mathbf{A}_3=\left(\begin{array}{ccccc}
						2 & 2 & 0 & 0 & 0 \\
						0 & 2 & 1 & 0 & 0 \\
						0 & 0 & 2 & 2 & 0 \\
						0 & 0 & 0 & 2 & 2 \\
						0 & 0 & 0 & 0 & 2 
					\end{array}\right),\\ &\mathbf{A}_4=\left(\begin{array}{ccccc}
						0 & 1 & 0 & 0 & 0 \\
						0 & 0 & 2 & 0 & 0 \\
						0 & 0 & 0 & 1 & 0 \\
						0 & 0 & 0 & 0 & 1 \\
						0 & 0 & 0 & 0 & 1 
					\end{array}\right),
					\mathbf{A}_5=\left(\begin{array}{ccccc}
						1 & 1 & 0 & 0 & 0 \\
						0 & 1 & 2 & 0 & 0 \\
						0 & 0 & 1 & 1 & 0 \\
						0 & 0 & 0 & 1 & 1 \\
						0 & 0 & 0 & 0 & 2 
					\end{array}\right),
					\mathbf{A}_6=\left(\begin{array}{ccccc}
						2 & 1 & 0 & 0 & 0 \\
						0 & 2 & 2 & 0 & 0 \\
						0 & 0 & 2 & 1 & 0 \\
						0 & 0 & 0 & 2 & 1 \\
						0 & 0 & 0 & 0 & 0 
					\end{array}\right).
				\end{split}
			\end{equation*}
			Here, $rank_{3}(\mathbf{Q}_{i,j})=5$ for any $i\not=j$. 	
			Let $\mathbf{\Phi}=\frac{1}{3^{5}}[\mathbf{\Phi}_{\mathbf{A}_1},\mathbf{\Phi}_{\mathbf{A}_2},\mathbf{\Phi}_{\mathbf{A}_3},\mathbf{\Phi}_{\mathbf{A}_4},\mathbf{\Phi}_{\mathbf{A}_5},\mathbf{\Phi}_{\mathbf{A}_6}]$ be the $3$-ary spreading matrix of size $3^{5}\times (2\cdot3^{6})$. Then, ${\rm PAPR}({\mathbf{\Phi}})=3$ and $\mu(\mathbf{\Phi})=\sqrt{\frac{1}{3^{5}}}\approx0.0642$, which is optimum.
\end{example}}}

\subsection{\red{Spreading sequence sets with low coherence and low PAPR over alphabet size $p^h$}}
In \cite{Nam2020}, two classes of $2^h$-ary spreading sequences were proposed by using the quadratic forms defined in $\mathbb{F}_{2}$. Similarly, we can extend {\red{our}} construction to $p^h$-ary case by applying the quadratic matrices defined in $\mathbb{F}_{p}$ and make the parameters more flexible.

%\begin{construction}\label{q_phase_sequence_set}
	For any $m\geq 2$, let $q=p^h$ for $1\leq h \leq m$. Define 
	\begin{equation}
		\mathcal{L}_{c}(\mathbf{x}) = \frac{q}{p} \sum_{i=h+1}^{m} v_{i}x_{i} + d\sum_{i=1}^{h} x_{i}p^{i-1},
	\end{equation} 
	where $c=d+\sum_{i=h+1}^{m}v_{i}p^{i-1}$ for $v_{i}\in \mathbb{Z}_{p}$ and $d\in \mathbb{Z}_{q}$. {\red{Using $\mathcal{L}_{c}(\mathbf{x})$ as a linear form, similar to (\ref{eq_quadratic_functions}), we define an EBF $f$ from $\mathbb{F}_{p}^{m}$ to $\mathbb{Z}_{q}$ as follows:
	\begin{equation}\label{eq_q_ary_function}
		f(\mathbf{x})=\frac{q}{p}\left(\sum_{i=1}^{m-1} a_{i}x_{\pi(i)}x_{\pi(i+1)}+\sum_{k=1}^{m} d_{k} x_{k}^{2}\right) + \mathcal{L}_{c}(x) = \frac{q}{p} \mathbf{x}^{T}\mathbf{A}\mathbf{x} + \mathcal{L}_{c}(\mathbf{x})
	\end{equation} 
	where $\mathbf{A}\in \mathcal{A}_{p}$ is a square matrix.}}
	{\red{\begin{lemma}\label{lemma_q_ary_CS}
		Let $p$ be a prime, $m\geq 2$ be a positive integer, and $q=p^h$ with $1\leq h\leq m$. For a given $\mathbf{A}\in \mathcal{A}_{p}\subset M_{m}(\mathbb{F}_{p})$ and $0\leq c\leq p^m-1$, define the EBFs $f_{0},f_{1},\cdots,f_{p-1}$ from $\mathbb{F}_{p}^{m}$ to $\mathbb{Z}_{q}$ as follows:
		\begin{equation}
			\begin{split}
			f_{0}(\mathbf{x})&=\frac{q}{p}\mathbf{x}^{T}\mathbf{A}\mathbf{x}+\mathcal{L}_{c}(\mathbf{x}),\\
			f_{1}(\mathbf{x})&=f_{0}(\mathbf{x})+\frac{q}{p} x_{\pi(1)},\\
			&\vdots\\
			f_{p-1}(\mathbf{x})&=f_{0}(\mathbf{x})+\frac{q}{p} (p-1) x_{\pi(1)}.
		\end{split}
		\end{equation} 
		then $\{f_{0},\cdots,f_{p-1}\}$ forms a $q$-ary $(p,p^m)$-CS.
	\end{lemma}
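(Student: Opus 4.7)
The plan is to reduce the claim to Lemma \ref{p-CSS} by factoring each $q$-ary sequence into a $p$-ary sequence (to which that lemma directly applies) times an $n$-independent $q$-ary phase whose contribution to the aperiodic autocorrelation will collapse to a constant depending only on $\tau$.

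First, using $\omega_q^{q/p}=\omega_p$, I would split each entry as
\begin{equation*}
\omega_q^{f_n(\mathbf{x})} = \underbrace{\omega_p^{\mathbf{x}^{T}\mathbf{A}\mathbf{x} + \sum_{i=h+1}^{m} v_i x_i + n x_{\pi(1)}}}_{u_n(\mathbf{x})} \cdot \underbrace{\omega_q^{d \sum_{i=1}^{h} x_i p^{i-1}}}_{\phi(\mathbf{x})}.
\end{equation*}
Since $\mathbf{A}\in \mathcal{A}_p$, the exponent defining $u_n$ has exactly the shape of the EBF in Lemma \ref{p-CSS} (cross-terms $a_i x_{\pi(i)}x_{\pi(i+1)}$, squared terms $d_k x_k^2$, linear terms $v_k x_k$ for $k\geq h+1$, and the distinguishing term $n x_{\pi(1)}$). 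Thus the $p$-ary sequences $\mathbf{u}_0,\ldots,\mathbf{u}_{p-1}$ built from $(u_n(\mathbf{x}))_{\mathbf{x}\in\mathbb{F}_p^m}$ form a $p$-ary $(p,p^m)$-CS.

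Second, I would establish that $\phi(\mathbf{k})\phi^{*}(\mathbf{k+\tau})$ is \emph{constant} in $k$. The sum $\sum_{i=1}^{h} x_i(\mathbf{k})\, p^{i-1}$ is the integer represented by the low $h$ base-$p$ digits of $k$, so it equals $k \bmod p^h$, and similarly for $k+\tau$. Hence
\begin{equation*}
\phi(\mathbf{k})\phi^{*}(\mathbf{k+\tau}) = \omega_q^{d((k \bmod p^h) - ((k+\tau) \bmod p^h))} = \omega_q^{-d\tau},
\end{equation*}
since the two residues differ by $-\tau$ modulo $q=p^h$. Plugging this in, for $1\leq \tau \leq p^m-1$,
\begin{equation*}
\sum_{n=0}^{p-1} R_{\mathbf{s}_{f_n}}(\tau) = \sum_{k=0}^{p^m-1-\tau} \phi(\mathbf{k})\phi^{*}(\mathbf{k+\tau}) \sum_{n=0}^{p-1} u_n(\mathbf{k}) u_n^{*}(\mathbf{k+\tau}) = \omega_q^{-d\tau} \sum_{n=0}^{p-1} R_{\mathbf{u}_n}(\tau) = 0,
\end{equation*}
where the last equality is Lemma \ref{p-CSS} applied to $\{\mathbf{u}_n\}$. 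This is exactly the required CS property.

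The step I expect to require most care is the constancy of $\phi(\mathbf{k})\phi^{*}(\mathbf{k+\tau})$: one must not equate $(k \bmod p^h) - ((k+\tau) \bmod p^h)$ with $-\tau$ as an integer, because carries across position $h$ can intervene. The correct point is that only the exponent modulo $q=p^h$ matters for $\omega_q$, and modulo $p^h$ the two quantities agree. Once this is secured, the entire $q$-ary problem collapses onto the already-proved $p$-ary case.
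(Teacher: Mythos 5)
Your proof is correct, but it takes a genuinely different route from the paper's. The paper proves the lemma from scratch in the $q$-ary setting with the classical Davis--Jedwab-style argument: it writes $\sum_n R_{\mathbf{s}_n}(\tau)$ as a double sum over shift pairs $(i,j=i+\tau)$ and over $n$, kills the pairs with $i_{\pi(1)}\neq j_{\pi(1)}$ by the character sum $\sum_{n}\omega_q^{\frac{q}{p}n(i_{\pi(1)}-j_{\pi(1)})}=0$, and handles the remaining pairs by the invertible map $(i,j)\mapsto(i^{(n)},j^{(n)})$ that perturbs the digit at position $\pi(v-1)$, where $v$ is the first index with $i_{\pi(v)}\neq j_{\pi(v)}$; summing over each orbit then gives zero. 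You instead factor $\omega_q^{f_n(\mathbf{x})}=u_n(\mathbf{x})\phi(\mathbf{x})$ with $u_n$ exactly of the form covered by Lemma~\ref{p-CSS} and $\phi$ independent of $n$, and observe that $\phi(\mathbf{k})\phi^{*}(\mathbf{k+\tau})=\omega_q^{-d\tau}$ because $(k\bmod p^h)-((k+\tau)\bmod p^h)\equiv-\tau\pmod{p^h}$ and only the exponent modulo $q$ matters — you correctly flag and dispose of the carry issue, which is the one place the argument could have gone wrong. Your reduction is shorter and makes transparent that the term $d\sum_{i=1}^{h}x_ip^{i-1}$ contributes only a unimodular constant per shift, so complementarity is inherited from the $p$-ary case; the paper's proof is longer but self-contained, not relying on the cited $p$-ary result. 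Both establish $\sum_n R_{\mathbf{s}_n}(\tau)=0$ for all $1\leq\tau\leq p^m-1$, as required.
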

	\begin{proof}
		Suppose that $\mathbf{s}_{n}$ denotes the corresponding $p^h$-ary complex valued sequence of $f_{n}$($0\leq n\leq p-1$). For a given $\tau\neq 0$, by (\ref{eq_sum_of_autocorrelation_of_CS}), we have:
		\begin{equation}
			\begin{split}
				\sum_{n=0}^{p-1} R_{\mathbf{s}_{n}}(\tau)& = \sum_{n=0}^{p-1}\sum_{i=0}^{p^m-1-\tau} \omega_{q}^{f_{n,i}-f_{n,i+\tau}}\\
				&=\sum_{i=0}^{p^m-1-\tau}\sum_{n=0}^{p-1} \omega_{q}^{f_{n,i}-f_{n,j}}(\text{let $j=i+\tau$}),
			\end{split}
		\end{equation}
		where $f_{n,i}=f_{n}(i_{1},\cdots,i_{m})$ with $\sum_{k=1}^{m} i_{k}p^{k-1}=i$. 
		
		Since 
		\begin{equation}
			f_{n,i}-f_{n,j} = f_{0,i}-f_{0,j} + \frac{q}{p} n (i_{\pi(1)}-j_{\pi(1)}),
		\end{equation}
		we have
		\begin{equation}\label{eq_sum_of_omega1}
			\begin{split}
				\sum_{n=0}^{p-1} R_{\mathbf{s}_{n}}(\tau)& = \sum_{i\in \Omega_{1}}\sum_{n=0}^{p-1} \omega_{q}^{f_{n,i}-f_{n,j}} + \sum_{i\in \Omega_{2}}\sum_{n=0}^{p-1} \omega_{q}^{f_{n,i}-f_{n,j}}\\
				&=p\sum_{i\in \Omega_{1}}\omega_{q}^{f_{0,i}-f_{0,j}},
			\end{split}
		\end{equation}
		where $\Omega_{1}=\{i\mid 0\leq i\leq p^m-1, i_{\pi(1)}=j_{\pi(1)}\},\Omega_{2}=\{i\mid 0\leq i\leq p^m-1, i_{\pi(1)}\neq j_{\pi(1)}\}$.
		
		Now consider the elements in $\Omega_{1}$. Since $i\neq j$, we can define $v$ to be the smallest integer for which $i_{\pi(v)}\neq j_{\pi(v)}$. Let $i^{(n)}$ and $j^{(n)}$ denote two integers whose $p$-ary representation satisfies
		\begin{equation}
			i^{(n)}_{\pi(k)}=\begin{cases}
				i_{\pi(k)}, & \text{if $k\neq v-1$}\\
				i_{\pi(k)}+n, & \text{if $k= v-1$}
			\end{cases},
			j^{(n)}_{\pi(k)}=\begin{cases}
				j_{\pi(k)}, & \text{if $k\neq v-1$}\\
				j_{\pi(k)}+n, & \text{if $k= v-1$}
			\end{cases}
		\end{equation}
		for $0\leq n\leq p-1$, then $i^{(n)}\in \Omega_{1}$. It means that we define an invertible map from the ordered pair $(i,j)$ to $(i^{(n)},j^{(n)})$ and both pairs contribute to (\ref{eq_sum_of_omega1}). 
		
		Note that
		\begin{equation}
			\begin{split}
				&f_{0,i^{(n)}}-f_{0,i}-f_{0,j^{(n)}} + f_{0,j} \\
				=& \frac{q}{p}\left(\sum_{k=1}^{m-1} a_{k} \left(i_{\pi(k)}^{(n)}  i_{\pi(k+1)}^{(n)} - i_{\pi(k)} i_{\pi(k+1)} - j_{\pi(k)}^{(n)}  j_{\pi(k+1)}^{(n)} + j_{\pi(k)} j_{\pi(k+1)}\right)\right)\\
				&+\frac{q}{p}\left(\sum_{k=h+1}^{m}v_{k}\left(i_{k}^{(n)}-i_{k} - j_{k}^{(n)} + j_{k}\right) \right) + d\sum_{k=1}^{h} p^{k-1} (i_{k}^{(n)}-i_{k} - j_{k}^{(n)} + j_{k}) \\
				=&\frac{q}{p} n a_{v-1} (i_{\pi(v)}-j_{\pi(v)}),
			\end{split}			
		\end{equation}
		we finally get
		\begin{equation}
			\sum_{n=0}^{p-1} \omega_{q}^{f_{0,i^{(n)}}-f_{0,j^{(n)}}} =\omega_{q}^{f_{0,i}-f_{0,j}} \sum_{n=0}^{p-1} \omega_{q}^{\frac{q}{p} n a_{v-1} (i_{\pi(v)}-j_{\pi(v)}) } =0.
		\end{equation}	
	
		Combining the above cases, we see
		\begin{equation}
			\sum_{n=0}^{p-1} R_{\mathbf{s}_{n}}(\tau) = 0.
		\end{equation}
		Therefore, $\{f_{0},\cdots,f_{p-1}\}$ forms a $q$-ary $(p,p^m)$-CS. This completes the proof.
	\end{proof}	}}	
%	Suppose that $\mathcal{M}=\{\mathbf{A}_{1},\cdots,\mathbf{A}_{L}\}$ is a set of matrices we proposed from Theorem \ref{th_coherence_Lp} to Theorem \ref{construction_optimum_p=3_2}. By replacing the linear form  in Eq.(\ref{eq_quadratic_functions}) by $\mathcal{L}_{c}(\mathbf{x})$,  we define {\red{\begin{equation}
%				f_{k}^{(c)}(\mathbf{x})=\frac{q}{p}(\mathbf{x}^{T}\mathbf{A}_{k}\mathbf{x}) + \mathcal{L}_{c}(\mathbf{x}),
%	\end{equation}}}as a function from $\mathbb{Z}_{p}^{m}$ to $\mathbb{Z}_{q}$ for a given $\mathbf{A}_{k}\in \mathcal{M}$. Suppose that $\mathbf{s}_{k}^{(c)}$ {\red{denotes}} the corresponding $q$-phase modulated sequence of $f_{k}^{(c)}$, {\red{we have the following lemma:}}
	
	{\red{The above lemma provide a construction of $p^h$-ary CS extended from $p$-ary case, which means that the constructions of $\mathbf{\Phi}$ in the above sections can also be extended to $p^h$ case.}} Suppose that $\mathcal{M}=\{\mathbf{A}_{1},\cdots,\mathbf{A}_{L}\}$ is a set of matrices we have proposed {\red{in the above theorems}}. We define	
	\begin{equation}
		\mathbf{\Phi}_{\mathbf{A}_{k}}^{'} = [\mathbf{s}_{k}^{(0)},\mathbf{s}_{k}^{(1)},\cdots,\mathbf{s}_{k}^{(p^{m}-1)}].
	\end{equation}
	where $\mathbf{s}_{k}^{(c)}$ is the $p^h$-ary complex-valued sequence corresponding to the following EBF
	\begin{equation}\label{eq66}
		f_{k}^{(c)}(\mathbf{x})=\frac{q}{p} \mathbf{x}^{T}\mathbf{A}_{k}\mathbf{x} +\mathcal{L}_{c}(\mathbf{x}).
	\end{equation}
%	{\red{By \cite[Theorem 1]{Nam2020}, for a given prime $p$, we obtain ${\rm PAPR}({\Phi_{\mathbf{A}_{k}}}^{\prime}) \leq p$ for any $m$ and $h$. It means that we can generalize the constructions in Section 4.2 to $p^h$-ary case.}}
%	{\red{The PAPR of the sequence set $\mathbf{\Phi}^{\prime}$ }}
	{\red{It is easy to verify that $\mathbf{\Phi}_{\mathbf{A}_{k}}^{'}$ is orthogonal.}} Similar to Section \ref{section3}, the $p^h$-ary sequence set $\mathbf{\Phi}^{'}$ generated by $\mathcal{M}$ is defined as:
	\begin{equation}\label{eq67}
		\mathbf{\Phi}^{'}=[	\mathbf{\Phi}_{\mathbf{A}_{1}}^{'},\cdots,	\mathbf{\Phi}_{\mathbf{A}_{L}}^{'}].
	\end{equation}
	\red{Since, the quadratic part of (\ref{eq_q_ary_function}) and (\ref{eq66}) are similar to that of (\ref{eq_quadratic_functions}), using the results in Section \ref{sec4_1} and Section \ref{sec4_2}, we get the following Corollaries.

\begin{corollary}\label{cor1}
	Let $\mathcal{M}$ be the set of $\mathbf{A}_{k}$'s as described in Theorem \ref{th_coherence_Lp}. Then $\mathbf{\Phi}^{\prime}$, as described in (\ref{eq67}), is a $p^h$-ary sequence set with $p^{m+1}$ sequences, each of length $p^m$, ${\rm PAPR}(\mathbf{\Phi}^{\prime})\leq p$, and $\mu(\mathbf{\Phi}^{\prime})=\sqrt{\frac{1}{M}}$.
\end{corollary}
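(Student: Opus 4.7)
The plan is to mirror the three-step structure of the proof of Theorem \ref{th_coherence_Lp} in the $p^h$-ary setting, with Lemma \ref{lemma_q_ary_CS} playing the role of Lemma \ref{p-CSS}. For the PAPR bound, Lemma \ref{lemma_q_ary_CS} places each $\mathbf{s}_{k}^{(c)}$ inside a $q$-ary $(p,p^{m})$-CS generated by the $p$ functions $f_{k}^{(c)}(\mathbf{x})+\frac{q}{p}n\,x_{\pi(1)}$ for $n=0,\ldots,p-1$, so Lemma \ref{lemma_upperboundPAPR_of_CSS} yields $\mathrm{PAPR}(\mathbf{s}_{k}^{(c)})\leq p$ and hence $\mathrm{PAPR}(\mathbf{\Phi}')\leq p$. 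The count of sequences is immediate: with $|\mathcal{M}|=p$ matrices and the index $c=d+\sum_{i=h+1}^{m}v_{i}p^{i-1}$ (with $d\in\mathbb{Z}_{q}$ and $v_{i}\in\mathbb{Z}_{p}$) ranging bijectively over $\{0,\ldots,p^{m}-1\}$, we obtain $p\cdot p^{m}=p^{m+1}$ sequences of length $p^{m}$.

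For the coherence I would split into two subcases. When $i=j$ and $c_{1}\neq c_{2}$, the quadratic parts cancel and the inner product becomes $\sum_{\mathbf{x}\in\mathbb{F}_{p}^{m}}\omega_{q}^{\mathcal{L}_{c_{1}}(\mathbf{x})-\mathcal{L}_{c_{2}}(\mathbf{x})}$; using $\omega_{q}^{q/p}=\omega_{p}$ and the bijection $(x_{1},\ldots,x_{h})\leftrightarrow y=\sum x_{i}p^{i-1}$ between $\mathbb{F}_{p}^{h}$ and $\mathbb{Z}_{q}$, this factorises as
\[
\Bigl(\prod_{i=h+1}^{m}\sum_{x_{i}\in\mathbb{F}_{p}}\omega_{p}^{(v_{1,i}-v_{2,i})x_{i}}\Bigr)\cdot\Bigl(\sum_{y\in\mathbb{Z}_{q}}\omega_{q}^{(d_{1}-d_{2})y}\Bigr),
\]
and since $c_{1}\neq c_{2}$ forces either $d_{1}\neq d_{2}$ in $\mathbb{Z}_{q}$ or $v_{1,i}\not\equiv v_{2,i}\pmod{p}$ for some $i>h$, at least one factor vanishes, so every block $\mathbf{\Phi}'_{\mathbf{A}_{k}}$ is orthogonal. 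When $i\neq j$, the quadratic portion of $f_{i}^{(c_{1})}(\mathbf{x})-f_{j}^{(c_{2})}(\mathbf{x})$ equals $\frac{q}{p}\mathbf{x}^{T}\mathbf{C}\mathbf{x}$ with $\mathbf{C}=\mathbf{A}_{i}-\mathbf{A}_{j}\in\mathcal{A}_{p}$, which collapses to $\omega_{p}^{\mathbf{x}^{T}\mathbf{C}\mathbf{x}}$ under $\omega_{q}^{\cdot}$; applying the modulus-squared identity and the substitution $\mathbf{x}=\mathbf{y}+\mathbf{z}$ in $\mathbb{F}_{p}^{m}$ reproduces the factor $\omega_{p}^{\mathbf{y}^{T}(\mathbf{C}+\mathbf{C}^{T})\mathbf{z}}$ of equation (\ref{eq2}), and the diagonal full-rank argument of Theorem \ref{th_coherence_Lp} gives $rank_{p}(\mathbf{Q}_{i,j})=m$, so only $\mathbf{z}=\mathbf{0}$ survives the $\mathbf{y}$-sum and we obtain $|\langle\mathbf{s}_{i}^{(c_{1})},\mathbf{s}_{j}^{(c_{2})}\rangle|^{2}=p^{m}$, i.e.\ $\mu(\mathbf{\Phi}')=\sqrt{1/M}$.

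The main obstacle I anticipate is carefully tracking the $\mathbb{Z}_{q}$-linear perturbation $\mathcal{L}_{c_{1}}-\mathcal{L}_{c_{2}}$ through the $\mathbb{F}_{p}$-additive substitution $\mathbf{x}=\mathbf{y}+\mathbf{z}$, because the term $(d_{1}-d_{2})\sum_{i=1}^{h}x_{i}p^{i-1}$ is not $\mathbb{F}_{p}$-additive modulo $q$: digit-wise carries from $y_{i}+z_{i}$ create residual contributions of the form $(d_{1}-d_{2})p^{i}\delta_{i}(y_{i},z_{i})$ that are nonzero modulo $p^{h}$ for $i<h$. The cleanest remedy is to keep the perturbation on a single copy of the correlation sum and let the substitution absorb only the purely $p$-ary quadratic exchange; the carry contributions then decouple into factors that depend only on the low-block variables $(\mathbf{y}_{L},\mathbf{z}_{L})$ coordinate-by-coordinate, have unit modulus, and vanish identically when $\mathbf{z}=\mathbf{0}$, so they cannot reduce the surviving $\mathbf{z}=\mathbf{0}$ contribution. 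Once this bookkeeping is in place, the bent-function argument driven by $rank_{p}(\mathbf{Q}_{i,j})=m$ closes the bound exactly as in the $p$-ary case, completing Corollary \ref{cor1}.
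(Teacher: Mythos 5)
Your PAPR bound, the sequence count, and the within-block orthogonality computation are all correct, and the first two match the paper's own (very brief) argument via Lemma \ref{lemma_q_ary_CS} and Lemma \ref{lemma_upperboundPAPR_of_CSS}. The problem is the cross-block coherence. You have correctly identified the central obstacle --- the term $d\sum_{i=1}^{h}x_{i}p^{i-1}$ of $\mathcal{L}_{c}$ is not $\mathbb{F}_{p}$-additive modulo $q$ --- but your proposed remedy does not repair it. The danger is not that the carry factors ``reduce the surviving $\mathbf{z}=\mathbf{0}$ contribution''; it is that they break the orthogonality relation (\ref{eq2}), so the $\mathbf{y}$-sum no longer annihilates the terms with $\mathbf{z}\neq\mathbf{0}$ and the bound $\vert\langle\cdot,\cdot\rangle\vert^{2}\leq p^{m}$ is lost. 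Observing that the carry factors have unit modulus gives no control over their sum, so this step is a genuine gap, not a bookkeeping detail.

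In fact the obstacle is fatal to the claimed equality $\mu(\mathbf{\Phi}^{\prime})=\sqrt{1/M}$, not just to your route to it. Take $p=3$, $m=h=2$, $q=9$, $\pi$ the identity, $a_{1}=1$, $\mathbf{d}_{1}=(0,0)$, $\mathbf{d}_{2}=(1,1)$, $\mathbf{d}_{3}=(2,2)$, so that $\mathbf{A}_{2}-\mathbf{A}_{1}=\mathbf{I}$, and compare $\mathbf{s}_{2}^{(7)}$ with $\mathbf{s}_{1}^{(0)}$, where $\mathcal{L}_{7}(\mathbf{x})=7(x_{1}+3x_{2})$. Then
\begin{equation*}
\langle\mathbf{s}_{2}^{(7)},\mathbf{s}_{1}^{(0)}\rangle=\Bigl(\sum_{x_{1}=0}^{2}\omega_{9}^{3x_{1}^{2}+7x_{1}}\Bigr)\Bigl(\sum_{x_{2}=0}^{2}\omega_{9}^{3x_{2}^{2}+21x_{2}}\Bigr)=\bigl(1+\omega_{9}+\omega_{9}^{8}\bigr)\bigl(2+\omega_{3}^{2}\bigr),
\end{equation*}
whose modulus is $(1+2\cos(2\pi/9))\sqrt{3}\approx 4.39>3=\sqrt{M}$, giving $\mu(\mathbf{\Phi}^{\prime})\geq 0.487>1/3=\sqrt{1/M}$. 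The first factor is an incomplete character sum (a chirp evaluated off the $p$-th roots of unity), and nothing forces its modulus down to $\sqrt{p}$; this is consistent with the analogous $2^{h}$-ary construction of \cite{Nam2020} achieving only $2/\sqrt{M}$ rather than $1/\sqrt{M}$. To be fair, the paper's own proof of Corollary \ref{cor1} merely asserts that the coherence ``can be derived directly from Theorem \ref{th_coherence_Lp}'' and sweeps the same issue under the rug; your attempt to actually carry out that derivation is what exposes it. A correct statement would either restrict $d$ to multiples of $q/p$ (reducing the linear form to the genuinely $p$-ary case handled by Lemma \ref{lemma_correlation}) or weaken the coherence bound to account for the low-index coordinates.
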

	\begin{proof}
			Since every sequence in $\mathbf{\Phi}^{\prime}$ is generated by a function with the form given in (\ref{eq_q_ary_function}), by Lemma \ref{lemma_q_ary_CS}, we obtain that each sequence in (\ref{eq_q_ary_function}) is from a $q$-ary $(p,p^m)$-CS. Then by Lemma \ref{lemma_upperboundPAPR_of_CSS}, we have ${\rm PAPR}(\mathbf{\Phi}^{\prime})=\max_{\mathbf{s}\in \mathbf{\Phi}^{\prime}} {\rm PAPR}(\mathbf{s}) \leq p$.
			
			The proof for the coherence can be derived directly from Theorem \ref{th_coherence_Lp}.
			%			Since the proof for the coherence is similar to that in Section 4.2, we leave it for the readers.
\end{proof}

\begin{corollary}\label{cor2}
Let $\mathcal{M}$ be the set of $\mathbf{A}_{k}$'s as described in Theorem \ref{construction2} or Theorem \ref{construction3}. Then $\mathbf{\Phi}^{\prime}$, as described in (\ref{eq67}), is a $p^h$-ary sequence set with $2\times p^{m+1}$ sequences, each of length $p^m$, ${\rm PAPR}(\mathbf{\Phi}^{\prime})\leq p$, and $\mu(\mathbf{\Phi}^{\prime})\leq\sqrt{\frac{1}{M}}$.
\end{corollary}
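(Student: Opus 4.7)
The plan is to mirror the proof of Corollary \ref{cor1} and split the statement into a PAPR half and a coherence half, leveraging the new $p^h$-ary complementary sequence result in Lemma \ref{lemma_q_ary_CS} and the rank analyses of Theorems \ref{construction2} and \ref{construction3}. Every column $\mathbf{s}_{k}^{(c)}$ of $\mathbf{\Phi}^{\prime}$ arises from an EBF of the exact shape (\ref{eq_q_ary_function}) with $\mathbf{A}_{k}\in\mathcal{A}_{p}$, so Lemma \ref{lemma_q_ary_CS} shows that $\mathbf{s}_{k}^{(c)}$ is the first member of some $q$-ary $(p,p^{m})$-CS. Applying Lemma \ref{lemma_upperboundPAPR_of_CSS} to each such CS, and then taking the maximum over the columns, delivers $\mathrm{PAPR}(\mathbf{\Phi}^{\prime})\leq p$ at once. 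The cardinality count is automatic: the set $\mathcal{M}$ consists of $L=2p$ matrices, each contributing $p^{m}$ columns, so $\mathbf{\Phi}^{\prime}$ has $2p^{m+1}$ sequences of length $p^{m}$.

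For the coherence I would redo the computation of Lemma \ref{lemma_correlation} in the $q$-ary setting. Pick two sequences $\mathbf{s}_{k}^{(c_{1})}$ and $\mathbf{s}_{k^{\prime}}^{(c_{2})}$; their exponent difference is
\begin{equation*}
\Delta(\mathbf{x})=\tfrac{q}{p}\,\mathbf{x}^{T}(\mathbf{A}_{k}-\mathbf{A}_{k^{\prime}})\mathbf{x}+\bigl(\mathcal{L}_{c_{1}}-\mathcal{L}_{c_{2}}\bigr)(\mathbf{x}).
\end{equation*}
Writing $\mathbf{C}=\mathbf{A}_{k}-\mathbf{A}_{k^{\prime}}$ and using the identity $\omega_{q}^{(q/p)\,a}=\omega_{p}^{a\bmod p}$, the quadratic contribution to the character sum collapses back to the $p$-ary character $\omega_{p}^{\mathbf{x}^{T}\mathbf{C}\mathbf{x}}$. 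Squaring via $|\langle\mathbf{s}_{k}^{(c_{1})},\mathbf{s}_{k^{\prime}}^{(c_{2})}\rangle|^{2}=\sum_{\mathbf{x},\mathbf{y}}\omega_{q}^{\Delta(\mathbf{x})-\Delta(\mathbf{y})}$ and substituting $\mathbf{x}=\mathbf{y}+\mathbf{z}$ isolates the factor $\sum_{\mathbf{y}}\omega_{p}^{\mathbf{y}^{T}(\mathbf{C}+\mathbf{C}^{T})\mathbf{z}}$, which is $p^{m}$ when $\mathbf{z}$ lies in $\Omega=\ker_{\mathbb{F}_{p}}(\mathbf{C}+\mathbf{C}^{T})$ and zero otherwise. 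On $\Omega$ the skew-symmetry trick $\mathbf{z}^{T}\mathbf{C}\mathbf{z}=-\mathbf{z}^{T}\mathbf{C}\mathbf{z}\pmod{p}$ annihilates the residual quadratic term, so what survives is
\begin{equation*}
\bigl|\langle\mathbf{s}_{k}^{(c_{1})},\mathbf{s}_{k^{\prime}}^{(c_{2})}\rangle\bigr|^{2}=p^{m}\,\Bigl|\sum_{\mathbf{z}\in\Omega}\omega_{q}^{(\mathcal{L}_{c_{1}}-\mathcal{L}_{c_{2}})(\mathbf{z})}\Bigr|\leq p^{m}\cdot|\Omega|=p^{2m-r},
\end{equation*}
where $r=\mathrm{rank}_{p}(\mathbf{C}+\mathbf{C}^{T})$. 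Dividing by $M=p^{m}$ gives $\mu(\mathbf{\Phi}^{\prime})\leq p^{-r/2}$, and inserting the rank lower bound $r_{\min}\geq m-1$ inherited from Theorem \ref{construction2} or Theorem \ref{construction3} yields the claimed coherence bound.

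The main obstacle, and the only place where the $p^{h}$-ary setting is genuinely heavier than the $p$-ary Lemma \ref{lemma_correlation}, is the linear form $\mathcal{L}_{c_{1}}-\mathcal{L}_{c_{2}}$: its coefficients live partly in $(q/p)\mathbb{Z}_{p}$ (for $i>h$) and partly in $\mathbb{Z}_{q}$ (for $i\leq h$), so I must check that the trivial modulus bound $|\sum_{\mathbf{z}\in\Omega}\omega_{q}^{(\mathcal{L}_{c_{1}}-\mathcal{L}_{c_{2}})(\mathbf{z})}|\leq|\Omega|$ is all that is needed, and that the reduction $\omega_{q}^{(q/p)\mathbf{z}^{T}\mathbf{C}\mathbf{z}}=1$ for $\mathbf{z}\in\Omega$ still holds using only a $\pmod{p}$ vanishing statement. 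Once these two bookkeeping points are verified, the entire rank-based argument of Theorem \ref{coherence} carries over verbatim and the corollary follows from the $p$-ary rank bounds already established in Theorems \ref{construction2} and \ref{construction3}.
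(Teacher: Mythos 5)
Your PAPR argument and the count of $2p^{m+1}$ columns match the paper exactly: Lemma \ref{lemma_q_ary_CS} places every column of $\mathbf{\Phi}^{\prime}$ inside a $q$-ary $(p,p^m)$-CS, Lemma \ref{lemma_upperboundPAPR_of_CSS} gives the bound $p$, and the paper's own proof of this half is just the remark that it is the same as Corollary \ref{cor1}.

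For the coherence the paper offers no computation at all (``can be derived directly from Theorem \ref{construction2} or Theorem \ref{construction3}''), so your attempt to redo Lemma \ref{lemma_correlation} in the $q$-ary setting is more substantive than what is printed --- but the step you dismiss as bookkeeping is where the argument genuinely breaks for $h\geq 2$. The linear form $\mathcal{L}_{c}$ is not additive on $\mathbb{F}_{p}^{m}$: writing $L=\mathcal{L}_{c_{1}}-\mathcal{L}_{c_{2}}$, the digit term $(d_{1}-d_{2})\sum_{i\leq h}x_{i}p^{i-1}$ satisfies $(y_{i}+z_{i}\bmod p)=y_{i}+z_{i}-p\,\epsilon_{i}(y_{i},z_{i})$, and the carry contribution $(d_{1}-d_{2})\,p^{i}\epsilon_{i}$ is in general nonzero modulo $q=p^{h}$ for $i<h$. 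Hence $L(\mathbf{y}+\mathbf{z})-L(\mathbf{y})\neq L(\mathbf{z})$, the factor $\omega_{q}^{L(\mathbf{y}+\mathbf{z})-L(\mathbf{y})}$ still depends on $\mathbf{y}$, the inner sum over $\mathbf{y}$ does not collapse to $p^{m}\cdot[\mathbf{z}\in\Omega]$, and your displayed identity $|\langle\mathbf{s}_{k}^{(c_{1})},\mathbf{s}_{k^{\prime}}^{(c_{2})}\rangle|^{2}=p^{m}\,|\sum_{\mathbf{z}\in\Omega}\omega_{q}^{L(\mathbf{z})}|$ fails as written. This is not a removable technicality: it is exactly the mechanism by which the $2^{h}$-ary sets of \cite{Nam2020} end up with coherence $2/\sqrt{M}$ rather than $1/\sqrt{M}$ (compare Table \ref{tab1}), so a correct proof must actually control the carry terms instead of invoking the trivial modulus bound. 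A further mismatch: even if your computation went through, $r_{\min}\geq m-1$ yields only $\mu\leq\sqrt{p/M}$ --- which is what Table \ref{tab1} records for this corollary --- not the $\sqrt{1/M}$ printed in the statement; that inconsistency originates in the paper, but your closing claim to have reached ``the claimed coherence bound'' from a rank bound of $m-1$ is not justified.
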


\begin{proof}
	The proof for PAPR is same as Corollary \ref{cor1}. The proof for the coherence can be derived directly from Theorem \ref{construction2} or Theorem \ref{construction3}, according to the choice of $\mathcal{M}$.
\end{proof}

\begin{corollary}\label{cor3}
Let $\mathcal{M}$ be the set of $\mathbf{A}_{k}$'s as described in Theorem \ref{construction_optimum_p=3_m_even_1} or Theorem \ref{construction_optimum_p=3_2}. Then $\mathbf{\Phi}^{\prime}$, as described in (\ref{eq67}), is a $3^h$-ary sequence set with $2\times 3^{m+1}$ sequences, each of length $3^m$, ${\rm PAPR}(\mathbf{\Phi}^{\prime})\leq 3$, and $\mu(\mathbf{\Phi}^{\prime})=\sqrt{\frac{1}{M}}$.
\end{corollary}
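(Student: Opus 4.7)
The plan is to mirror the structure of the proofs of Corollary \ref{cor1} and Corollary \ref{cor2}, splitting the argument into (i) the PAPR bound and (ii) the coherence bound, and reducing each to results already established in the paper.

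For the PAPR bound, I would first observe that every column of $\mathbf{\Phi}^{\prime}$ is a sequence $\mathbf{s}_{k}^{(c)}$ associated to an EBF of the form (\ref{eq66}), which is precisely the shape of the functions $f_{0}$ in Lemma \ref{lemma_q_ary_CS}. Applied with $p=3$ and the matrix $\mathbf{A}_{k}\in\mathcal{A}_{3}$ furnished by Theorem \ref{construction_optimum_p=3_m_even_1} or Theorem \ref{construction_optimum_p=3_2}, Lemma \ref{lemma_q_ary_CS} guarantees that $\mathbf{s}_{k}^{(c)}$ sits inside a $3^{h}$-ary $(3,3^{m})$-CS. Lemma \ref{lemma_upperboundPAPR_of_CSS} then gives ${\rm PAPR}(\mathbf{s}_{k}^{(c)})\leq 3$, and hence ${\rm PAPR}(\mathbf{\Phi}^{\prime})\leq 3$, exactly as in Corollaries \ref{cor1} and \ref{cor2}.

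For the coherence bound, I would repeat the computation of Lemma \ref{lemma_correlation} in the $q$-ary setting. The key identity $\omega_{q}^{(q/p)\,y}=\omega_{p}^{y}$ converts the quadratic part $\tfrac{q}{p}\mathbf{x}^{T}(\mathbf{A}_{i}-\mathbf{A}_{j})\mathbf{x}$ of the exponent into a $p$-th root of unity, so after the change of variable $\mathbf{x}=\mathbf{y}+\mathbf{z}$ the cross-term collapses to $\omega_{p}^{\mathbf{y}^{T}(\mathbf{C}+\mathbf{C}^{T})\mathbf{z}}$ with $\mathbf{C}=\mathbf{A}_{i}-\mathbf{A}_{j}$, just as in (\ref{eq1av})--(\ref{eq3}). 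The linear-form contribution $\mathcal{L}_{c_{1}}(\mathbf{x})-\mathcal{L}_{c_{2}}(\mathbf{x})$ is a $\mathbb{Z}_{q}$-linear function of $\mathbf{x}$, so it only affects $|\langle\mathbf{s}_{i}^{(c_{1})},\mathbf{s}_{j}^{(c_{2})}\rangle|$ through a factor of absolute value one summed with the quadratic phase, and the same rank-based bound $|\langle\cdot,\cdot\rangle|\leq p^{(m+s)/2}$ results, where $s=m-\mathrm{rank}_{p}(\mathbf{Q}_{i,j})$. Theorems \ref{construction_optimum_p=3_m_even_1} and \ref{construction_optimum_p=3_2} both deliver $r_{\min}=m$, so I conclude $\mu(\mathbf{\Phi}^{\prime})=\sqrt{1/M}$.

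The main obstacle I foresee is verifying cleanly that the $q$-ary linear perturbation $\mathcal{L}_{c_{1}}-\mathcal{L}_{c_{2}}$ does not spoil the rank-based estimate: one must show that after squaring $|\langle\mathbf{s}_{i}^{(c_{1})},\mathbf{s}_{j}^{(c_{2})}\rangle|^{2}$ and changing variables, the residual linear factor $\omega_{q}^{\mathcal{L}_{c_{1}-c_{2}}(\mathbf{y})}$ (which is a genuine character on $\mathbb{F}_{p}^{m}$ thanks to the $\frac{q}{p}$ scaling on the coordinates indexed by $h+1,\ldots,m$ and the $d$-term on $1,\ldots,h$) still either vanishes in sum or contributes at most unit modulus on the kernel of $\mathbf{C}+\mathbf{C}^{T}$. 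Once that bookkeeping is done, the coherence computation is identical to Lemma \ref{lemma_correlation}, and the corollary follows from $r_{\min}=m$ established in Theorems \ref{construction_optimum_p=3_m_even_1} and \ref{construction_optimum_p=3_2}. The cardinality and length counts are immediate from $|\mathcal{M}|=6=2\cdot 3$ and the definition of $\mathbf{\Phi}_{\mathbf{A}_{k}}^{\prime}$.
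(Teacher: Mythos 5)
Your PAPR half is exactly the paper's argument (Lemma \ref{lemma_q_ary_CS} followed by Lemma \ref{lemma_upperboundPAPR_of_CSS}), and the cardinality/length counts are fine. The problem is the coherence half. You correctly isolate the crux --- whether the perturbation $\mathcal{L}_{c_1}-\mathcal{L}_{c_2}$ survives the substitution $\mathbf{x}=\mathbf{y}+\mathbf{z}$ --- but you then defer it as ``bookkeeping,'' and that is precisely the step that fails. The map $\mathbf{x}\mapsto \omega_q^{d\sum_{i=1}^{h}x_ip^{i-1}}$ is a character of the cyclic group $\mathbb{Z}_q$ pulled back along the base-$p$ digit map $\mathbb{F}_p^h\to\mathbb{Z}_q$, which is a bijection of sets but \emph{not} a homomorphism of $(\mathbb{F}_p^h,+)$ (carries occur). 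Hence $\omega_q^{\mathcal{L}_c(\mathbf{y}+\mathbf{z})-\mathcal{L}_c(\mathbf{y})}$ genuinely depends on $\mathbf{y}$, the inner sum over $\mathbf{y}$ in the analogue of (\ref{eq1av}) is no longer supported on $\ker(\mathbf{C}+\mathbf{C}^{T})$, and the bound $\mid\langle\cdot,\cdot\rangle\mid\le p^{(m+s)/2}$ does not follow. Only for $h=1$, where $q=p$ and $\mathcal{L}_c$ is an honest $p$-ary character, does your computation close.

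Worse, the claimed coherence is not merely unproven by this route; it fails for $h\ge 2$. Take $p=3$, $m=h=2$, $q=9$, and $\mathcal{M}$ from Theorem \ref{construction_optimum_p=3_2}, so $\mathbf{A}_1-\mathbf{A}_2=\mathrm{diag}(2,2)$; choose $c_1=2$, $c_2=0$, so $\mathcal{L}_{c_1}(\mathbf{x})-\mathcal{L}_{c_2}(\mathbf{x})=2x_1+6x_2$. The inner product factors coordinatewise as
\[
\Bigl|\sum_{x_1=0}^{2}\omega_9^{6x_1^2+2x_1}\Bigr|\cdot\Bigl|\sum_{x_2=0}^{2}\omega_9^{6x_2^2+6x_2}\Bigr|
=\bigl|1+\omega_9+\omega_9^{8}\bigr|\cdot\bigl|2+\omega_9^{3}\bigr|
=\bigl(1+2\cos(2\pi/9)\bigr)\sqrt{3}\approx 4.39>3=\sqrt{M},
\]
so $\mu(\mathbf{\Phi}^{\prime})\ge 4.39/9\approx 0.49$, whereas the corollary asserts $1/3$. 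This is the same phenomenon that forces the factor-$2$ (resp.\ $\sqrt{2}$) coherence loss in the $2^h$-ary construction of \cite{Nam2020}, as recorded in the paper's own Table \ref{tab1}; an analogous loss is unavoidable here for $h\ge 2$. To be fair, the paper's own proof of this corollary is a one-line deferral to Theorems \ref{construction_optimum_p=3_m_even_1} and \ref{construction_optimum_p=3_2} and never confronts the linear form either, so the obstacle you flagged is a genuine gap in the statement itself, not just in your write-up --- but as a proof of the coherence value as stated, your argument cannot be completed.
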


\begin{proof}
	The proof for PAPR is same as Corollary \ref{cor1}. The proof for the coherence can be derived directly from Theorem \ref{construction_optimum_p=3_m_even_1} or Theorem \ref{construction_optimum_p=3_2}, according to the choice of $\mathcal{M}$.
\end{proof}}

Finally, we compare the parameters of existing spreading sequences in table \ref{tab1}.

\begin{table}[h]
	\caption{The parameters of the spreading sequence matrices proposed till date. \label{tab1}}
	\begin{tabular}{|c|c|c|c|c|c|}
		\hline Reference & Constraint & \makecell[c]{Alphabet\\ size} & Coherence & \makecell[c]{Overloading \\ factor} & PAPR \\
		\hline 
		\multirow{2}{*}{\cite{Nam2020}} & $m$ is even & $2^h$ & $2 / \sqrt{M}$ & $3$ & $\leq 4$ \\
		\cline { 2 - 6 } & $m$ is odd & $2^h$  & $\sqrt{2/M}$ & $3$ & $\leq 2$ \\
		\hline 
		\multirow{2}{*}{\cite{Nam2021}} & $m=6,8,10$ & $2$ & $1 / \sqrt{M}$ & $5$ & $\leq 2$ \\
		\cline { 2 - 6 } & $m=5,7,9$& $2$ & $\sqrt{2/M}$ & $8$ & $\leq 2$ \\
		\hline 
		\multirow{2}{*}{\cite{LYB2022CL}} & $m$ is even& $2$ & $1 / \sqrt{M}$ & $3$ & $ \leq 2$ \\
		\cline { 2 - 6 } & $m \equiv 1(\bmod 4)$& $2$ & $\sqrt{2/M}$ & $4$ & $\leq 2$ \\
		\hline
		Theorem \ref{th_coherence_Lp}  & $m\geq 2$ & \red{$p$} & $1/\sqrt{M}$ & $p$ &$\leq p$\\ \hline
		Theorem \ref{construction2} & $m\geq 2$  & \red{$p$} & $\leq \sqrt{p/M}$ & $2p$ & $\leq p$ \\ \hline
		Theorem \ref{construction3} & $m\geq 2$ & \red{$p$} & $\leq \sqrt{p/M}$ & $2p$ & $\leq p$ \\ \hline
		Theorem \ref{construction_optimum_p=3_m_even_1}  & {\red{\makecell{$m$ is even, $m\geq 4$,\\$m\not\equiv 2 \bmod 3$}}} & \red{$3$} & $1/\sqrt{M}$ & $6$ & $\leq 3$\\ \hline
		Theorem \ref{construction_optimum_p=3_2}  & $m\geq 2$ & \red{$3$} & $1/\sqrt{M}$ & $6$ & $\leq 3$\\ \hline
		\red{Corollary \ref{cor1}}  & \red{\makecell{Similar to\\ Theorem \ref{th_coherence_Lp}}} & \red{$p^h$} & \red{$1/\sqrt{M}$} & \red{$p$} &\red{$\leq p$}\\ \hline
		\red{Corollary \ref{cor2}} & \red{\makecell{Similar to\\ Theorem \ref{construction2} or\\Theorem \ref{construction3}}}  & \red{$p^h$} & \red{$\leq \sqrt{p/M}$} & \red{$2p$} & \red{$\leq p$} \\ \hline
		\red{Corollary \ref{cor3}}  & {\red{\makecell{Similar to\\ Theorem \ref{construction_optimum_p=3_m_even_1} or\\Theorem \ref{construction_optimum_p=3_2}}}} & {\red{$3^h$}} & \red{$1/\sqrt{M}$} & \red{$6$} & \red{$\leq 3$}\\ \hline
	\end{tabular}
\end{table}

\section{Concluding Remarks}\label{sec5}
In this paper, we have proposed three infinite families of non-orthogonal spreading sequence matrices over $\mathbb{F}_p$, where $p$ is an odd prime, for uplink NOMA, using \red{EBFs}. We have represented the \red{EBFs} of degree two, in the form of matrices, and associated the coherence of the related spreading sequences with the rank of those matrices. To be more specific, the rank of the matrices are inversely proportional to the coherence of the spreading sequence matrices associated with the corresponding \red{EBF}. Each of the proposed spreading sequence matrices have low PAPR and low coherence. We have proved that we can achieve optimum coherence for $p=3$. For the other prime cases, we can achieve near-optimum coherence. We have also changed the alphabet size of the proposed sequence sets from $p$ or $3$ to $p^h$ or $3^h$, respectively. The proposed constructions hugely increases the overloading factor. For a given $p$, the overloading factor is $2p$. Specifically, for $p=3$, it is $6$. Due to computational complexity, we have only considered \red{EBFs} of degree two. As a future work, \red{EBFs} with higher degrees can be considered.

\bibliography{LKQ_REF}
%\input{Kaiquiang NOMA Paper.bbl}
%{\bf Appendix}
%proof of Lemma \ref{p-CSS}:

\end{document}